\definecolor{shadecolor}{gray}{0.95}
\definecolor{darkblue}{rgb}{0.1,0.1,.7}
\tikzset{
	Witten diagram/.style={
		execute at begin picture={
			\draw[blue, line width=1.5pt] circle[radius=\pgfkeysvalueof{/tikz/Witten/radius}];
			\path node (X){\phantom{X}};
		},
		baseline={(X.base)}
	},
	vertex/.style={circle,fill,inner sep=1.5pt,node contents={}},
	Witten/.cd,
	radius/.initial=3cm
}
\renewcommand{\Im}{{\rm Im \, }}
\renewcommand{\Re}{{\rm Re \, }}
\newcommand{\abs}[1]{\left\lvert#1\right\rvert}
\newcommand{\sign}{\text{sign}}
\newcommand{\CC}{{\cal C}}
\newcommand{\CH}{{\cal H}}
\newcommand{\CV}{{\cal V}}
\newcommand{\CF}{{\cal F}}
\newcommand{\CP}{{\cal P}}
\newcommand{\mY}{{\mathbb Y}}
\newcommand{\so}{{\mathfrak{so}}}
\newcommand{\SO}{{\text{SO}}}
\newtheorem{theorem}{Theorem}[section]
\newtheorem{lemma}[theorem]{Lemma}
\newtheorem{proposition}[theorem]{Proposition}
\newtheorem{corollary}[theorem]{Corollary}
\theoremstyle{remark}
\newtheorem{remark}[theorem]{Remark}
\def\@fpheader{\ }
\title{A radial variable for de Sitter two-point functions}
\author[a]{Manuel Loparco,} 
\author[b]{Jiaxin Qiao,}
\author[c]{and Zimo Sun}
\affiliation[a]{Fields and Strings Laboratory, Institute of Physics\\
École Polytechnique Fédéral de Lausanne (EPFL)\\
Route de la Sorge, CH-1015 Lausanne, Switzerland}
\affiliation[b]{Laboratory for Theoretical Fundamental Physics, Institute of Physics, École Polytechnique Fédérale de Lausanne (EPFL), CH-1015 Lausanne, Switzerland}
\affiliation[c]{Princeton Gravity Initiative, Princeton University, Princeton, NJ 08544, USA}
\emailAdd{manuel.loparco@epfl.ch, jiaxin.qiao@epfl.ch, zs8479@princeton.edu}
\abstract{We introduce a ``radial" two-point invariant for quantum field theory in de Sitter (dS) analogous to the radial coordinate used in conformal field theory. We show that the two-point function of a free massive scalar in the Bunch-Davies vacuum has an exponentially convergent series expansion in this variable with positive coefficients only. Assuming a convergent K\"allén-Lehmann decomposition, this result is then generalized to the two-point function of any scalar operator non-perturbatively. A corollary of this result is that, starting from two-point functions on the sphere, an analytic continuation to an extended complex domain is admissible. dS two-point configurations live inside or on the boundary of this domain, and all the paths traced by Wick rotations between dS and the sphere or between dS and Euclidean Anti-de Sitter are also contained within this domain.}
\begin{document}
\maketitle

\section{Introduction}
de Sitter (dS) is the simplest model of a cosmological spacetime.
Moreover, it is the only maximally symmetric spacetime which lacks a global time-like Killing vector, making it the simplest stage on which to study quantum field theory (QFT) on time-dependent backgrounds \cite{Thirring:1967dd}. 
Its Euclidean counterpart, the sphere, is another important playground for the study of QFTs \cite{Schlingemann:1999mk}. It offers a natural infrared cutoff, making it a valuable tool for constructing non-trivial QFT examples, 
and for performing perturbative calculations which tend to show IR divergences in intermediate steps when carried out in de Sitter \cite{Higuchi:2010xt,Marolf:2010zp,Marolf:2010nz,Rajaraman:2010xd,Chakraborty:2023qbp,Beneke:2012kn,Hollands:2011we,Gorbenko:2019rza}. 

 Starting from QFT correlation functions on a sphere with radius $R$, we can perform two key operations. The first operation involves a Wick rotation, which transforms these correlations into the Lorentzian signature, yielding correlation functions in dS. The second operation entails taking the large radius limit $R\rightarrow\infty$. In this limit, the QFT defined on the sphere is expected to converge towards a QFT in Euclidean flat space. Also, by combining these two operations, we expect to obtain QFT correlation functions in Lorentzian flat space.

For all of these reasons, it is interesting to study QFT on a de Sitter background and on the Euclidean sphere. Doing so, teaches us about early universe cosmology, QFT on time-dependent backgrounds, and possibly provides a new understanding of QFTs in flat space.

There are two aspects of correlation functions in QFT in dS and on the sphere which we focus on in this work. The first one is the way in which unitarity imposes constraints on their functional form, as was studied in perturbation theory in \cite{baumann2022snowmass,arkanihamed2019cosmological,albayrak2023perturbative,Baumann:2021fxj,Goodhew:2020hob,Melville:2021lst,Jazayeri:2021fvk} and non-perturbatively in \cite{loparco2023kallenlehmann,DiPietro:2021sjt,hogervorst2022nonperturbative,Schaub:2023scu,green2023positivity,Bissi:2023bhv,Bros:1998ik,Bros:2009bz,Bros:2010rku,Bros:1995js}, and the second is their analytic structure, as was studied in \cite{Bros:1995js,Bros:1998ik,Bros:1990cu,bros1993connection,DiPietro:2021sjt,Higuchi:2010xt,Sleight_2020,Sleight_20202,Sleight_2021,Sleight_20212}. 

In flat space QFT, under certain conditions, the Euclidean and Lorentzian QFT correlation functions are related by analytic continuation with respect to the time variables \cite{Streater:1989vi,jost1979general,Osterwalder:1973dx,Glaser:1974hy,Osterwalder:1974tc}. Remarkably, similar results extend to QFTs defined on various other manifolds, including cylinders and anti-de Sitter spacetime.
This success can be attributed to a common underlying factor: the presence of a conserved and positive Hamiltonian operator denoted as $H$, which generates (global) time translations via $e^{-i H t}$. The positivity of $H$ is a key factor: it ensures the well-definedness and analyticity of $e^{-i H t}$ within the complex domain $\mathrm{Im}(t)<0$. Consequently, this leads to the analyticity of correlation functions in a certain complex domain of the time variables, including the Euclidean regime.

In the case of QFT in de Sitter and on the sphere, instead, there is no conserved, positive and globally well-defined Hamiltonian.\footnote{One may think about QFTs in the static patch of de Sitter, where there is a time-translation Killing vector. However, for the purpose of Wick rotation to the Euclidean sphere, the correlators are expectation values under a thermal state instead of a pure vacuum state. Then effectively $e^{-iHt}$ does not give exponential suppression of high-energy modes even when $\mathrm{Im}(t)<0$.} Some alternative approach is thus required to prove the existance of an analytic continuation between the sphere and de Sitter. 

In this work, we focus on bulk two-point functions of local scalar operators in the Bunch-Davies vacuum, and our approach is purely non-perturbative. 
We show novel positivity conditions that bulk dS two-point functions need to satisfy, and we derive their analytic structure. Our results are better presented by introducing a new two-point invariant for de Sitter two-point functions, which we call the \emph{radial variable} (denoted as $\rho$ \footnote{We want to emphasize that the radial variable $\rho$ should not be confused with the spectral density in K\"allén-Lehmann representations, which will be denoted by $\varrho$.}), inspired by the \emph{radial coordinate} that is frequently employed in the context of the conformal bootstrap \cite{Pappadopulo:2012jk,Hogervorst:2013sma,Lauria:2017wav,Liendo:2019jpu} and of the S-matrix bootstrap \cite{Paulos:2016fap,Paulos:2016but,Paulos:2017fhb}.

The first nontrivial result of our work is the proof of a purely kinematical fact: the two-point function of a free field in the principal or complementary series in the Bunch-Davies vacuum has a series expansion in the radial variable $\rho$ which only includes positive coefficients. This fact is summarised in proposition \ref{proposition:rhoexp}, and it is the basis of a series of corollaries which follow. In the process of proving proposition \ref{proposition:rhoexp}, we also derive a K\"allén-Lehmann type formula which quantifies the dimensional reduction from dS$_{d+1}$ to dS$_{2}$, c.f. eq.\,(\ref{eq:rhonewpaper}) and  eq.\,(\ref{eq:dimred:compl}). Group theoretically, this dimensional reduction implies that the single-particle Hilbert space of a massive scalar field in  higher dimensions contains the full principal series, along with, at most, a single representation from the complementary series of $SO(2,1)$. The latter appears when the scalar field is light enough.

Proposition \ref{proposition:rhoexp} has two immediate consequences. First, assuming the existence and convergence of the K\"allén-Lehmann representation, any two-point function, nonperturbatively, has its own series expansion in the radial variable with positive coefficients. We formulate this result in corollary \ref{corollary:nonpertrho}. The positivity of these series coefficients is thus a constraint that is formally implied by the positivity of the K\"allén-Lehmann spectral densities, but much  easier to implement as a consistency condition. When truncating the radial series at some integer power $N$, the error we make in approximating the full two-point function decays exponentially with $N$. 

We sketch an idea for a possible nontrivial numerical application based on our techniques in the discussion section \ref{discuss}. As of now, we lack a physical interpretation of the $\rho$ variable and of the positivity of this series expansion. It would be very interesting to understand what is the physics behind it, but for now it remains a purely mathematical object.

The second consequence of proposition \ref{proposition:rhoexp} is that, if we start from a Euclidean QFT on the sphere, the K\"allén-Lehmann representation and the positive $\rho$ expansion are sufficient to imply directly an extended domain of analyticity for two-point functions in a complex domain which includes dS space-like two-point configurations and has the dS time-like and light-like two-point configurations on its boundary. Moreover, this domain of analyticity includes all the paths traced by the Wick rotations to EAdS and to the sphere. 

This region corresponds to the ``maximal analyticity" domain from \cite{Bros:1995js}. There, the derivation relied on the assumption of analyticity in the forward tube, which they call the ``weak spectral condition". The weak spectral condition itself was proved to hold to all orders in perturbation theory by Hollands \cite{Hollands:2010pr}, but it lacks a non-perturbative confirmation, and thus remains an assumption in \cite{Bros:1995js}. We instead simply assume the existance and convergence of the K\"allén-Lehmann decomposition on the sphere, and directly obtain maximal analyticity. 

\paragraph{Outline}
This paper is organized as follows. In section \ref{pre}, we review some preliminaries on dS spacetime, including the geometry and the scalar Unitary Irreducible Representations (UIRs) of its isometry group, $SO(d+1,1)$. In section \ref{mainresult}, we introduce our basic assumptions and state our main results, i.e. proposition \ref{proposition:rhoexp} and corollaries \ref{corollary:nonpertrho} and \ref{corollary:analyticity}. Then, we discuss some implications of corollary \ref{corollary:analyticity}, such as the viability of the Wick rotation of two-point functions from the Euclidean sphere to dS and to Euclidean anti-de Sitter space (EAdS). In section \ref{subsubsec:simpleranaly} we show an alternative simpler proof of corollary \ref{corollary:analyticity} which does not rely on the positive $\rho$ expansion. Section \ref{Proofsection} is devoted to a proof of the proposition \ref{proposition:rhoexp}, which is technically the most challenging part of this work. In particular, the proof required the derivation of the K\"allén-Lehmann decomposition of principal and complementary free propagators in dS$_{d+1}$ in terms of free propagators of dS$_2$, which we derive in section \ref{subsection:compld>=2}. In section \ref{discuss}, we make conclusions and discuss some open questions and future directions.

\section{Preliminaries}\label{pre}
Before elaborating on our results, we will quickly review the geometry of dS and of the sphere, and the scalar UIRs of $SO(d+1,1)$. For the representation theory part, we  mainly follow \cite{Dobrev:1977qv,Sun:2021thf}.

\subsection{Geometry}
\label{subsec:geom}
The ($d$+1) dimensional dS spacetime can be embedded as a hyperboloid in  $\mathbb{R}^{d+1,1}$
\begin{equation}\label{embdS}
    -(Y^0)^2+(Y^1)^2+\ldots+(Y^{d+1})^2=R^2\,,
\end{equation}
where $Y^A\in\mathbb{R}^{d+1,1}$ and $R$ is the de Sitter radius. 
Scalar two-point functions in dS$_{d+1}$ depend on the  $SO(d+1,1)$ invariant that can be constructed with two points
\begin{equation}
    \sigma\equiv \frac{Y_1\cdot Y_2}{R^2}\equiv\frac{1}{R^2}\eta_{AB}Y_1^AY_2^B\,, \qquad \eta_{AB}\equiv\text{diag}(-1,1,\ldots,1)\,.
    \label{eq:defsigma}
\end{equation}
Throughout this work, we will consider $Y^A_{1,2}$ as complex vectors in $\mathbb{C}^{d+1,1}$ satisfying eq.\,(\ref{embdS}), and $\sigma$ as a complex variable. In particular, the regime with imaginary $Y^0$ and all other components real is the Euclidean sphere of radius $R$:
\begin{equation}
    (\widetilde{Y}^0)^2+(Y^1)^2+\ldots+(Y^{d+1})^2=R^2\,,
\end{equation}
where $\widetilde{Y}^0=iY^0$. This is the Wick rotation from dS$_{d+1}$ to $S^{d+1}$. On the sphere, $\sigma$ becomes the Euclidean inner product between unit vectors in $\mathbb{R}^{d+2}$, taking values $-1\leq \sigma\leq 1$.
When the two points are antipodal to each other, $\sigma=-1$. When they are coincident, $\sigma=1$. In de Sitter, instead, $\sigma$ can be any real number. When the two points are space-like separated, $\sigma<1$. When they are time-like separated, $\sigma>1$. Finally, light-like separation corresponds to $\sigma=1$. See figure \ref{fig:chordaldistance} for a representation of the complex $\sigma$ plane and where the physical configurations in each space lie.
\begin{figure}
\centering
\includegraphics[scale=1.2]{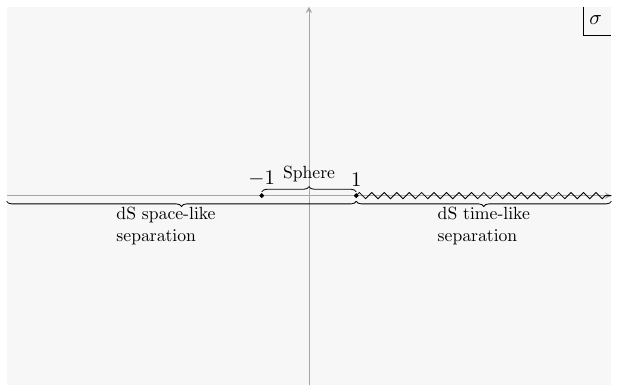}
\caption{The physical values taken by $\sigma$ in dS and on the sphere. Free propagators in the Bunch-Davies vacuum have a branch cut at $\sigma\in[1,\infty)$.}
\label{fig:chordaldistance}
\end{figure}
Without loss of generality, in the rest of this paper, we set $R=1$.

\subsection{Scalar fields in dS and UIRs of $SO(d+1,1)$ }\label{scalarUIRs}
Given a free scalar field in dS$_{d+1}$ with a positive mass, its single-particle Hilbert space $\CH$  carries a UIR of the isometry group $SO(d+1,1)$. Depending on the mass, $\CH$ belongs to either the principal series or the complementary series \cite{Dobrev:1977qv,Basile:2016aen,Sun:2021thf}. It is  convenient to parameterize the mass $m$ by a complex number $\Delta = \frac{d}{2}+i\lambda\in \mathbb C$ as follows
\begin{align}
     m^2 = \Delta(d-\Delta) = \frac{d^2}{4}+\lambda^2~.
\end{align}
 The mass squared $m^2$ is  positive  when (i) $\lambda\in \mathbb R$ and (ii) $i\lambda \in \left(-\frac{d}{2}, \frac{d}{2}\right)$. In the first case, i.e.  $m\geqslant \frac{d}{2}$, the single-particle Hilbert space $\CH$  furnishes a principal series representation, denoted by  $\CP_\Delta$. In the second case, i.e. $m\in \left(0, \frac{d}{2}\right]$, $\CH$ furnishes a complementary  series representation, denoted by  $\CC_\Delta$. We  sometimes use a uniform notation $\CF_\Delta$ for both principal and complementary series, and its meaning is clear once we specify the value of $\Delta$. A detailed description of $\CF_\Delta$ is reviewed  in appendix \ref{Groupproof}. We also want to  mention that the mass is invariant under $\Delta \leftrightarrow \bar\Delta \equiv d-\Delta$. On the representation side, there is an isomorphism between $\CF_\Delta$ and $\CF_{\bar\Delta}$, established by the so-called shadow transformation \cite{Ferrara:1972xe,Ferrara:1972ay,Ferrara:1972uq,Ferrara:1972kab,Dobrev:1977qv}. The isomorphism yields the ``fundamental region'' $\frac{d}{2}+i\mathbb R_{\geqslant 0}\cup (0, \frac{d}{2})$ for $\Delta$.

 The above massive representations cover most of the scalar UIRs of $SO(d+1,1)$. The remaining scalar UIRs are characterized by $m^2=(1-p)(d+p-1)$ with $p$ being a positive integer. In these cases, the corresponding scalar field $\phi$ has a ($p$ dependent) shift symmetry. For example, when $p=1$, $\phi$ is a massless scalar and hence the action is invariant under a constant shift.
 When $p=2$, the shift symmetry becomes $\phi\to \phi+ c_A Y^A$, where $c_A$ are constants. For higher $p$, the shift symmetry is described in detail in  \cite{Bonifacio:2018zex}. After gauging the shift symmetry, the single particle Hilbert space of $\phi$ carries the type $\CV$ exceptional series $\CV_{p, 0}$ \cite{Sun:2021thf}. For $d=1$, $\CV_{p, 0}$ is actually the direct sum of the highest and lowest weight discrete series $\mathcal{D}^\pm_p$, corresponding to the left and right movers along the global circle.

\section{Main results and applications}\label{mainresult}
\subsection{Positive radial expansion in free theory}
\begin{figure}
\centering
\includegraphics[scale=1.2]{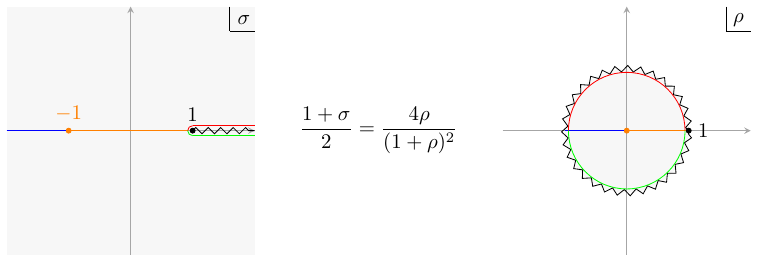}
\caption{The analytic structure of the free propagators in the $\sigma$ and $\rho$ variables. The cut at time-like separation is mapped to a circumference in the $\rho$ complex plane. Antipodal separation, $\sigma=-1$, is mapped to the origin $\rho=0$. The point corresponding to null separation is fixed in this transformation. The colors show how the physical values of $\sigma$ are mapped in the complex $\rho$ plane. The gray background shows that the full complex $\sigma$ plane is mapped to the unit $|\rho|<1$ disc.}
\label{fig:variables}
\end{figure}
Our starting point is the Wightman two-point function of a massive free scalar with mass $m^2=\frac{d^2}{4}+\lambda^2$ in dS$_{d+1}$. We choose to work in the Bunch-Davies vacuum, which is the unique dS invariant state that satisfies the Hadamard condition, and reduces to the correct Minkowski vacuum state in the flat space limit. Under proper normalization, the two-point function is given by
\begin{equation}
    G^{(d)}_\lambda(\sigma)=\frac{\Gamma(\frac{d}{2}\pm i\lambda)}{(4\pi)^{\frac{d+1}{2}}}\mathbf{F}\left(\frac{d}{2}+i\lambda,\frac{d}{2}-i\lambda;\frac{d+1}{2};\frac{1+\sigma}{2}\right)\,,
    \label{eq:free2pt}
\end{equation}
where we have used the shorthand notation $\Gamma(a\pm b)\equiv \Gamma(a+b)\Gamma(a-b)$, and 
by $\mathbf{F}$ we indicate the regularized hypergeometric function:
\begin{equation}
    \mathbf{F}(a,b;c;z)=\frac{_2F_1(a,b;c;z)}{\Gamma(c)}\,.
\end{equation}
Since we are considering a massive theory, i.e. $m^2>0$, the range of $\lambda$ is given by
\begin{equation}
    \lambda\in\mathbb{R}\cup i\left(-\frac{d}{2},\frac{d}{2}\right).
\end{equation}
The hypergeometric function $\mathbf{F}(a,b;c;z)$ is known to be analytic on the cut plane
\begin{equation}
    z\in\mathbb{C}\backslash[1,+\infty).
\end{equation}
We introduce the \textbf{radial variable} $\rho$ which maps $z$ from the cut plane to the open unit disc reversibly:
\begin{equation}
\label{eq:defrho}
    \rho=\frac{1-\sqrt{1-z}}{1+\sqrt{1-z}},\quad z=\frac{4\rho}{(1+\rho)^2}\quad(\abs{\rho}<1).
\end{equation}
Under this change of variables, the free propagator $G_\lambda^{(d)}(\sigma(\rho))$ becomes an analytic function of $\rho$ on the unit disc, as shown in figure \ref{fig:variables}. For convenience, we will abuse the notation by writing the two-point function as $G_\lambda^{(d)}(\rho)$. The values taken by $\sigma$ on the sphere, $\sigma\in[-1,1)$, are mapped to $\rho\in[0,1)$. Space-like separated configurations in dS, corresponding to $\sigma\in(-\infty,1)$, are mapped to $\rho\in(-1,1)$. Time-like separation, corresponding to $\sigma\in(1,\infty)$ is mapped to the circumference $|\rho|=1$ (but $\rho\neq\pm1$) where, depending on the $i\epsilon$ prescription, we land close to the top half or the bottom half of the circumference. Infinite time-like or space-like configurations $\sigma=\infty$ are both mapped to $\rho=-1$.

The proposition that is at the basis of all our results can then be stated as follows:
\begin{shaded}
	\begin{proposition}\label{proposition:rhoexp}
	    The free propagators $G^{(d)}_\lambda(\sigma)$ for fields in the principal $(\lambda\in\mathbb{R})$ and the complementary $(i\lambda\in(-\frac{d}{2},\frac{d}{2}))$ series, have the expansion 
		\begin{equation}\label{eq:rhoexp}
			\begin{split}
				G^{(d)}_\lambda(\sigma)=\sum\limits_{n=0}^{\infty}B_n(d,\lambda)\rho^n~, 
			\end{split}
		\end{equation}
		with $B_n(d,\lambda)\geqslant0$ for all $n$, and $\rho$ is defined implicitly through $\frac{1+\sigma}{2}=\frac{4\rho}{(1+\rho)^2}$.
	\end{proposition}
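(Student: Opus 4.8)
The plan is to reduce the statement to the two-dimensional case and to handle that case by hand. The crucial observation is that the radial variable $\rho$ is a function of $\sigma$ alone, with no reference to the dimension $d$. Hence, if one can write the $(d+1)$-dimensional propagator as a superposition with \emph{nonnegative} weights of two-dimensional propagators evaluated at the same $\sigma$, positivity of the $\rho$-coefficients of $G^{(d)}_\lambda$ follows term by term — and, since every object involved is nonnegative, Tonelli's theorem legitimizes exchanging the $\rho$-series with the integral over the lower-dimensional masses. I would therefore split the argument into three parts: (i) a K\"all\'en--Lehmann-type \emph{dimensional-reduction formula} writing $G^{(d)}_\lambda$, for $\lambda$ in the principal or complementary series, as $\int_{\mathbb R}d\lambda'\,\varrho_{d\to 2}(\lambda,\lambda')\,G^{(1)}_{\lambda'}(\sigma)$ over the $\dS_2$ principal series with $\varrho_{d\to 2}\geqslant 0$, plus — when the field is light enough — a single nonnegatively-weighted $\dS_2$ complementary-series term $G^{(1)}_{\lambda_0}$ with $i\lambda_0\in(0,\tfrac12)$; (ii) the \emph{base case}, namely positivity of the $\rho$-coefficients of $G^{(1)}_{\lambda'}$ for $\lambda'\in\mathbb R$ and for $i\lambda'\in(0,\tfrac12)$; and (iii) assembling (i) and (ii).

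For the base case (ii), the object is $G^{(1)}_{\lambda'}\propto {}_2F_1(\tfrac12+i\lambda',\tfrac12-i\lambda';1;\tfrac{1+\sigma}{2})$, i.e.\ a conical (Legendre) function. I would obtain its $\rho$-expansion explicitly: rewrite the hypergeometric ODE in the variable $\rho$, read off the recursion obeyed by the coefficients, and prove $B_n\geqslant 0$ by induction, using that the relevant Pochhammer ratios $(\tfrac12\pm i\lambda')_n/n!$ assemble into manifestly nonnegative combinations on both ranges of $\lambda'$. The leading coefficients already illustrate the mechanism: stripping the positive prefactor $\Gamma(\tfrac12\pm i\lambda')/4\pi$ and writing $u=\tfrac14+\lambda'^2\geqslant 0$ one gets $G^{(1)}_{\lambda'}=\mathrm{const}\cdot\bigl(1+4u\,\rho+4u^2\,\rho^2+\tfrac{4u}{9}(4u^2-4u+3)\,\rho^3+\dots\bigr)$, each coefficient visibly nonnegative. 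A more conceptual version of the same statement is that the $\rho$-coefficients of $G^{(1)}_{\lambda'}$ are positive multiples of squared norms of states in the $\SO(2,1)$ unitary irreducible representation carried by the $\dS_2$ single-particle Hilbert space.

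For part (i), the dimensional-reduction formula is at bottom the decomposition of the $(d+1)$-dimensional single-particle Hilbert space — a UIR of $\SO(d+1,1)$ — into a direct integral of $\SO(2,1)$ UIRs when restricted to the subgroup acting on a fixed $\dS_2\subset\dS_{d+1}$: the Plancherel-type measure of that decomposition is nonnegative by construction, which is precisely the positivity of $\varrho_{d\to 2}$, and the ``at most one'' complementary summand is the light-field statement. Concretely I would derive the formula by organizing the mode sum for the $\dS_{d+1}$ propagator according to the charge of the $\SO(2)\subset\SO(d+1)$ rotating the $\dS_2$ plane (equivalently, via classical integral relations between Gegenbauer functions in $d+1$ and in $2$ dimensions, or by analytically continuing from the sphere the branching of $\SO(d+2)$-harmonics under $\SO(3)$), and then computing $\varrho_{d\to 2}$ in closed form — I expect a ratio of $\Gamma$-functions — and reading off the location of the complementary pole.

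The principal obstacle is part (i): converting the abstract direct-integral decomposition into a usable closed-form formula and, above all, proving nonnegativity together with the exact support of $\varrho_{d\to 2}$ — in particular pinning down the threshold at which the $\SO(2,1)$ complementary representation switches on, and checking enough convergence (absolute and uniform on compacta of $\abs{\rho}<1$) to justify the term-by-term positivity argument of step (iii). By contrast the base case (ii), while requiring care to make the positivity uniform in $\lambda'$ across both series, is essentially a bounded computation with classical hypergeometric identities.
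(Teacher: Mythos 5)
Your overall architecture --- a positive K\"all\'en--Lehmann-type reduction of $G^{(d)}_\lambda$ onto dS$_2$ propagators plus a hands-on $d=1$ base case, glued by Tonelli --- is exactly the strategy the paper uses for the complementary series in $d\geqslant 2$ (the reduction density is an explicit product of Gamma functions, $\varrho^{\mathcal P}_\lambda(\nu)\propto \nu\sinh(\pi\nu)\prod_{\pm,\pm}\Gamma(\tfrac{d-1}{4}\pm\tfrac{i\nu}{2}\pm\tfrac{i\lambda}{2})$, and exactly one dS$_2$ complementary term switches on for $\tfrac{d-1}{2}<\abs{\Im\lambda}<\tfrac{d}{2}$); the paper handles the principal series in all $d$ directly instead, via a closed-form $\rho$-expansion, though your route through $d=1$ would also be viable since the density is positive there too. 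The genuine gap is your base case (ii). The ODE-plus-induction plan fails as stated: writing the $d=1$ hypergeometric equation in the $\rho$ variable gives, for the Taylor coefficients $b_n$ of $ {}_2F_1(\tfrac12+i\lambda',\tfrac12-i\lambda';1;\tfrac{4\rho}{(1+\rho)^2})$ with $m^2=\tfrac14+\lambda'^2$,
\begin{equation}
(n+1)^2\,b_{n+1}=\bigl(4m^2+2n-n^2\bigr)b_n+\bigl(n^2-1-4m^2\bigr)b_{n-1}+(n-2)^2\,b_{n-2}\,,
\end{equation}
whose coefficients are sign-indefinite (for $n$ large the $b_n$ coefficient is large and negative, the $b_{n-1}$ coefficient large and positive), so nonnegativity of $b_n,b_{n-1},b_{n-2}$ does not imply nonnegativity of $b_{n+1}$ and the induction does not close; no ``manifestly nonnegative Pochhammer combination'' appears. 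The positivity mechanism is structural, not recursive: the paper proves an identity (by evaluating a CFT$_1$ conformal block in two conformal frames, the $z$-frame and the $\rho$-frame) that yields the coefficients in closed form; for $d=1$ the coefficient of $\rho^n$ factorizes as $c_n(\Delta)c_n(\bar\Delta)$ with $c_n(\Delta)=\tfrac{1}{n!}\partial_x^n\bigl(\tfrac{1+x}{1-x}\bigr)^\Delta\big|_{x=0}$, positive for $\Delta\in(0,1)$ (proved there by a contour/discontinuity representation), while for the principal series the same identity exhibits each coefficient as a sum of squared moduli. Note also that in the complementary case the coefficient is a product of two \emph{different} positive numbers, not a squared norm, so your ``squared norms of $SO(2,1)$ states'' reading is at best a hope --- the paper explicitly states it has no such physical interpretation. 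Without some replacement for this structural input, step (ii) is not ``a bounded computation''.

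On step (i), which you rightly flag as the main obstacle, your sketch is the correct skeleton but each load-bearing element requires real work that the paper supplies: square-integrability of $G^{(d)}_\lambda$ on $\sigma\in(-\infty,-1]$ only for $\abs{\Im\lambda}<\tfrac{d-1}{2}$, inversion against the orthogonal dS$_2$ principal-series propagators, a Barnes/Mellin evaluation of the overlap to get the closed-form positive density, an analytic continuation in $\lambda$ across $\abs{\Im\lambda}=\tfrac{d-1}{2}$ in which a single pole crosses the contour and produces the lone complementary term with positive residue, and finally convergence estimates extending the formula from the EAdS locus $\sigma\leqslant-1$ to the whole cut plane --- an estimate which itself leans on the already-established $d=1$ principal-series positivity, so the logical ordering (base case first) must be respected, as you do. As it stands, then, your proposal has the right shape but neither of its two essential steps is actually established, and the specific method you propose for the base case would not work.
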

\end{shaded}
\noindent The complete proof of this proposition is the most technical part of this work and we dedicate section \ref{Proofsection} to it. For later convenience, we define
\begin{equation}
    B_n(d,\lambda)=\frac{\Gamma(\frac{d}{2}\pm i\lambda)}{(4\pi)^{\frac{d+1}{2}}\Gamma(\frac{d+1}{2})}b_n(d,\lambda)\,.
    \label{eq:BigB}
\end{equation}
When $d=3$, the first few $b_n$ are 
\begin{align}
    &b_0=1, \quad b_1=2m^2, \quad b_2=\frac{4}{3}m^2(1+m^2)\nonumber\\
    &b_3=\frac{2}{9} m^2 (11 + 4 m^2 + 2 m^4)\nonumber\\
     &b_4= \frac{4}{45} m^2 (30 + 
   22 m^2 + 2 m^4 + m^6)~,
\end{align}
where $m^2 = \frac{9}{4}+\lambda^2$, playing the role of  mass as reviewed in section \ref{scalarUIRs} . 
\begin{figure}
\centering
\begin{subfigure}{.5\textwidth}
  \centering
  \includegraphics[scale=0.5]{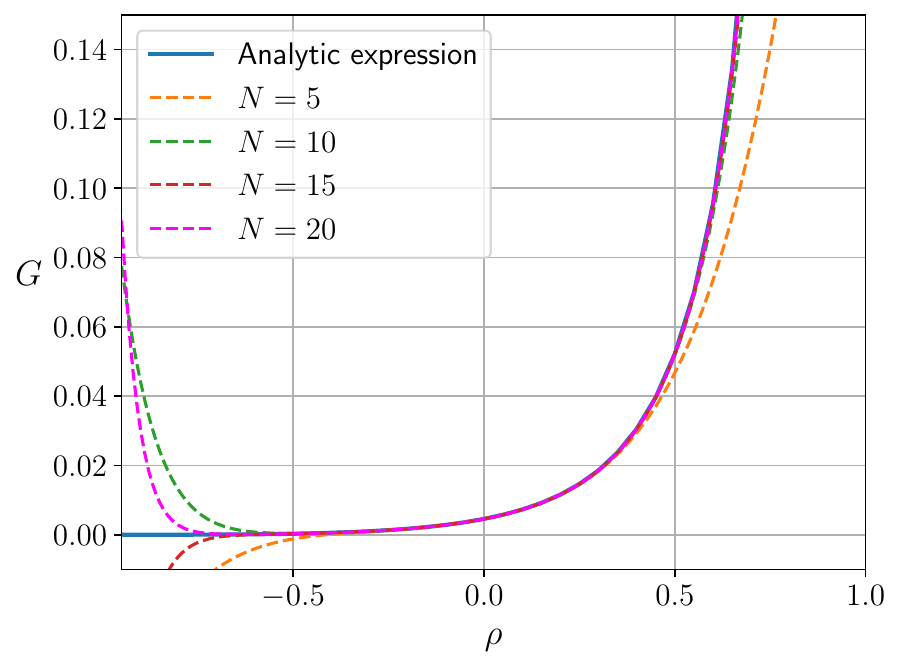}
  \caption{The two-point function of a free scalar}
  \label{fig:sub1}
\end{subfigure}%
\begin{subfigure}{.5\textwidth}
  \centering
  \includegraphics[scale=0.5]{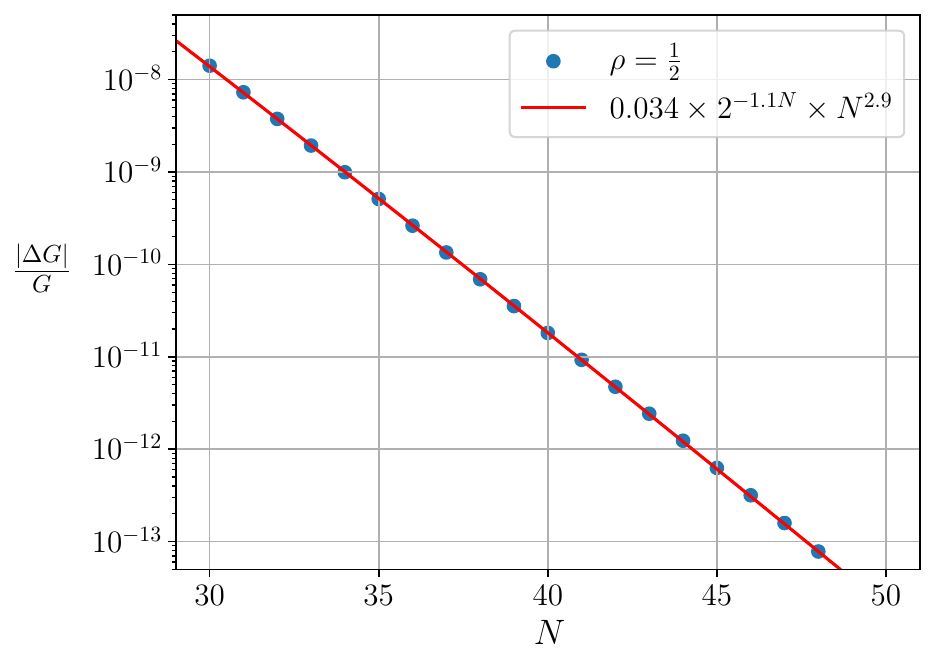}
  \caption{The relative error as a function of $N$ at fixed $\rho$}
  \label{fig:sub2}
\end{subfigure}
\caption{We compare the analytic expression of the two-point function of a free scalar (\ref{eq:free2pt}) with $\lambda=0.3$ in dS$_4$ with various truncations of the sum (\ref{eq:rhoexp}) as a function of the two-point invariant $\rho$. In subfigure \ref{fig:sub1} we plot the two-point functions themselves. In subfigure \ref{fig:sub2} we plot the relative error at fixed $\rho=\frac{1}{2}$ as a function of the truncation $N$. It decays exponentially.}
\label{fig:truncations}
\end{figure}
These coefficients are all positive if and only if $m^2>0$, which holds when $\lambda\in\mathbb R$ or $i\lambda\in \left(-\frac{3}{2}, \frac{3}{2}\right)$. While we proved that $B_n(d,\lambda)>0$ for all $n$ and $d$ for the complementary series, we did not manage to prove strict positivity for the principal series. Nevertheless, we expect that in fact $B_n(d,\lambda)>0$ for all $n$ and $\lambda\in\mathbb{R}\cup i(-\frac{d}{2},\frac{d}{2})$.
The explicit expression of the coefficients $b_n(d,\lambda)$ is
\begin{equation}\label{bnexp}
		\begin{split}
			b_n(d,\lambda)=\sum\limits_{l=0}^{d}\sum\limits_{k=0}^{n-l}\binom{d}{l}\frac{\left(\frac{d-1}{2}\right)_k (n-l-k)!}{k!\left(\frac{d+1}{2}\right)_{n-l-k}}\prod_{\pm}a_{n-l-k}\left(\frac{d+1}{4},-\frac{d-1}{4}\pm i\lambda\right)\,,
		\end{split}
\end{equation}
with $a_n$ defined in (\ref{eq:an}). 

In figure \ref{fig:truncations} we show an example that the sum in (\ref{eq:rhoexp}) converges exponentially fast to the analytic expression (\ref{eq:free2pt}) for fixed $\lambda$ and $d$, as the truncation $N$ increases. We will comment on the exact dependance of the relative error on $N$ in (\ref{errorest}).

We do not have an intuitive or physical understanding of why the coefficients in (\ref{eq:rhoexp}) should be positive, beyond the very involved proof we present in section \ref{Proofsection}. Nevertheless, this nontrivial result leads to some interesting consequences as presented in the following subsections.
\subsection{Positive radial expansion in interacting QFT}\label{subsec:posexp}
One direct application of proposition \ref{proposition:rhoexp} is the fact that the series expansion in $\rho$ has to also be positive for all two-point functions in a general, interacting QFT.

\noindent To be precise, let us consider the K\"allén-Lehmann decomposition of the two-point function of a general scalar operator $\mathcal{O}$ in the Bunch-Davies vacuum \cite{DiPietro:2021sjt,loparco2023kallenlehmann,hogervorst2022nonperturbative,Bros:1995js}:
\begin{equation}
    G_{\mathcal{O}}(\sigma)\equiv\langle\Omega|\mathcal{O}(Y_1)\mathcal{O}(Y_2)|\Omega\rangle=\int_{\mathbb{R}} d\lambda\ \varrho_{\mathcal{O}}^{\mathcal{P}}(\lambda)G^{(d)}_\lambda(\sigma)+\int_{-\frac{d}{2}}^{\frac{d}{2}} d\lambda\ \varrho_{\mathcal{O}}^{\mathcal{C}}(\lambda)G^{(d)}_{i\lambda}(\sigma)\,,
    \label{eq:kallenlehmann}
\end{equation}
where $\varrho_{\mathcal{O}}^{\mathcal{P}}(\lambda)$ and $\varrho_{\mathcal{O}}^{\mathcal{C}}(\lambda)$ are the spectral densities, supported on the principal and the complementary series respectively, 
$|\Omega\rangle$ is the interacting Bunch-Davies vacuum, and $G^{(d)}_{\lambda}(\sigma)$ is the two-point function (\ref{eq:free2pt}).

\noindent Our basic assumptions are that
\begin{itemize}
	\item Only free propagators of principal and complementary series with the choice of Bunch-Davies vacuum appear in the decomposition. We elaborate on the absence of exceptional type I and discrete series contributions in appendix \ref{sec:noexcep}.
	\item The spectral densities $\varrho_{\mathcal{O}}^{\mathcal{P}}(\lambda)$ and $\varrho_{\mathcal{O}}^{\mathcal{C}}(\lambda)$ are positive.
	\item The integral is convergent for any pair of non-coincident points $(Y_1,Y_2)$ on the sphere. 
\end{itemize}
It is worth noting that the above assumptions actually imply the two-point function is regular on the sphere (aside from coincident-point singularities), $SO$($d$+2) invariant and reflection positive. Despite the lack of proof, we conjecture that all two-point functions on the Euclidean sphere satisfying these Euclidean assumptions have a K\"allén-Lehmann decomposition of the form \eqref{eq:kallenlehmann}.

By the above assumptions and proposition \ref{proposition:rhoexp}, we can phrase the following corollary
\begin{shaded}
\begin{corollary}\label{corollary:nonpertrho}
	Let $\mathcal{O}$ be a scalar operator in a general interacting QFT. Assume its two-point function $G_\mathcal{O}$ has a convergent K\"allén-Lehmann representation on the sphere of the form (\ref{eq:kallenlehmann}) with positive spectral densities. Then, $G_\mathcal{O}$ has a convergent power series expansion in the two-point invariant $\rho$ in the open unit disc $\abs{\rho}<1$, with nonnegative coefficients
\begin{equation}\label{eq:app1}
\begin{split}
    G_{\mathcal{O}}(\rho)=&\sum_{n=0}^\infty c_{\mathcal{O},n} \rho^n\,, \qquad c_{\mathcal{O},n}\geq0\,.
\end{split}
\end{equation}
\end{corollary}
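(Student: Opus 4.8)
The plan is to insert the radial expansion of the free propagators supplied by Proposition~\ref{proposition:rhoexp} into the Källén–Lehmann integral~\eqref{eq:kallenlehmann} and to exchange the $\lambda$–integration with the $n$–summation, using positivity to license the exchange. I would first work at the real, spherical values of the invariant, $\sigma\in[-1,1)$, i.e. $\rho\in[0,1)$. There every ingredient is nonnegative: $\varrho^{\mathcal P}_{\mathcal O},\varrho^{\mathcal C}_{\mathcal O}\geqslant 0$ by hypothesis, $B_n(d,\lambda)\geqslant 0$ for $\lambda\in\mathbb R$ and $B_n(d,i\lambda)\geqslant 0$ for $\lambda\in(-\tfrac d2,\tfrac d2)$ by Proposition~\ref{proposition:rhoexp}, and $\rho^n\geqslant 0$. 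Hence Tonelli's theorem applies with no extra input and gives, for $\rho\in[0,1)$,
\begin{equation}
  G_{\mathcal O}(\rho)=\sum_{n=0}^{\infty}c_{\mathcal O,n}\,\rho^{n},\qquad
  c_{\mathcal O,n}=\int_{\mathbb R}\! d\lambda\;\varrho^{\mathcal P}_{\mathcal O}(\lambda)\,B_n(d,\lambda)
  +\int_{-d/2}^{d/2}\! d\lambda\;\varrho^{\mathcal C}_{\mathcal O}(\lambda)\,B_n(d,i\lambda)\;\in[0,\infty].
\end{equation}
The convergence assumption on the sphere is precisely the statement that the left-hand side is finite for every such $\rho$; therefore $c_{\mathcal O,n}<\infty$ for all $n$ and the power series $\sum_n c_{\mathcal O,n}\rho^n$ has radius of convergence at least $1$. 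This already delivers the nonnegativity of the coefficients.

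It remains to promote the identity from the segment $[0,1)$ to the full disc $\abs{\rho}<1$. The key estimate is the termwise bound $\bigl|G^{(d)}_\lambda(\rho)\bigr|=\bigl|\sum_n B_n(d,\lambda)\rho^n\bigr|\leqslant\sum_n B_n(d,\lambda)\abs{\rho}^n=G^{(d)}_\lambda(\abs{\rho})$, valid for complex $\abs{\rho}<1$ because the $B_n$ are nonnegative. Integrating this against $\varrho^{\mathcal P}_{\mathcal O}$ over $\mathbb R$ and against $\varrho^{\mathcal C}_{\mathcal O}$ over $(-\tfrac d2,\tfrac d2)$ and summing over $n$, the total of the integrated absolute values of the terms $B_n(d,\lambda)\rho^n$ and $B_n(d,i\lambda)\rho^n$ equals $\sum_n c_{\mathcal O,n}\abs{\rho}^n$, which is finite by the previous paragraph applied at the real argument $\abs{\rho}$. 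Hence $\int_{\mathbb R}d\lambda\,\varrho^{\mathcal P}_{\mathcal O}(\lambda)G^{(d)}_\lambda(\rho)+\int_{-d/2}^{d/2}d\lambda\,\varrho^{\mathcal C}_{\mathcal O}(\lambda)G^{(d)}_{i\lambda}(\rho)$ converges absolutely on the whole disc, Fubini's theorem permits the interchange of integration and summation there as well, and one obtains $G_{\mathcal O}(\rho)=\sum_n c_{\mathcal O,n}\rho^n$ for all $\abs{\rho}<1$. In particular this absolutely convergent integral furnishes the analytic continuation of the spherical two-point function to the $\rho$–disc, and \eqref{eq:app1} follows.

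I expect the only delicate point to be the legitimacy of exchanging $\int d\lambda$ with $\sum_n$, precisely because the principal-series integral runs over the noncompact line and a priori one has no control on the tails of $\varrho^{\mathcal P}_{\mathcal O}$. The device that removes the difficulty is to carry out the exchange first at real $\rho\in[0,1)$, where everything is nonnegative and Tonelli is unconditional, and only afterwards to pass to complex $\rho$ via the nonnegativity bound $\bigl|G^{(d)}_\lambda(\rho)\bigr|\leqslant G^{(d)}_\lambda(\abs{\rho})$; no dominated-convergence estimate on the spectral densities is ever needed. Everything else — reading off $c_{\mathcal O,n}\geqslant0$ from $B_n\geqslant0$ and $\varrho_{\mathcal O}\geqslant0$, and the radius-of-convergence bound — is then immediate.
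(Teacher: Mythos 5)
Your proposal is correct and follows essentially the same route as the paper: substitute the positive $\rho$-expansion of the free propagators into the K\"all\'en--Lehmann integral, use positivity of $B_n$ and of the spectral densities together with convergence on the sphere (real $\rho\in[0,1)$) to bound $\abs{G_{\mathcal O}(\rho)}\leqslant G_{\mathcal O}(\abs{\rho})$, and invoke absolute convergence to swap the $\lambda$-integral with the $n$-sum, yielding \eqref{eq:app1} with the coefficients \eqref{eq:GOrhocoeff}. Your explicit two-step Tonelli-then-Fubini phrasing is just a slightly more careful writeup of the paper's argument around \eqref{eq:inequality}.
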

\end{shaded}

\noindent The coefficient $c_{\mathcal{O},n}$ is given by the formula
\begin{equation}\label{eq:GOrhocoeff}
\begin{split}
    c_{\mathcal{O},n}=&\int_{\mathbb{R}} d\lambda\ \varrho_{\mathcal{O}}^{\mathcal{P}}(\lambda) B_n(d,\lambda)+\int_{-\frac{d}{2}}^{\frac{d}{2}} d\lambda\ \varrho_{\mathcal{O}}^{\mathcal{C}}(\lambda) B_n(d,i\lambda),.
\end{split}
\end{equation}
The argument is as follows. Let us substitute (\ref{eq:rhoexp}) in the K\"allén-Lehmann decomposition \eqref{eq:kallenlehmann}:
\begin{equation}\label{KL:rewriteinrho}
    G_{\mathcal{O}}(\rho)=\int_{\mathbb{R}} d\lambda\ \varrho_{\mathcal{O}}^{\mathcal{P}}(\lambda)\sum_{n=0}^\infty B_n(d,\lambda)\rho^n+\int_{-\frac{d}{2}}^{\frac{d}{2}} d\lambda\ \varrho_{\mathcal{O}}^{\mathcal{C}}(\lambda)\sum_{n=0}^\infty B_n(d,i\lambda)\rho^n\,,
\end{equation}
where we are abusing notation and using $G_{\mathcal{O}}(\rho)$ to indicate the same two-point function, highlighting its dependance on $\rho$, and $B_n(d,\lambda)$ was defined in \eqref{eq:BigB}. Because of our assumptions, the spectral densities are positive and the integrals are convergent when $\rho\in[0,1)$. Then, according to proposition \ref{proposition:rhoexp}, the r.h.s. of \eqref{KL:rewriteinrho} is absolutely convergent when $\abs{\rho}<1$, because
\begin{equation}
    |G_{\mathcal{O}}(\rho)|\leq \int_{\mathbb{R}} d\lambda\ \varrho_{\mathcal{O}}^{\mathcal{P}}(\lambda)\sum_{n=0}^\infty B_n(d,\lambda)|\rho|^n+\int_{-\frac{d}{2}}^{\frac{d}{2}} d\lambda\ \varrho_{\mathcal{O}}^{\mathcal{C}}(\lambda)\sum_{n=0}^\infty B_n(d,i\lambda)|\rho|^n=G_{\mathcal{O}}(|\rho|)\,.
    \label{eq:inequality}
\end{equation}
Here, we have used the positivity of $B_n$. The absolute convergence allows us to swap the order of the sum and the integral. This justifies \eqref{eq:app1} with the coefficients given by \eqref{eq:GOrhocoeff}.
Let us make some remarks about this result

\begin{itemize}
    \item In the previous subsection we have mentioned that $B_n(d,\lambda)>0$ for the complementary series, while we can only state $B_n(d,\lambda)\geq0$ for the principal series. Nevertheless, all numerical checks suggest that $B_n(d,\lambda)>0$ for $\lambda\in\mathbb{R}$ too. This would immediately imply strict positivity for the $c_{\mathcal{O},n}$ coefficients too. Despite the lack of proof, we indeed expect that, in total generality,
\begin{equation}
    c_{\mathcal{O},n}>0.
\end{equation}
\item Since now $G_{\mathcal{O}}(\rho)$ is analytic on the open unit disc, the series expansion \eqref{eq:app1} converges exponentially fast for any fixed $\rho$:
\begin{equation}\label{errorest0}
    \abs{\sum\limits_{n=N}^{\infty}c_{\mathcal{O},n}\rho^n}\leqslant G_{\mathcal{O}}(\rho_*)\abs{\frac{\rho}{\rho_*}}^N\,,\qquad(\abs{\rho}<\rho_*<1).
\end{equation}
If we further assume that the two-point function has power-law behavior at short distances: $G_{\mathcal{O}}(\rho)\sim const(1-\rho)^{-\alpha}$ as $\rho\rightarrow 1$, then the error term has the following bound:
\begin{equation}\label{errorest}
    \abs{\sum\limits_{n=N}^{\infty}c_{\mathcal{O},n}\rho^n}\lesssim const\, N^{\alpha}\abs{\rho}^N.
\end{equation}
This estimate holds for sufficiently large $N$.\footnote{Assuming the power-law behavior of the two-point function, when $N$ is sufficiently large, the minimum of the r.h.s. of \eqref{errorest0} is around $\rho_*=\frac{N}{N+\alpha}$. Then substituting this value into \eqref{errorest0} leads to \eqref{errorest}.}
\end{itemize}
\noindent Let us study an explicit example. Consider a unitary CFT in the bulk of dS spacetime. The (unit normalized) two-point function of a scalar primary operator with scaling dimension $\mathbf{\Delta}$ is given by
\begin{equation}
    G_{\text{CFT}}(\rho)=\frac{1}{2^{\mathbf{\Delta}}(1-\sigma)^{\mathbf{\Delta}}}=2^{-2\mathbf{\Delta}}\left(\frac{1+\rho}{1-\rho}\right)^{2\mathbf{\Delta}}\,.
    \label{eq:GCFT}
\end{equation}
Here $\mathbf{\Delta}\geqslant\frac{d-1}{2}$ as a consequence of unitarity. We can compute the associated $c_{\mathcal{O},n}$ coefficients with (\ref{eq:GOrhocoeff}), using the spectral density computed in \cite{hogervorst2022nonperturbative,loparco2023kallenlehmann}, or simply by Taylor expanding (\ref{eq:GCFT}). The coefficients are positive.\footnote{We have
\begin{equation}
    \log G_{\text{CFT}}(\rho)=-2\mathbf\Delta\log2+4\mathbf{\Delta}\sum\limits_{k=0}^{\infty}\frac{\rho^{2k+1}}{2k+1}.
\end{equation}
Then by exponentiating this expression we get a power series of $G_{\text{CFT}}(\rho)$ with positive coefficients.} In figure \ref{fig:truncationsCFT}, we plot the two-point function reconstructed from the sum in (\ref{eq:app1}) truncated at various values of $N$, and the relative error with respect to the analytic expression (\ref{eq:GCFT}). 
\begin{figure}
\centering
\begin{subfigure}{.51\textwidth}
  \centering
  \includegraphics[scale=0.5]{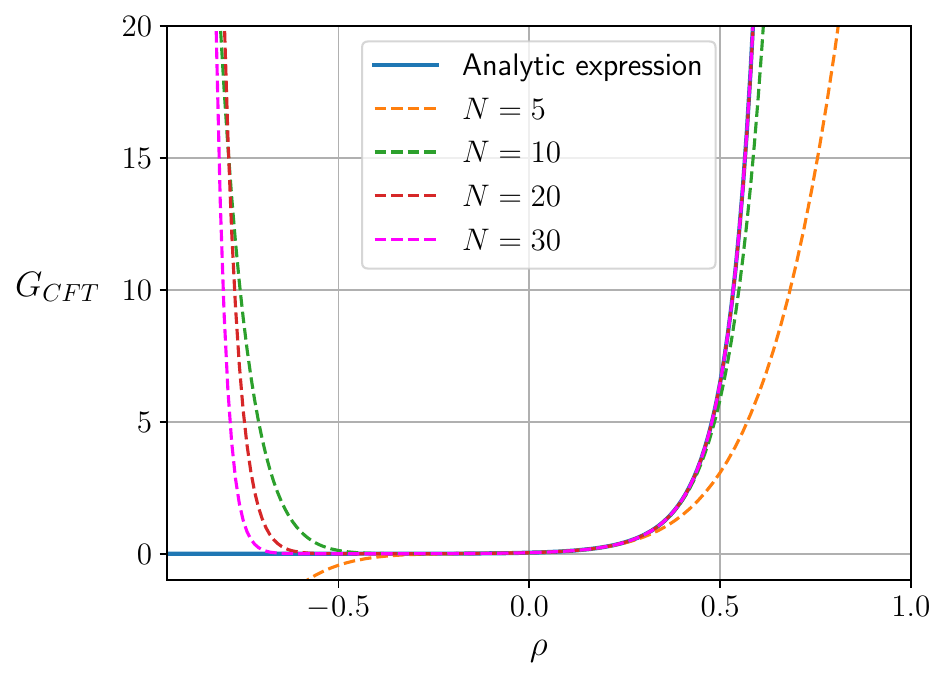}
  \caption{The two-point function of a CFT scalar primary}
  \label{fig:sub11}
\end{subfigure}%
\begin{subfigure}{.5\textwidth}
  \centering
  \includegraphics[scale=0.5]{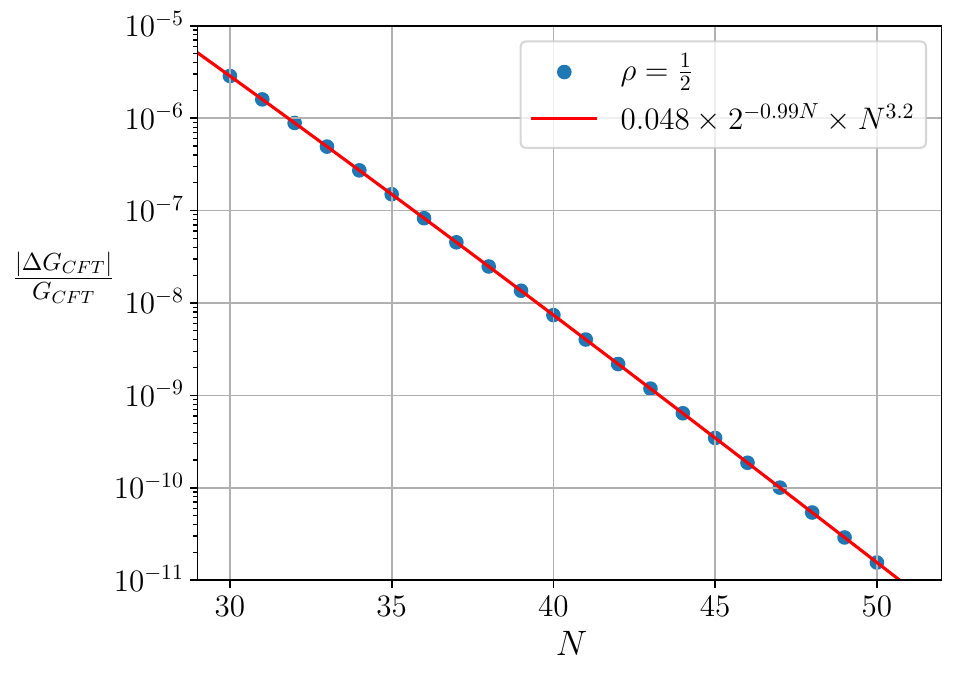}
  \caption{The relative error as a function of $N$ at fixed $\rho$}
  \label{fig:sub22}
\end{subfigure}
\caption{We compare the analytic expression of a two-point function of a scalar CFT primary operator in dS$_4$ with scaling dimension $\mathbf{\Delta}=2.3$ (\ref{eq:GCFT}) with what is obtained when truncating the sum (\ref{eq:app1}) at various values of $N$, as a function of the two-point invariant $\rho$. In subfigure \ref{fig:sub11} we directly compare the two-point functions. In subfigure \ref{fig:sub22} we show that the relative error decays exponentially as we increase the truncation at fixed $\rho=\frac{1}{2}$.}
\label{fig:truncationsCFT}
\end{figure}
We see that the relative error decreases exponentially as we increase $N$. Moreover, the fit we get for the relative error at large $N$ is consistent with the bound (\ref{errorest}).

\subsection{Analyticity and Wick rotations}
By the argument presented in section \ref{subsec:posexp}, the two-point function $G_{\mathcal{O}}(\rho)$ is shown to be analytic within the open unit disc ($\abs{\rho}<1$). This domain of $\rho$ corresponds to $\sigma\in\mathbb{C}\backslash[1,+\infty)$ through an analytic mapping. Consequently, we conclude the following corollary:
\begin{shaded}
\begin{corollary}\label{corollary:analyticity}
	Let $G_\mathcal{O}(\sigma)$ be a two-point function with a convergent Källén–Lehmann representation of the form  \eqref{eq:kallenlehmann} on the interval $\sigma\in[-1,1)$, corresponding to two-point configurations on $S^{d+1}$. Additionally, assume that the spectral densities $\varrho_{\mathcal{O}}^{\mathcal{P}}(\lambda)$ and $\varrho_{\mathcal{O}}^{\mathcal{C}}(\lambda)$ are non-negative. Then, $G_\mathcal{O}(\sigma)$ has analytic continuation to the complex domain $\sigma\in\mathbb{C}\backslash[1,\infty)$.
\end{corollary}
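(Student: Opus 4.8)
The plan is to reduce the statement entirely to the power‑series argument already used in Corollary~\ref{corollary:nonpertrho}, combined with the elementary fact that a power series with non‑negative coefficients which converges on the real interval $[0,1)$ extends analytically to the whole unit disc, and finally to transport that analyticity back to the $\sigma$‑plane through the biholomorphism \eqref{eq:defrho}.

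First I would record that the hypotheses say precisely that, for every $\sigma\in[-1,1)$ — equivalently, via \eqref{eq:defrho}, for every $\rho\in[0,1)$ — the integral
\begin{equation}
 G_{\mathcal{O}}(\rho)=\int_{\mathbb{R}} d\lambda\ \varrho_{\mathcal{O}}^{\mathcal{P}}(\lambda)\,G^{(d)}_\lambda(\rho)+\int_{-\frac{d}{2}}^{\frac{d}{2}} d\lambda\ \varrho_{\mathcal{O}}^{\mathcal{C}}(\lambda)\,G^{(d)}_{i\lambda}(\rho)
\end{equation}
converges. Inserting the expansion \eqref{eq:rhoexp} of Proposition~\ref{proposition:rhoexp} term by term and using that $B_n(d,\lambda)\geqslant 0$, that $\varrho_{\mathcal{O}}^{\mathcal{P}},\varrho_{\mathcal{O}}^{\mathcal{C}}\geqslant 0$, and that $\rho^n\geqslant 0$ for $\rho\in[0,1)$, the Tonelli theorem allows the exchange of the $\lambda$‑integration with the sum over $n$ with no further hypothesis, giving $G_{\mathcal{O}}(\rho)=\sum_{n\geqslant 0}c_{\mathcal{O},n}\rho^n$ with $c_{\mathcal{O},n}$ as in \eqref{eq:GOrhocoeff}, each $c_{\mathcal{O},n}\in[0,\infty)$ finite (finiteness is immediate since a non‑negative series that converges at one interior point has finite individual terms), and the series converging for all $\rho\in[0,1)$. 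This is exactly the content underlying Corollary~\ref{corollary:nonpertrho}.

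The key elementary step is then: a power series $\sum_n c_{\mathcal{O},n}\rho^n$ with $c_{\mathcal{O},n}\geqslant 0$ that converges for every $\rho\in[0,1)$ has radius of convergence at least $1$. Indeed, for $|\rho|<1$ choose $\rho_*\in(|\rho|,1)$; then $\sum_n c_{\mathcal{O},n}|\rho|^n\leqslant\sum_n c_{\mathcal{O},n}\rho_*^n=G_{\mathcal{O}}(\rho_*)<\infty$, which is precisely the estimate \eqref{eq:inequality}. Hence $G_{\mathcal{O}}(\rho)$ is holomorphic on the open disc $|\rho|<1$. Finally, \eqref{eq:defrho} is a conformal bijection from $\{|\rho|<1\}$ onto $\sigma\in\mathbb{C}\backslash[1,\infty)$ — with $\rho=0\leftrightarrow\sigma=-1$, $\rho\in[0,1)\leftrightarrow\sigma\in[-1,1)$, and the branch of $\sqrt{1-z}$ with $z=\frac{1+\sigma}{2}$ fixed by positivity on the cut plane — so composing yields a function analytic on $\mathbb{C}\backslash[1,\infty)$ which by construction agrees with the Källén–Lehmann integral on $[-1,1)$; by the identity theorem it is the unique analytic continuation, which establishes the corollary.

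I expect no genuine obstacle here, only bookkeeping: one must check finiteness of each $c_{\mathcal{O},n}$ before speaking of a power series (immediate from non‑negativity plus convergence at an interior point), and one must confirm that \eqref{eq:defrho} is indeed a conformal bijection between the disc and the cut plane with the correct branch of the square root, so that ``analytic in $\rho$ on the disc'' translates faithfully into ``analytic in $\sigma$ on $\mathbb{C}\backslash[1,\infty)$''. All the analytic heavy lifting is already contained in Proposition~\ref{proposition:rhoexp}; the corollary is in essence the statement that an integral of manifestly positive building blocks, each analytic on $|\rho|<1$, inherits that common domain of analyticity.
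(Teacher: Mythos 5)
Your proposal is correct and follows essentially the same route as the paper's primary argument: substitute the positive $\rho$-expansion of Proposition \ref{proposition:rhoexp} into the K\"allén--Lehmann representation, use positivity of $B_n$ and of the spectral densities to justify the interchange and obtain a power series with nonnegative coefficients convergent on $[0,1)$ (hence on $|\rho|<1$, as in \eqref{eq:inequality}), and then transport analyticity to $\sigma\in\mathbb{C}\backslash[1,\infty)$ via the conformal map \eqref{eq:defrho}. Your extra bookkeeping (Tonelli, finiteness of $c_{\mathcal{O},n}$, the branch of the square root) only makes explicit what the paper leaves implicit; note the paper also offers a second, independent derivation in section \ref{subsubsec:simpleranaly} via a Cauchy--Schwarz bound in the forward tube that avoids Proposition \ref{proposition:rhoexp} altogether.
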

\end{shaded}
\noindent Let us make some remarks about this corollary
\begin{itemize}
    \item The domain of $\sigma$ indicated in corollary \ref{corollary:analyticity} is referred to as the ``maximal analyticity" domain in \cite{Bros:1995js} (see proposition 2.2 there). It is essential to note that the starting point in \cite{Bros:1995js} differs from that in this paper. There, the fundamental assumption is the analyticity of the two-point function within the ``forward tube" domain (see eq.\,(\ref{eq:forwardtube}) for a specific definition of the forward tube). The maximal analyticity is then obtained by applying complex de Sitter group elements to the forward tube. In contrast, our paper starts with the convergence of the K\"allén-Lehmann representation on the Euclidean sphere, from which we derive the same domain of analyticity using the expansion in the $\rho$ variable.

    \item The domain of analyticity encompassed in this corollary includes all paths taken when performing Wick rotations from the sphere to de Sitter and from de Sitter to Euclidean Anti-de Sitter. Further elaboration on these points can be found in sections \ref{subsubsec:spheretods} and \ref{subsubsec:dstoads}.

    \item In fact, assuming the convergence of the K\"allén-Lehmann representation on the Euclidean sphere, the analyticity property of the two-point function, as stated in proposition \ref{corollary:analyticity}, can be derived in a simpler manner. Additional details on this aspect can be found in section \ref{subsubsec:simpleranaly}. Consequently, the primary nontrivial contributions of this work are the positivity of the $\rho$-expansion coefficients, as demonstrated in proposition \ref{proposition:rhoexp} and corollary \ref{corollary:nonpertrho}.
\end{itemize}

\subsubsection{From the Sphere to dS}
\label{subsubsec:spheretods}
In this subsection, our aim is to demonstrate that all the paths taken during the Wick rotation from the sphere to de Sitter are entirely contained within the maximal analyticity domain of corollary \ref{corollary:analyticity}, except for the end points of these paths.

As reviewed in section \ref{pre}, both the sphere $S^{d+1}$ and the de Sitter spacetime dS$_{d+1}$ can be considered as distinct submanifolds of the same complex hyperboloid, defined by the condition:
\begin{equation}
\begin{split}
(Y)^2\equiv-(Y^0)^2+(Y^1)^2+\ldots+(Y^{d+1})^2=1, \quad Y^A \in\mathbb{C}.
\end{split}
\end{equation}
Specifically, $S^{d+1}$ corresponds to the submanifold where $Y^0$ is imaginary, while all other components are real. On the other hand, dS$_{d+1}$ corresponds to the submanifold where all components are real.

The analyticity of the two-point function $G_\mathcal{O}(Y_1,Y_2)$ as a function of $\sigma=Y_1\cdot Y_2$ is established by corollary \ref{corollary:analyticity}, and it holds within the complex domain defined as:
\begin{equation}\label{domain:sigma}
\begin{split}
\sigma\in\mathbb{C}\backslash[1,+\infty].
\end{split}
\end{equation}
It is important to note that for any two-point configuration $(Y_1, Y_2)$ within the ``forward tube" domain, defined by \eqref{eq:forwardtube}, the range of $\sigma$ is within \eqref{domain:sigma} \cite{Hollands:2010pr}. The forward tube is characterized by conditions on $Y_1$ and $Y_2$ as follows:
\begin{equation}\label{eq:forwardtube}
\begin{split}
(Y_1)^2=(Y_2)^2=1, \quad \mathrm{Im}(Y_{21})\in V_{+},
\end{split}
\end{equation}
where $Y_{ij}\equiv Y_i-Y_j$ for convenience, and $V_+$ represents the forward lightcone, defined as:
\begin{equation}
\begin{split}
V_+:=\left\{Y\in \mathbb{R}^{1,d+1}\ \Big{|}\ Y^0>\sqrt{\sum\limits_{a=1}^{d+1}\left(Y^a\right)^2}\right\}.
\end{split}
\end{equation}
To illustrate this point further, let us calculate $\sigma$ for $(Y_1, Y_2)$ satisfying \eqref{eq:forwardtube}. First, we compute the inner product $Y_1 \cdot Y_2$, which simplifies to:
\begin{equation}
\begin{split}
\sigma\equiv Y_1\cdot Y_2=1-\frac{\left(Y_{12}\right)^2}{2} =1-\frac{1}{2}\left[\mathrm{Re}(Y_{12})^2-\mathrm{Im}(Y_{12})^2+2 i \mathrm{Re}(Y_{12})\cdot\mathrm{Im}(Y_{12})\right].
\end{split}
\end{equation}
We observe that \eqref{eq:forwardtube} implies that $\mathrm{Im}(Y_{12})$ is time-like, resulting in $\sigma$ being real only when $\mathrm{Re}(Y_{12})$ corresponds to space-like separation or is zero. However, in this case, $\sigma\leqslant1-\frac{1}{2}\mathrm{Re}(Y_{12})^2+\frac{1}{2}\mathrm{Im}(Y_{12})^2<1$ because $\mathrm{Re}(Y_{12})^2\geqslant0$ and $\mathrm{Im}(Y_{12})^2<0$. This establishes that \eqref{eq:forwardtube} implies \eqref{domain:sigma}.

To emphasize, domain \eqref{eq:forwardtube} includes all two-point configurations on the sphere where $iY_{12}^0>0$, indicating that $Y_1$ is closer to the north pole than $Y_2$. The de Sitter two-point configurations are not contained within domain \eqref{eq:forwardtube}, but rather on its boundary. Similar to QFT in flat space, we anticipate that the de Sitter Wightman two-point function can be obtained by taking the limit of $G(Y_1,Y_2)$, as an analytic function, from domain \eqref{eq:forwardtube}. Depending on the causal relation between $Y_1$ and $Y_2$, there are two cases for the de Sitter two-point configurations:
\begin{itemize}
    \item Space-like separation: This case corresponds to $\sigma<1$. Although these de Sitter configurations are not included in domain \eqref{eq:forwardtube}, they fall within the broader analyticity domain \eqref{domain:sigma}. Therefore, the de Sitter Wightman two-point function is analytic at space-like two-point configurations.
    
    \item Time-like or light-like separation: This case corresponds to $\sigma\geqslant1 ($$\sigma=1$ for light-like separation). Two-point configurations within this regime are on the boundary of domain \eqref{domain:sigma} or \eqref{eq:forwardtube}. In terms of functions, the two-point function is singular when $Y_1$ and $Y_2$ are light-like separated. When $Y_1$ and $Y_2$ are time-like separated, without additional input (e.g., conformal invariance), it is unclear whether the two-point function is analytic or not.
\end{itemize}
Therefore, assuming the convergence of the K\"allén-Lehmann representation \eqref{eq:kallenlehmann} on the Euclidean sphere, we can perform the Wick rotation of the two-point function from the Euclidean sphere to de Sitter in the standard manner.

Here, we present a two-step algorithm for performing the Wick rotation in global coordinates:
\newline\textbf{Step 1}. Analytically continue the two-point function $G_{\mathcal{O}}(Y_1,Y_2)$ to the domain characterized by the following conditions:
\begin{equation}\label{eq:wickrot1}
\begin{split}
    Y_k^0=&\sinh{(T_k)},\quad Y_k^a=\Omega_k^a\cosh(T_k)\quad(k=1,2), \\
    -\frac{\pi}{2}<&\mathrm{Im}(T_1)<0<\mathrm{Im}(T_2)<\frac{\pi}{2},\quad \mathrm{Re}(T_k)\in\mathbb{R},\quad\Omega^a_k\in S^{d}. \\
\end{split}
\end{equation}
In this step, we can demonstrate that domain \eqref{eq:wickrot1} is included in domain \eqref{eq:forwardtube},\footnote{Let $T_k=t_k+i\theta_k$ in \eqref{eq:wickrot1}. By explicit computation, we have\begin{equation}
    \mathrm{Im}(Y_k^0)=\cosh(t_k)\sin(\theta_k),\quad\mathrm{Im}(Y_k^a)=\Omega_k^a\sinh(t_k)\sin(\theta_k).
\end{equation}
Therefore, in domain \eqref{eq:wickrot1} we have $-\mathrm{Im}(Y_1),\mathrm{Im}(Y_2)\in V_+$, which implies $\mathrm{Im}(Y_{21})\in V_+$.
} ensuring the analyticity of the two-point function. 
\newline\textbf{Step 2}. Let $T_k=t_k+i\theta_k$ and take the limit as $\theta_1$ and $\theta_2$ tend to zero:
\begin{equation}\label{eq:wickrot2}
\begin{split}
    G_{\mathcal{O}}(t_1,\Omega_1;t_2,\Omega_2)=&\lim\limits_{\theta_1,\theta_2\rightarrow0}G_{\mathcal{O}}(t_1+i\theta_1,\Omega_1;t_2+i\theta_2,\Omega_2) \\
\end{split}
\end{equation}
from the domain \eqref{eq:wickrot1}. Here in the l.h.s., we use the notation $Y_k^0=\sinh{(t_k)}$ and $Y_k^a=\Omega_k^a\cosh(t_k)$ for $a=1,2,\ldots,d+1$. The limit \eqref{eq:wickrot2} exists as a function when $Y_1$ and $Y_2$ are space-like separated, as discussed above.

The existence of the limit \eqref{eq:wickrot2} for a general real $(Y_1,Y_2)$ pair requires additional assumptions concerning the behavior of the two-point function at short distances ($(Y_1-Y_2)^2\rightarrow0$). We make a natural assumption that on the Euclidean sphere, the two-point function has, at most, a power-law divergence at short distances. This assumption is formally expressed as follows:
\begin{equation}\label{eq:UVassump}
    G_\mathcal{O}(Y_1,Y_2)\leqslant\frac{A}{\left(1-Y_1\cdot Y_2\right)^{\alpha}}\quad \left(-1\leqslant Y_1\cdot Y_2<1\right),
\end{equation}
where $A$ and $\alpha$ are finite, positive constants that may depend on the specific model.

Due to assumption \eqref{eq:UVassump} and the positivity of spectral densities in the K\"allén–Lehmann representation \eqref{eq:kallenlehmann}, the two-point function is bounded from above as follows:
\begin{equation}\label{eq:2ptbound:global}
    G_\mathcal{O}(Y_1,Y_2)\leqslant\frac{A'}{\abs{\mathrm{Im}(T_1)\mathrm{Im}(T_2)}^{2\alpha}},
\end{equation}
where $T_{1}$ and $T_2$ are the same as the ones in \eqref{eq:wickrot1}, and $A'=A\left(\frac{\pi^2}{8}\right)^\alpha$.

Now, $G_{\mathcal O}(Y_1,Y_2)$ is analytic in complex $T_k$ and continuous in real $\Omega_k$ on domain \eqref{eq:wickrot1}, with the power-law bound \eqref{eq:2ptbound:global}. According to Vladimirov's theorem \cite{Vladimirov}, it follows that the limit \eqref{eq:wickrot2} exists in the sense of tempered distributions in $T_k$.

\subsubsection{From dS to EAdS}
\label{subsubsec:dstoads}
The Wick rotation from de Sitter to Euclidean Anti-de Sitter \cite{Sleight_2020,Sleight_20202,Sleight_2021,Sleight_20212} is implemented using planar coordinates defined as follows:
\begin{equation}\label{def:planarcoord}
        Y^0=\frac{\eta^2-\mathbf{y}^2-1}{2\eta}, \quad Y^i = - \frac{y^i}{\eta}, \quad Y^{d+1}=\frac{\eta^2-\mathbf{y}^2+1}{2\eta}\,
\end{equation}
Here, $\eta$ ranges from $-\infty$ to $0$, and $\mathbf{y}$ belongs to $\mathbb{R}^{d}$. 

Under these coordinates, the de Sitter metric is expressed as:
\begin{equation}
    ds^2=\frac{-d\eta^2+d\mathbf{y}^2}{\eta^2}.
\end{equation}
If we take $\eta$ to be imaginary, i.e., $\eta=\pm i z$, the metric above transforms into:
\begin{equation}
    ds^2=-\frac{dz^2+d\mathbf{y}^2}{z^2}.
\end{equation}
This metric represents Euclidean Anti-de Sitter (EAdS) with a radius equal to one, up to an overall minus sign. 

Now, let us consider the two-point function, denoted as $G_\mathcal{O}(\eta_1,\mathbf{y}_1;\eta_2,\mathbf{y}_2)$, in the domain of complex $\eta_k$ given by:
\begin{equation}\label{eq:etadomain}
    \mathrm{Re}(\eta_1),\mathrm{Re}(\eta_2)\in(-\infty,0)\,, \qquad \mathrm{Im}(\eta_1)\in(-\infty,0)\,,\qquad\mathrm{Im}(\eta_2)\in(0,\infty)\,.
\end{equation}
It is important to note that within this domain, by definition, the following conditions hold:
\begin{equation}
    -\text{Im}Y_1^0>\sqrt{\sum_{a=1}^{d+1}\text{Im}(Y^a_1)^2}\,, \qquad  \text{Im}Y_2^0>\sqrt{\sum_{a=1}^{d+1}\text{Im}(Y^a_2)^2}\,,
\end{equation}
so we get $\text{Im}(Y_{21})\in V_+$. As argued in the previous subsection, the configuration satisfying $\text{Im}(Y_{21})\in V_+$ falls within the domain of analyticity established by corollary \ref{corollary:analyticity}. Consequently, we can confidently state that the two-point functions are analytic in terms of $\eta_k$ and continuous in $\mathbf{y}_k$ within the domain (\ref{eq:etadomain}). 

In particular, the two-point function is analytic around the regime where both $\eta_1$ and $\eta_2$ are purely imaginary (with the constraints in \eqref{eq:etadomain}). As mentioned at the beginning of this subsection, the two-point configurations in this regime are interpreted as EAdS configurations. However, there is a subtlety in that the two points are not situated within the same EAdS branch. To illustrate this, let us consider the range of $\eta_1$ and $\eta_2$ as given in \eqref{eq:etadomain}. When $\eta_1$ and $\eta_2$ take on imaginary values, i.e., $\eta_1=i z_1$ and $\eta_2=i z_2$, we find that:
\begin{equation}
    z_1>0,\quad z_2<0.
\end{equation}
Furthermore, we observe that under the planar coordinates as defined in \eqref{def:planarcoord}, all the components become imaginary. To simplify the notation, let us denote $Y^A\equiv -i X^A$. Upon this substitution, it is straightforward to verify that:
\begin{equation}
    \begin{split}
        X_1^2=X_2^2=-1, \quad 
        X_1^0>0,\quad X_2^0<0. 
    \end{split}
\end{equation}
This means that $X_1$ and $X_2$ are situated in two distinct branches of EAdS within the embedding space: $X_1$ belongs to the upper branch, while $X_2$ belongs to the lower branch. 
\begin{figure}
\centering
\includegraphics[scale=1.2]{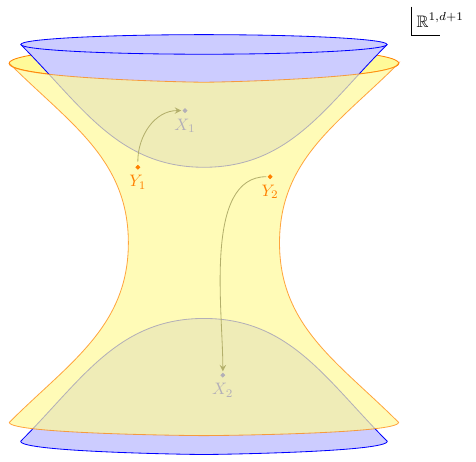}
\caption{The Wick rotation of the Wightman two-point function $\langle\Omega|\mathcal{O}(Y_1)\mathcal{O}(Y_2)|\Omega\rangle$ from de Sitter to EAdS in embedding space. To avoid the cut at time-like separation, the two points are continued to two separate branches of EAdS.}
\label{fig:wickrotation}
\end{figure}
See figure \ref{fig:wickrotation} for a visual representation. This explains why the corresponding range of $\sigma=-X_1\cdot X_2$ is $(-\infty,-1]$, signifying that the two points have a minimal distance ($(X_1-X_2)^2\leqslant-4$) since they are located on time-like separated EAdS surfaces.

Now, let us reach the Wightman two-point function in dS from EAdS. According to the analyticity domain \eqref{eq:etadomain}, we take the following limit:
\begin{equation}
    G_{\mathcal{O}}(\eta_1,\mathbf{y}_1;\eta_2,\mathbf{y}_2)=\lim_{z_1,z_2\to0^+}G_{\mathcal{O}}(\eta_1-iz_1,\mathbf{y}_1;\eta_2+iz_2,\mathbf{y}_2)\,.
\end{equation}
The existence of this limit can be justified using arguments similar to those presented in the previous subsection. Utilizing the assumed bound \eqref{eq:UVassump}, we can demonstrate that the two-point function in the planar coordinates satisfies the following power-law bound for complex $\eta_k$ in domain \eqref{eq:etadomain}:
\begin{equation}
    \abs{G_{\mathcal{O}}(\eta_1,\mathbf{y}_1;\eta_2,\mathbf{y}_2)}\leqslant\frac{A}{\left[\left(1-\frac{\mathrm{Re}(\eta_1^2)}{\abs{\eta_1}^2}\right)\left(1-\frac{\mathrm{Re}(\eta_2^2)}{\abs{\eta_2}^2}\right)\right]^{\alpha/2}}=A\abs{\frac{\eta_1\eta_2}{2\,\mathrm{Im}(\eta_1)\,\mathrm{Im}(\eta_2)}}^\alpha.
\end{equation}
This bound ensures that the limit exists in the sense of tempered distributions in terms of $\eta_k$. 

\subsubsection{A simple alternative derivation of analyticity}
\label{subsubsec:simpleranaly}
In this section we would like to present an alternative proof of corollary \ref{corollary:analyticity}, which does not require the use of proposition \ref{proposition:rhoexp}.

The proof is divided into two steps. In the first step, we show analyticity of the two-point function in the ``forward tube" domain (\ref{eq:forwardtube}). In the second step, we show that for $(Y_1,Y_2)$ in domain \eqref{eq:forwardtube}, the range of $\sigma\equiv Y_1\cdot Y_2$ covers the whole cut plane, $\mathbb{C}\backslash[1,+\infty)$.

For the first step, the key observation is that each free propagator satisfies the following Cauchy-Schwarz type inequality:
\begin{equation}\label{CS1}
\abs{G^{(d)}_\lambda(Y_1,Y_2)}\leqslant\sqrt{G^{(d)}_\lambda(Y_1,Y_1^*)G^{(d)}_\lambda(Y_2^*,Y_2)}.
\end{equation}
This inequality holds when $(Y_1,Y_2)$ belongs to domain \eqref{eq:forwardtube}. Since we assume the positivity of the spectral density in the Källén–Lehmann representation \eqref{eq:kallenlehmann},  eq.\,\eqref{CS1} implies that in domain \eqref{eq:forwardtube}, the absolute value of $G_{\mathcal O}(Y_1, Y_2)$ is bounded by 
\small
\begin{equation}\label{eq:CS2}
\begin{split}
    | G_{\mathcal O}(Y_1, Y_2)|&\leqslant  \int_{\mathbb R\cup i\left(-\frac{d}{2},\frac{d}{2}\right)}\,d\lambda\, \rho_{\mathcal O}(\lambda)\sqrt{G^{(d)}_\lambda(Y_1,Y_1^*)G^{(d)}_\lambda(Y_2^*,Y_2)} \\
   &\leqslant \sqrt{G_\mathcal{O}(Y_1,Y_1^*)G_\mathcal{O}(Y_2^*,Y_2)}~, \\
\end{split}
\end{equation}
\normalsize
where the second inequality is a Cauchy-Schwarz  inequality for the two vectors $\sqrt{G^{(d)}_\lambda(Y_1,Y_1^*)}$ and $\sqrt{G^{(d)}_\lambda(Y_2^*,Y_2)}$ with respect to the inner product $(f, g)_{\varrho}\equiv \int d\lambda\, \varrho_{\mathcal O}(\lambda) f(\lambda) g(\lambda)$.

The $\sigma$ variable of the two-point configurations $(Y_1,Y_1^*)$ and $(Y_2^*,Y_2)$ are computed in appendix \ref{app:computerho}. The claim is that for $Y_1$ and $Y_2$ in domain \eqref{eq:wickrot1}, the range of the corresponding $\sigma$ variables is given by
\begin{equation}
    -1\leqslant Y_1\cdot Y_1^*,\ Y_2\cdot Y_2^*<1,
\end{equation}
which is exactly the range for the two-point configurations on the Euclidean sphere. Therefore, the r.h.s. of \eqref{eq:CS2} is finite, meaning that the Källén–Lehmann representation \eqref{eq:kallenlehmann} converges absolutely. Furthermore, the convergence is uniform as long as $$-1\leqslant Y_1\cdot Y_1^*,\ Y_2\cdot Y_2^*<1-\varepsilon$$ for any fixed positive $\varepsilon$. Together with the analyticity of each single free propagator in \eqref{eq:kallenlehmann}, we conclude that the two-point function is analytic in domain \eqref{eq:forwardtube}. This finishes the first step.

Now let us show that in domain \eqref{eq:forwardtube}, the range of $\sigma$ is exactly given by $\mathbb{C}\backslash[1,+\infty)$. This point was explained in \cite{Bros:1995js} (see Proposition 2.2-(1) there). An easy way to see this is to consider the following class of two-point configurations
\begin{equation}
    \begin{split}
    Y_1=&(\sinh(t_1-i\theta_1),\cosh(t_1-i\theta_1),0,\ldots,0), \\
    Y_2=&(i\sin(\theta_2),\cos(\theta_2)\cos(\varphi),\cos(\theta_2)\sin(\varphi),0,\ldots), \\
    0<&\theta_1,\theta_2\leqslant\frac{\pi}{2},\quad t_1,\varphi\in\mathbb{R}. \\
\end{split}
\end{equation}
This class of $(Y_1,Y_2)$ belongs to the case of \eqref{eq:wickrot1}, so it is included in the forward tube domain \eqref{eq:forwardtube}. By explicit computation, we have
\begin{equation}\label{eq:computeYinner}
    \begin{split}
        Y_1\cdot Y_2
        \equiv\,U\cosh(t_1)+iV\sinh(t_1), \\
    \end{split}
\end{equation}
where $U$ and $V$ are
\begin{align}
&U\equiv-\sin (\theta_1)\sin (\theta_2)+\cos (\theta_1)\cos(\theta_2)\cos(\varphi)\nonumber\\
&V\equiv\cos(\theta_1)\sin(\theta_2)+\sin(\theta_1)\cos(\theta_2)\cos(\varphi)~.
\end{align} By the assumed range of $\theta_1$, $\theta_2$ and $\varphi$, the corresponding range of $(U,V)$ is given by the closed unit disc minus a point:
\begin{equation}
    \left\{U^2+V^2\leqslant1\right\}\backslash\{(1,0)\}.
\end{equation}
Then by taking all possible real $t_1$ for \eqref{eq:computeYinner}, we get the range of $Y_1\cdot Y_2$:
\begin{equation}
    \mathbb{C}\backslash[1,+\infty),
\end{equation}
where the interval $[1,+\infty)$ is the orbit of $U\cosh t_1+iV\sinh t_1$ with $(U,V)=(1,0)$. This finishes the second step and the proof of corollary \ref{corollary:analyticity}.

\section{Proof of proposition \ref{proposition:rhoexp}}\label{Proofsection}
In this section, we present a comprehensive proof of proposition \ref{proposition:rhoexp}. Since free scalar propagators are, up to normalization, hypergeometric functions, proposition \ref{proposition:rhoexp} is equivalent to the statement that
\begin{equation}
    _2F_1\left(\frac{d}{2}+i\lambda,\frac{d}{2}-i\lambda,\frac{d+1}{2},\frac{4\rho}{(1+\rho)^2}\right)=\sum_{n=0}^\infty b_n(d,\lambda)\rho^n\,, \qquad \lambda\in\mathbb{R}\cup i\left(-\frac{d}{2},\frac{d}{2}\right)\,,
    \label{eq:hypergrho}
\end{equation}
with $b_n(d,\lambda)\geqslant 0$ for all $n$. The proof is structured as follows. First, we derive an explicit formula, eq.\,(\ref{eq:2F1final}), for the l.h.s.\ of eq.\,\eqref{eq:hypergrho} (section \ref{subsection:rhoexp}). Some intricate technical details are relegated to appendix \ref{app:CB}. Then using the established formula \eqref{eq:2F1final}, we provide a proof of proposition \ref{proposition:rhoexp} for the case of principal series in $d\geqslant1$ (section \ref{subsection:princi}) and the case of complementary series in $d=1$ (section \ref{subsection:compld=1}). Finally, we prove the case of complementary series in $d\geqslant2$ through the method of dimensional reduction (section \ref{subsection:compld>=2}).

\subsection{\texorpdfstring{$\rho$}{rho} expansion from CFT\texorpdfstring{$_1$}{1}}\label{subsection:rhoexp}
In this subsection, we would like to derive the explicit form of $b_{n}(d,\lambda)$ in eq.\,\eqref{eq:hypergrho}. To begin with, we would like to introduce the following identity for hypergeometric $_2F_1$:
\begin{equation}\label{eq:2F1identity}
	\begin{split}
		&{}_{2}{F}_{1}\left(h-\delta_1,h-\delta_2;2h;\frac{4\rho}{(1+\rho)^2}\right) \\
		=&(1+\rho)^{2h}\left(\frac{1-\rho}{1+\rho}\right)^{\delta_1+\delta_2}\sum\limits_{n=0}^{\infty}\frac{n!}{(2h)_n}a_n(h,\delta_{1})a_n(h,\delta_{2})\rho^n\quad(\abs{\rho}<1), \\
	\end{split}
\end{equation}
where the term $a_n$ is defined as
\begin{equation}\label{eq:an}
	\begin{split}
		a_n(h,\delta)\equiv &\sum\limits_{k=0}^{n}\frac{(-1)^k\,(h-\delta)_{k}\,(h+\delta)_{n-k}}{k!\,(n-k)!}. \\
	\end{split}
\end{equation}
Our derivation of \eqref{eq:2F1identity} is based on techniques from one-dimensional conformal field theory (CFT$_1$). We leave the technical details to appendix \ref{app:CB}. Here, we would like to briefly explain the main idea of the derivation.

The key observation is that in CFT$_1$, the conformal block of the four-point function takes the following form \cite{Dolan:2000ut}:
\begin{equation}\label{eq:CB1}
	\begin{split}
		g_{1234,h}(z)=|z|^{h}{}_{2}{F}_{1}(h-h_{12},h+h_{34};2h;z),
	\end{split}
\end{equation}
where $h_i$ ($i=1,2,3,4$) denotes the scaling dimension of the external primary operators, and $h$ denotes the scaling dimension of the exchanged (internal) primary operator. Here we use $h_{ij}\equiv h_i-h_j$ for convenience. $z$ is the cross-ratio of the four-point configuration $(\tau_1,\tau_2,\tau_3,\tau_4)$, defined by
\begin{equation}\label{def:crossratio}
	\begin{split}
		z\equiv \frac{\tau_{12}\tau_{34}}{\tau_{13}\tau_{24}}\quad(\tau_{ij}\equiv\tau_i-\tau_j).
	\end{split}
\end{equation}
The expression \eqref{eq:CB1} can be computed using the operator product expansion (OPE) in the following four-point configuration (see figure \ref{fig:config1}):
\begin{equation}\label{eq:confframe1}
	\begin{split}
		\tau_1=0,\quad\tau_2=z,\quad\tau_3=1,\quad\tau_4=\infty.
	\end{split}
\end{equation}
\begin{figure}
	\centering
	\begin{subfigure}[b]{0.4\textwidth}
		\centering
		\begin{tikzpicture}[baseline={(0,0)},circuit logic US]
			%axis
			\draw[thick] (-1,0) -- (5,0) ;
			%positions of the operators
			\filldraw[black] (0,0) circle (2pt) node[anchor=south]{$\mathcal{O}_1$};
			\filldraw[black] (1.2,0) circle (2pt) node[anchor=south]{$\mathcal{O}_2$};
			\filldraw[black] (2,0) circle (2pt) node[anchor=south]{$\mathcal{O}_3$};
			\filldraw[black] (5,0) circle (2pt) node[anchor=south]{$\mathcal{O}_4$};
			%coordinates
			\filldraw[black] (0,0) circle (2pt) node[anchor=north]{$0$};
			\filldraw[black] (1.2,0) circle (2pt) node[anchor=north]{$z$};
			\filldraw[black] (2,0) circle (2pt) node[anchor=north]{$1$};
			\filldraw[black] (5,0) circle (2pt) node[anchor=north]{$\infty$};
		\end{tikzpicture}
		\caption{Configuration of \eqref{eq:confframe1}}
		\label{fig:config1}
	\end{subfigure}
	\hfill
	\begin{subfigure}[b]{0.4\textwidth}
		\centering
		\begin{tikzpicture}[baseline={(0,0)},circuit logic US]
			%axis
			\draw[thick] (-3,0) -- (3,0) ;
			%positions of the operators
			\filldraw[black] (0,0) circle (2pt) node[anchor=north]{$0$};
			\filldraw[black] (0.5,0) circle (2pt) node[anchor=south]{$\mathcal{O}_1$};
			\filldraw[black] (-0.5,0) circle (2pt) node[anchor=south]{$\mathcal{O}_2$};
			\filldraw[black] (-2,0) circle (2pt) node[anchor=south]{$\mathcal{O}_3$};
			\filldraw[black] (2,0) circle (2pt) node[anchor=south]{$\mathcal{O}_4$};
			%coordinates
			\filldraw[black] (0.5,0) circle (2pt) node[anchor=north]{$\rho$};
			\filldraw[black] (-0.5,0) circle (2pt) node[anchor=north]{$-\rho$};
			\filldraw[black] (-2,0) circle (2pt) node[anchor=north]{$-1$};
			\filldraw[black] (2,0) circle (2pt) node[anchor=north]{$1$};
		\end{tikzpicture}
		\caption{Configuration of \eqref{eq:confframe2}}
		\label{fig:config2}
	\end{subfigure}
	\caption{Two conformally equivalent configurations. $z$ and $\rho$ are related via \eqref{eq:zrhorelation}.}
	\label{fig:confframe}
\end{figure}
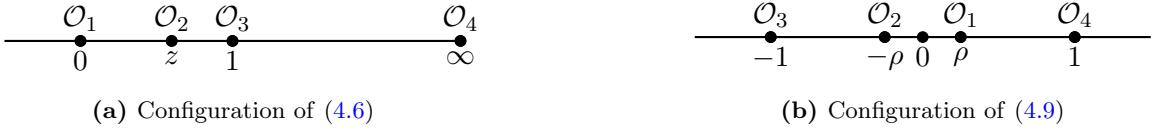
The conformal block $g_{1234,h}(z)$ is conformally invariant. Thus, in principle, we can choose another four-point configuration $(\tau_1',\tau_2',\tau_3',\tau_4')$ to compute it, if $(\tau_1',\tau_2',\tau_3',\tau_4')$ can be obtained by acting with a global conformal transformation on \eqref{eq:confframe1}. Here we choose the following conformal transformation:
\begin{equation}
	\begin{split}
		\tau_i'=\frac{(1+\rho)\tau_i-2\rho}{(1+\rho)\tau_i-2},\quad i=1,2,3,4,
	\end{split}
\end{equation}
with
\begin{equation}\label{eq:zrhorelation}
	\begin{split}
		z=\frac{4\rho}{(1+\rho)^2}.
	\end{split}
\end{equation}
This transformation maps the configuration \eqref{eq:confframe1} to the following one (see figure \ref{fig:config2})
\begin{equation}\label{eq:confframe2}
	\begin{split}
		\tau_1'=\rho,\quad\tau_2'=-\rho,\quad\tau_3'=-1,\quad\tau_4'=1.
	\end{split}
\end{equation}
Then, using the OPE in the configuration \eqref{eq:confframe2}, one can show that the conformal block has the following expression:
\begin{equation}\label{eq:CB2}
	\begin{split}
		g_{1234,h}\left(\frac{4\rho}{(1+\rho)^2}\right)=&\abs{4\rho}^h\abs{\frac{1-\rho}{1+\rho}}^{h_{12}-h_{34}}\sum\limits_{n=0}^{\infty}\frac{n!}{(2h)_n}a_n(h,h_{12})a_n(h,-h_{34})\rho^n, \\
	\end{split}
\end{equation}
where $a_n$ is the same as in \eqref{eq:an}.

Comparing eqs.\,\eqref{eq:CB1} and \eqref{eq:CB2} (and choosing $h_{12}=\delta_1$ and $h_{34}=\delta_{2}$), we obtain the identity \eqref{eq:2F1identity}. 

Now we can forget about CFT and focus on \eqref{eq:2F1identity}. To apply the identity (\ref{eq:2F1identity}) to the l.h.s. of \eqref{eq:rhoexp}, we choose
\begin{equation}
	\begin{split}
		h=\frac{d+1}{4},\quad\delta_1=-\frac{d-1}{4}+i\lambda,\quad\delta_2=-\dfrac{d-1}{4}-i\lambda.
	\end{split}
\end{equation}
With this choice, the identity (\ref{eq:2F1identity}) leads to
\begin{equation}\label{eq:2F1final}
	\begin{split}
		&{}_{2}{F}_{1}\left(\frac{d}{2}+i\lambda,\frac{d}{2}-i\lambda;\frac{d+1}{2};\frac{4\rho}{(1+\rho)^2}\right) \\
		=&\frac{(1+\rho)^{d}}{(1-\rho)^{\frac{d-1}{2}}}\sum\limits_{n=0}^{\infty}\frac{n!}{\left(\frac{d+1}{2}\right)_n}a_n\left(\frac{d+1}{4},-\frac{d-1}{4}+i\lambda\right)a_n\left(\frac{d+1}{4},-\frac{d-1}{4}-i\lambda\right)\rho^n,\\
	\end{split}
\end{equation}
where the coefficients $a_n$ are defined in eq.\,(\ref{eq:an}).

\subsection{Principal series in \texorpdfstring{$d\geqslant1$}{d>=1}}\label{subsection:princi}
The principal series corresponds to $\lambda\in\mathbb{R}$ in eq.\,\eqref{eq:hypergrho}. In this case, by \eqref{eq:an}, we have
\begin{equation}\label{eq:anprinci}
	\begin{split}
		a_{n}\left(\frac{d+1}{4},-\frac{d-1}{4}+i\lambda\right)=a_{n}\left(\frac{d+1}{4},-\frac{d-1}{4}-i\lambda\right)^{*}.
	\end{split}
\end{equation}
Consequently, the sum in the r.h.s.\ of \eqref{eq:2F1final} is a power series of $\rho$ with positive coefficients. Furthermore, the prefactor $\frac{(1+\rho)^{d}}{(1-\rho)^{\frac{d-1}{2}}}$ can also be expressed as a power series of $\rho$ with positive coefficients. Therefore, the whole r.h.s.\ of \eqref{eq:2F1final} is a power series of $\rho$ with positive coefficients. \footnote{To be more precise, the coefficient $b_n(d,\lambda)$ in eq.\,(\ref{eq:hypergrho}) is given by (\ref{bnexp}).}

This completes the proof of proposition \ref{proposition:rhoexp} for the case of the principal series in $d\geqslant1$. 
\begin{remark}
	Eq.\,\eqref{eq:anprinci} does not hold for $\lambda\in i\left(-\frac{d}{2},\frac{d}{2}\right)$. The argument presented here thus fails in the case of the complementary series. For example, in $d=1$, the identity \eqref{eq:2F1final} reduces to
	\begin{equation}
		\begin{split}
			{}_{2}{F}_{1}\left(\frac{1}{2}+i\lambda,\frac{1}{2}-i\lambda;1;\frac{4\rho}{(1+\rho)^2}\right)=(1+\rho)\sum\limits_{n=0}^{\infty}(-1)^na_n\left(\frac{1}{2},i\lambda\right)^2\rho^n.\\
		\end{split}
	\end{equation}
    Here we used the identity $a_n(h,\delta)=(-1)^na_n(h,-\delta)$ by definition \eqref{eq:an}. Without the $(1+\rho)$ prefactor, the r.h.s.\ is not a positive sum. 
    
    Therefore, the case of the complementary series needs to be treated on its own, with a different approach. 
\end{remark}

\subsection{Complementary series in \texorpdfstring{$d=1$}{d=1} }\label{subsection:compld=1}
Here we will present a proof of proposition \ref{proposition:rhoexp} for the complementary series in $d=1$, i.e.\,the range $\lambda\in i\left(-\frac{1}{2},\frac{1}{2}\right)$. For convenience, let us introduce the notation
\begin{equation}
	\begin{split}
		\Delta=\frac{1}{2}+i\lambda,\quad\bar{\Delta}=1-\Delta.
	\end{split}
\end{equation}
It follows that $\lambda\in i\left(-\frac{1}{2},\frac{1}{2}\right)$ corresponds to $\Delta\in(0,1)$.

Plugging $d=1$ in the identity  (\ref{eq:2F1final}) yields
\begin{align}\label{d=1F}
	_2F_1\left(\Delta,1-\Delta;1;\frac{4\rho}{(1+\rho)^2}\right)
	&=\sum\limits_{n=0}^{\infty}\left[\alpha_n(\Delta)\alpha_n(\bar\Delta)+\alpha_{n-1}(\Delta)\alpha_{n-1}(\bar\Delta)\right]\rho^n ~,
\end{align}
where $\alpha_n(\Delta)\equiv a_n(\frac{1}{2},\Delta-\frac{1}{2})$ (c.f. eq.\,(\ref{eq:an})), i.e. 
\begin{align}
	\alpha_n(\Delta)=\sum_{k=0}^n\frac{(-1)^k(\bar\Delta)_k(\Delta)_{n-k}}{k!(n-k)!}~.
\end{align}
By definition, we have $\alpha_n(\bar\Delta)= (-1)^n\alpha_n (\Delta)$ and hence the coefficient of $\rho^n$ in (\ref{d=1F}) equals the product $c_n(\Delta)c_n(\bar\Delta)$, where $c_n(\Delta)\equiv \alpha_n(\Delta)+\alpha_{n-1}(\Delta)$. To proceed further, we need a simple lemma. 
\begin{lemma}\label{anderi}
	$n! \alpha_n(\Delta)$ is equal to the $n$-th derivative of $(1+x)^{-\bar\Delta}(1-x)^{-\Delta}$, evaluated at $x=0$.
\end{lemma}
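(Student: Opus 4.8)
The plan is to recognize $n!\,\alpha_n(\Delta)$ as an $n$-th Taylor coefficient by reading off the generating function of the $\alpha_n(\Delta)$. Concretely, I would show that
\begin{equation}
    (1+x)^{-\bar\Delta}(1-x)^{-\Delta}=\sum_{n=0}^{\infty}\alpha_n(\Delta)\,x^n\qquad(\abs{x}<1),
\end{equation}
after which the lemma is immediate, since a convergent power series coincides with its Taylor expansion, so its $x^n$ coefficient equals $\tfrac{1}{n!}\,\tfrac{d^n}{dx^n}\big[(1+x)^{-\bar\Delta}(1-x)^{-\Delta}\big]\big|_{x=0}$.

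To establish the generating-function identity, first I would expand each factor using the generalized binomial theorem: $(1-x)^{-\Delta}=\sum_{m=0}^{\infty}\frac{(\Delta)_m}{m!}x^m$ and $(1+x)^{-\bar\Delta}=\sum_{k=0}^{\infty}\frac{(\bar\Delta)_k}{k!}(-x)^k$, both absolutely convergent for $\abs{x}<1$. Taking the Cauchy product of these two series and collecting the coefficient of $x^n$ yields $\sum_{k=0}^{n}\frac{(-1)^k(\bar\Delta)_k(\Delta)_{n-k}}{k!\,(n-k)!}$. Finally I would check that this is exactly $\alpha_n(\Delta)$: by definition $\alpha_n(\Delta)=a_n(\tfrac12,\Delta-\tfrac12)$, and in eq.\,(\ref{eq:an}) with $h=\tfrac12$, $\delta=\Delta-\tfrac12$ one has $h-\delta=1-\Delta=\bar\Delta$ and $h+\delta=\Delta$, reproducing precisely the above sum.

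There is essentially no serious obstacle here: the only points requiring any care are the convergence of the binomial series (valid on $\abs{x}<1$, which is all that is needed to identify Taylor coefficients) and the bookkeeping of the Pochhammer arguments $h\pm\delta$ under the specialization $h=\tfrac12$, $\delta=\Delta-\tfrac12$. If one preferred to avoid generating functions, an equivalent route is a short induction on $n$ using $\tfrac{d}{dx}\big[(1+x)^{-\bar\Delta}(1-x)^{-\Delta}\big]=\big[-\bar\Delta(1+x)^{-1}+\Delta(1-x)^{-1}\big](1+x)^{-\bar\Delta}(1-x)^{-\Delta}$ together with the Pochhammer recursions, but the generating-function argument is cleaner and is the one I would write up.
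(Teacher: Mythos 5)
Your proposal is correct and is essentially the paper's argument in a different wrapping: the paper applies the Leibniz rule to $\partial_x^n\bigl[(1+x)^{-\bar\Delta}(1-x)^{-\Delta}\bigr]$ and sets $x=0$, which is the same computation as your Cauchy product of the two binomial series followed by reading off the Taylor coefficient. The bookkeeping $h-\delta=\bar\Delta$, $h+\delta=\Delta$ under $h=\tfrac12$, $\delta=\Delta-\tfrac12$ matches the paper's definition of $\alpha_n(\Delta)$, so the proof goes through as written.
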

\begin{proof}
	The  lemma  follows from a direct computation of  derivatives:
	\begin{align}
		\partial_x^n \left[(1+x)^{-\bar\Delta}(1-x)^{-\Delta}\right] &= \sum_{k=0}^n\,\binom{n}{k}\partial_x^k(1+x)^{-\bar\Delta}\partial_x^{n-k}(1-x)^{-\Delta}\nonumber\\
		&=\sum_{k=0}^n\,\binom{n}{k}\frac{(-1)^k (\bar\Delta)_k (\Delta)_{n-k}}{(1+x)^{\bar\Delta+k}(1-x)^{\Delta+n-k}}~.
	\end{align}
	Taking $x=0$, we obtain $ \left.\partial_x^n \left[(1+x)^{-\bar\Delta}(1-x)^{-\Delta}\right]\right |_{x=0}= n!\alpha_n(\Delta)$.
\end{proof}

Defining $\phi_\Delta (x)= \left(\frac{1+x}{1-x}\right)^\Delta$ and using this lemma, we find
\begin{align}
	\alpha_n(\Delta) = \frac{1}{n!}\left.\partial_x^n\right|_{x=0}\left(\frac{1}{1+x}\phi_\Delta(x)\right) = \sum_{l=0}^n \frac{(-1)^{n-l}}{l!}\partial^{\,l} \phi_\Delta (0)~,
\end{align}
which yields $c_n(\Delta)=\frac{1}{n!}\partial^n \phi_\Delta (0)$. Altogether,  the $\rho$ expansion  (\ref{d=1F}) becomes
\begin{align}\label{Fdd}
	_2F_1\left(\Delta,1-\Delta;1;\frac{4\rho}{(1+\rho)^2}\right)
	&=\sum\limits_{n=0}^{\infty}\frac{\partial^n \phi_\Delta (0)}{n!}\frac{\partial^n \phi_{\bar\Delta} (0)}{n!}\rho^n ~.
\end{align}
The last step is to show that $\partial^n \phi_\Delta (0)>0$ for all $n\in\mathbb{N}$ and $\Delta\in (0,1)$. When $n=0$, this holds trivially since $\phi_\Delta(0)=1$. When $n\geqslant 1$, it is a result of the following integral representation: 
\begin{lemma}\label{intlemma}
	For any positive integer $n$ and $\Delta\in (0,1)$, we have
	\begin{align}\label{phiint2}
		\frac{\partial^n \phi_\Delta (0)}{n!}= \frac{\sin(\pi\Delta)}{\pi}\int_1^\infty\, \frac{dr}{r^{n+1}}\left[\left(\frac{r+1}{r-1}\right)^\Delta-(-)^n\left(\frac{r-1}{r+1}\right)^\Delta\right]~.
	\end{align}
Because the integrand is manifestly positive for $\Delta\in(0,1)$, the $n$-th derivative $\partial^n\phi_\Delta(0)$ is also positive.
\end{lemma}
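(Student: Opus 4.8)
The plan is to prove the integral formula \eqref{phiint2} by a contour-deformation (dispersion-relation) argument applied to the analytic function $\phi_\Delta(x)=(1+x)^\Delta(1-x)^{-\Delta}$, and then read off positivity by inspection. The starting observation is that, with principal branches, $(1+x)^\Delta$ is holomorphic on $\mathbb{C}\setminus(-\infty,-1]$ and $(1-x)^{-\Delta}$ is holomorphic on $\mathbb{C}\setminus[1,\infty)$, so $\phi_\Delta$ is single-valued and holomorphic on the cut plane $\mathbb{C}\setminus\big((-\infty,-1]\cup[1,\infty)\big)$, which contains a neighbourhood of the origin; moreover $|\phi_\Delta(x)|=\big|\tfrac{1+x}{1-x}\big|^\Delta\to 1$ as $|x|\to\infty$, so $\phi_\Delta$ is bounded near infinity. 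Since $\tfrac{\partial^n\phi_\Delta(0)}{n!}$ is the $n$-th Taylor coefficient of $\phi_\Delta$, I would write it as $\frac{1}{2\pi i}\oint_{|x|=r_0}\frac{\phi_\Delta(x)}{x^{n+1}}\,dx$ for any $0<r_0<1$.

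Next I would inflate this contour to radius $R\to\infty$. Because $\phi_\Delta$ is bounded and $n\geqslant1$, the contribution of the large circle is $O(R^{-n})\to0$; the branch points at $x=\pm1$ contribute nothing in the limit of shrinking loops, since $\phi_\Delta=O(|x-1|^{-\Delta})$ near $x=1$ (with $\Delta<1$) and $\phi_\Delta=O(|x+1|^{\Delta})$ near $x=-1$. Hence the Taylor coefficient equals the sum of two integrals over Hankel-type contours hugging the cuts $[1,\infty)$ and $(-\infty,-1]$, and each collapses to the integral of the discontinuity of $\phi_\Delta(x)/x^{n+1}$ across the respective cut. Computing the discontinuities is elementary: for $r>1$ the factor $(1-x)^{-\Delta}$ picks up $e^{i\pi\Delta}$ from above and $e^{-i\pi\Delta}$ from below, so $\phi_\Delta(r+i0)-\phi_\Delta(r-i0)=2i\sin(\pi\Delta)\big(\tfrac{r+1}{r-1}\big)^\Delta$; similarly for $s>1$ the factor $(1+x)^\Delta$ gives $\phi_\Delta(-s+i0)-\phi_\Delta(-s-i0)=2i\sin(\pi\Delta)\big(\tfrac{s-1}{s+1}\big)^\Delta$. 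Substituting $x=-r$ in the second cut integral produces a factor $(-1)^{n+1}$ from $x^{-n-1}$, and assembling the two pieces gives precisely the right-hand side of \eqref{phiint2}. I would pin down the overall normalization and the relative $(-)^n$ sign by checking $n=1$, where the left-hand side is $2\Delta$ and the right-hand side, via $r=\coth u$, reduces to $\frac{2\sin(\pi\Delta)}{\pi}\int_0^\infty\frac{\cosh(2\Delta u)}{\cosh^2 u}\,du=2\Delta$.

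The main obstacle is bookkeeping of orientations: one must be careful that, after collapsing the deformed contour onto the cuts, the $[1,\infty)$ cut contributes $+\big(\tfrac{r+1}{r-1}\big)^\Delta$ and the $(-\infty,-1]$ cut contributes $-(-)^n\big(\tfrac{r-1}{r+1}\big)^\Delta$ to the integrand; the $n=1$ check fixes this sign unambiguously, and convergence of the resulting integrals near $r=1$ (integrable power $r\mapsto(r-1)^{-\Delta}$ with $\Delta<1$) and near $r=\infty$ is immediate. Finally, for $\Delta\in(0,1)$ one has $\sin(\pi\Delta)>0$, and for $r>1$ the bracket $\big(\tfrac{r+1}{r-1}\big)^\Delta-(-)^n\big(\tfrac{r-1}{r+1}\big)^\Delta$ is manifestly positive --- a sum of two positive terms when $n$ is odd, and the difference $t_1^\Delta-t_2^\Delta$ with $t_1=\tfrac{r+1}{r-1}>1>t_2=\tfrac{r-1}{r+1}>0$ when $n$ is even. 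Hence $\partial^n\phi_\Delta(0)>0$ for all $n\geqslant1$ and $\Delta\in(0,1)$, which is the claim needed to close the $d=1$ complementary-series case.
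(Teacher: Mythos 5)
Your proposal is correct and follows essentially the same route as the paper: write $\partial^n\phi_\Delta(0)/n!$ as a Cauchy contour integral, deform the contour onto the two branch cuts using boundedness at infinity and the integrable $(r\mp1)^{\mp\Delta}$ behavior at the branch points (valid since $0<\Delta<1$), compute the discontinuities $2i\sin(\pi\Delta)\left(\frac{r+1}{r-1}\right)^{\pm\Delta}$, and read off positivity of the integrand. Your sign bookkeeping and the explicit $n=1$ consistency check via $r=\coth u$ are a nice addition, but the argument is the paper's own dispersion-type derivation.
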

\begin{proof}
	Consider $\phi_\Delta(z)$ as a complex function of $z\in\mathbb C$. Since $\phi_\Delta(z)$ is holomorphic in the domain $|z|<1$, its derivative at $z=0$ admits an integral representation 
	\begin{align}\label{phinint}
		\frac{\partial^n \phi_\Delta (0)}{n!} =\oint_{C_0}\frac{dz}{2\pi i}\frac{\phi_\Delta(z)}{z^{n+1}}~,
	\end{align}
	where the contour $C_0$ is contained in the unit disk, as shown by the red circle in figure \ref{deform}. On the other hand, since $\phi_\Delta(z)$ is holomorphic in the cut plane $\mathbb C\backslash(-\infty,-1]\cup[1,\infty)$ and bounded by $1$ at $\infty$, we can deform the contour $C_0$ such that it runs along the branch cuts of $\phi_\Delta(z)$. The deformed contour is shown in blue in figure \ref{deform}. The new contour integral relates $\partial^n \phi_\Delta (0)$ and the discontinuity of $\phi_\Delta(z)$ at the branch cuts \footnote{Because of $0<\Delta<1$, we can neglect the small half-circles around the two branch points $z=\pm 1$.}
	\begin{align}
		\frac{\partial^n \phi_\Delta (0)}{n!} =\frac{1}{2\pi i}\int^\infty_1\frac{dr}{r^{n+1}}\left(\text{Disc}[\phi_\Delta](r)-(-)^n\text{Disc}[\phi_\Delta](-r)\right)
	\end{align}
	where $\text{Disc}[f](r)\equiv \lim\limits_{\epsilon\to 0^+}f(r+i\epsilon)-f(r-i\epsilon)$.
	By carefully analyzing the discontinuity, we find 
	\begin{align}
		\text{Disc}[\phi_\Delta](\pm r)=2i\sin(\Delta\pi)\left(\frac{r+1}{r-1}\right)^{\pm \Delta}
	\end{align}
	when $r>1$, and thus we recover eq.\,(\ref{phiint2}).
\end{proof}
Combining eq.\,(\ref{Fdd}) and lemma \ref{intlemma} we have shown that complementary series propagators in dS$_2$ have the $\rho$ expansion (\ref{eq:hypergrho}) with
\begin{equation}
    b_n(1,\lambda)=\frac{\partial^n\phi_\Delta(0)}{n!}\frac{\partial^n\phi_{\bar\Delta}(0)}{n!}>0
\end{equation}
This proves the complementary series case of proposition \ref{proposition:rhoexp} when $d=1$, i.e. $b_n(1,\lambda)>0$ for any $n\in\mathbb N$ and $\lambda\in i\left(-\frac{1}{2},\frac{1}{2}\right)$.

\begin{figure}[t]
	\centering
	\begin{tikzpicture}[scale=1.5]
		\draw[->, line width=1.5pt,red] (0,0) circle [radius=0.5];
		\draw[<-, line width=1pt,red] (45:0.5) arc (45:0:0.5);
		\draw[line width=1pt] (0,0) -- (1,0) node[vertex]{} coordinate[label= {[shift={(0.2,-0.5)}]left:$1$}];
		\draw[line width=1pt] (0,0) -- (-1,0) node[vertex]{} coordinate[label= {[shift={(0.26,-0.5)}]left:$-1$}];
		\draw[->, line width=1pt] (0,-2) -- (0,3) node[above] {};
		\draw[decorate, decoration={zigzag, segment length=6pt, amplitude=3pt}, line width=1pt] (1,0) -- (4,0);
		\draw[decorate, decoration={zigzag, segment length=6pt, amplitude=3pt}, line width=1pt] (-4,0) -- (-1,0);
		\draw[blue, line width=1.5pt] (1,0.2) arc (90:270:0.2);
		\draw[blue,line width=1.5pt] (1,0.2) -- (4,0.2);
		\draw[-latex, line width=1pt,blue] (2.4,0.2) -- (2.6,0.2);
		\draw[blue,line width=1.5pt] (4,-0.2) -- (1,-0.2);
		\draw[-latex, line width=1pt,blue] (2.6,-0.2) -- (2.4,-0.2);
		\draw[blue, line width=1.5pt] (-1,-0.2) arc (-90:90:0.2);
		\draw[blue,line width=1.5pt] (-4,0.2) -- (-1,0.2);
		\draw[-latex, line width=1pt,blue] (-2.6,0.2) -- (-2.4,0.2);
		\draw[blue,line width=1.5pt] (-1,-0.2) -- (-4,-0.2);
		\draw[-latex, line width=1pt,blue] (-2.4,-0.2) -- (-2.6,-0.2);
		\node[below left=1, black] at (4.4,3.3) {$z$};
		\draw (3.5,2.9) -- (3.5,2.5) -- (4,2.5);
	\end{tikzpicture}
	\caption{Branch cuts of $\phi_\Delta(z)$ in the $z$-plane and  contour deformations of the integral (\ref{phinint}).}
	\label{deform}
\end{figure}
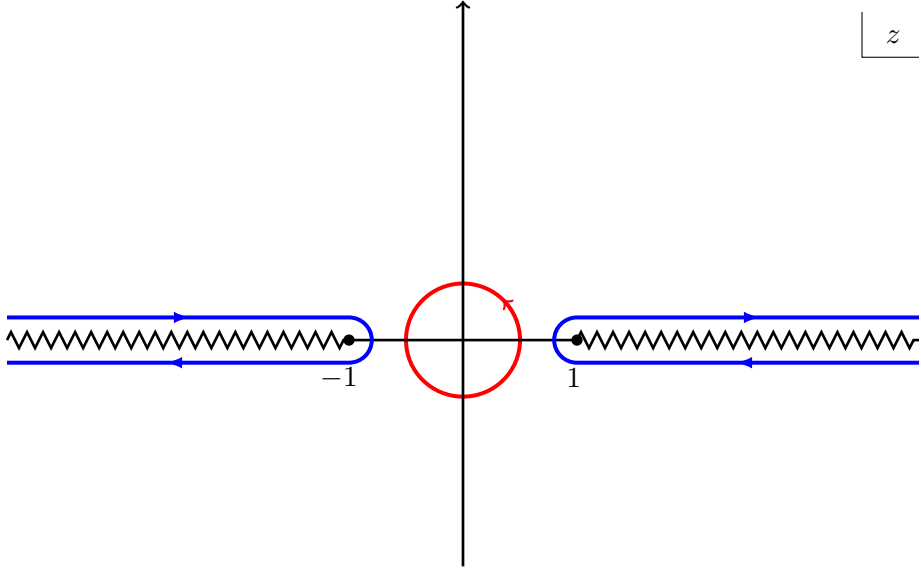

\subsection{Complementary series in \texorpdfstring{$d\geqslant2$}{d>=2}}\label{subsection:compld>=2}
In this section, we will prove proposition \ref{proposition:rhoexp} for the case of $\lambda\in i\left(-\frac{d}{2},\frac{d}{2}\right)$, the complementary series, in any dimension $d$ by using dimensional reduction. It is worth noting that in section \ref{subsection:compld=1}, we demonstrated that a complementary series free propagator (which is proportional to $_2F_1$) in $d=1$ possesses a positive series expansion in the variable $\rho$. The main idea in this section is then to prove that in $d\geqslant2$, free scalar propagators in the principal or complementary series have a positive Källén–Lehmann decomposition into free scalar propagators in the principal or complementary series in $d=1$, thus inherting the property of having a positive series expansion in the $\rho$ variable. The underlying concept behind this process of dimensional reduction is that any unitary QFT in dS$_{d+1}$ can be regarded as a unitary QFT in dS$_2$ when we confine the domain of the correlation functions to a dS$_2$ slice within dS$_{d+1}$.

The proof is going to be split in two parts. First, we consider free propagators $G_\lambda^{(d)}(\sigma)$ with $|\text{Im}(\lambda)|<\frac{d-1}{2}$, which includes all the principal series and part of the complementary series, and then the case $\frac{d-1}{2}<|\text{Im}(\lambda)|<\frac{d}{2}$, covering the rest of the complementary series.\footnote{By continuity, the same conclusion will hold for the critical case $|\text{Im}(\lambda)|=\frac{d-1}{2}$.}

\subsubsection{Case $|\text{Im}(\lambda)|<\frac{d-1}{2}$}
Our starting point is the fact that the free scalar propagators of principal series in dS$_2$,
\begin{equation}\label{def:freeprop:twodimprinci}
	G_\nu^{(1)}(\sigma)=\frac{\Gamma(\frac{1}{2}\pm i\nu)}{4\pi}\mathbf{F}\left(\frac{1}{2}+i\nu,\frac{1}{2}-i\nu;1;\frac{1+\sigma}{2}\right)\,,\quad\nu\in\mathbb{R},
\end{equation}
form an orthogonal basis for square-integrable functions over the interval $\sigma\in(-\infty, -1]$\footnote{The validation for this claim is provided in section II of \cite{Camporesi:1994ga}, where the analysis was conducted on the Euclidean hyperbolic surface, also known as Euclidean AdS (EAdS). In the context of the two-point configuration in EAdS, the range for $\sigma$ is $\sigma\in(-\infty, -1]$.}, where the orthogonality relation is given by
\begin{equation}
    \int_{-\infty}^{-1}d\sigma\, G^{(1)}_{\lambda}(\sigma)\, G^{(1)}_{\nu}(\sigma)= \frac{\delta(\lambda-\nu)+\delta(\lambda+\nu)}{8\nu\sinh(2\pi\nu)}\,.
    \label{eq:d1ortho}
\end{equation}
The dS$_{d+1}$ propagators $G^{(d)}_\lambda(\sigma)$ 
are regular at $\sigma=-1$. Furthermore, in the vicinity of $\sigma=-\infty$, they have the following asymptotic behavior:
\begin{equation}
    G^{(d)}_\lambda(\sigma)=\frac{\Gamma(-2i\lambda)\Gamma\left(\frac{d}{2}+i\lambda\right)}{(4\pi)^{\frac{d+1}{2}}\Gamma\left(\frac{1}{2}-i\lambda\right)}\left(-\sigma\right)^{-\frac{d}{2}-i\lambda}\left[1+O\left(\abs{\sigma}^{-1}\right)\right]+(\lambda\to-\lambda)
\end{equation}
Consequently, the condition for square integrability over $\sigma\in(-\infty, -1]$ is met when:
\begin{equation}\label{eq:L2condition}
	\abs{\text{Im}(\lambda)}<\frac{d-1}{2}.
\end{equation}
Under the condition \eqref{eq:L2condition}, we can express $G^{(d)}_\lambda(\sigma)$ as:
\begin{equation}\label{eq:dimred1}
	G^{(d)}_\lambda(\sigma)=\int_{\mathbb{R}}d\nu\ \varrho_\lambda^{\mathcal{P}}(\nu)G_\nu^{(1)}(\sigma)\,,\quad\sigma\in(-\infty,-1].
\end{equation}
To determine the spectral density $\varrho_\lambda^{\mathcal{P}}(\nu)$, we use (\ref{eq:d1ortho}) and obtain
\begin{equation}\label{rFF1}
	\begin{aligned}
		\varrho^{\mathcal{P}}_{\lambda}(\nu)
		&=\frac{\Gamma(\frac{d}{2}\pm i\lambda)}{(4\pi)^{\frac{d-1}{2}}\Gamma(\pm i\nu)}\int_{0}^{\infty}d z\ \mathbf{F}\left(\frac{1}{2}+i\nu,\frac{1}{2}-i\nu;1;-z\right)\mathbf{F}\left(\frac{d}{2}+i\lambda,\frac{d}{2}-i\lambda;\frac{d+1}{2};-z\right)~,
	\end{aligned}
\end{equation}
where we changed variables to $z=-\frac{1+\sigma}{2}$. After some technical steps which are detailed in appendix \ref{eq:barnesinversion}, we obtain
\begin{shaded}
	\begin{equation}\label{eq:rhonewpaper}
		\varrho_\lambda^{\mathcal{P}}(\nu)=\frac{\nu\sinh\pi\nu}{8\pi^{\frac{d+3}{2}}\Gamma\left(\frac{d-1}{2}\right)}\prod\limits_{\pm,\pm}\Gamma\left(\frac{d-1}{4}\pm i\frac{\nu}{2}\pm i\frac{\lambda}{2}\right)~.
	\end{equation}
\end{shaded}
\noindent Notice that $\varrho_\lambda^{\mathcal{P}}(\nu)$ is positive in the ranges of interest $\lambda\in\mathbb{R}\cup i(-\frac{d-1}{2},\frac{d-1}{2})$ and $\nu\in\mathbb{R}$.

We thus see that a principal series or complementary series propagator satisying (\ref{eq:L2condition}) in dS$_{d+1}$ only contains states in the two-dimensional principal series, when restricted to a dS$_2$ slice. Moreover, given that we proved that $G^{(1)}_\lambda(\sigma)$ has a positive series expansion in the $\rho$ variable, and given the positivity of $\varrho^{\mathcal P}_\lambda(\nu)$, we can state that $G_\lambda^{(d)}(\sigma)$ with $\lambda\in\mathbb{R}\cup i(-\frac{d-1}{2},\frac{d-1}{2})$ also has a positive series expansion in $\rho$.

This finishes the proof for the case of $\lambda\in i(-\frac{d-1}{2},\frac{d-1}{2})$.

\subsubsection{Case $\frac{d-1}{2}<|\text{Im}(\lambda)|<\frac{d}{2}$}
\label{subsubsec:case2}
We now aim to extend the dimensional reduction formula \eqref{eq:dimred1} to encompass the regime $\frac{d-1}{2}<\abs{\Im(\lambda)}<\frac{d}{2}$. This range includes the remaining part of the complementary series. It is worth noting that the free propagator $G^{(d)}_\lambda(\sigma)$, as defined in (\ref{eq:free2pt}), is analytic in $\lambda$ in the domain
\begin{equation}\label{domain:lambda}
	\Re{(\lambda)}\in\mathbb{R},\quad\abs{\Im{(\lambda)}}<\frac{d}{2}.
\end{equation}
Our goal in this subsection is to reformulate the integral \eqref{eq:dimred1} in such a way that it incorporates this essential analyticity property of $G_\lambda^{(d)}(\sigma)$ with respect to $\lambda$. Here we are going to take a heuristic approach, while we leave the rigorous approach and most of the details to appendix \ref{subsec:analyticcontinuation}. 

Let us start from \eqref{eq:dimred1}. The spectral density (\ref{rFF1}) has poles at
\begin{equation}\label{eq:rhopoles1}
	\nu=\pm \lambda\pm i\left(\frac{d-1}{2}+2n\right),\quad(n=0,1,2,\ldots),.
\end{equation}
When continuing $\lambda$ above $|\text{Im}(\lambda)|=\frac{d-1}{2}$, two of these poles, corresponding to $n=0$, will cross the integration contour over real axis of $\nu$. To maintain analyticity, the residues on their positions need to be added in order to obtain the full answer. For imaginary $\lambda$, these two poles correspond to one specific representation in the dS$_2$ complementary series. The result, for the range $\text{Im}(\lambda)\in\left(-\frac{d}{2},\frac{d}{2}\right)\backslash\{\pm\frac{d-1}{2}\}$\footnote{The function $G_\lambda^{(d)}$ is continuous at $\text{Im}(\lambda)=\pm\frac{d-1}{2}$, and its value at that point is equal to the limit from below and from above of equation (\ref{eq:dimred:compl}).}, can be written as
\begin{shaded}
\begin{equation}
G_{\lambda}^{(d)}(\sigma)=
\int_{\mathbb{R}}d\nu\ \varrho_{\lambda}^{\mathcal{P}}(\nu)G^{(1)}_\nu(\sigma)+\left[\Theta\left(\text{Im}(\lambda)-\frac{d-1}{2}\right)\varrho^{\mathcal{C}}_\lambda\, G^{(1)}_{\lambda-i\frac{d-1}{2}}(\sigma)+(\lambda\leftrightarrow-\lambda)\right]
\label{eq:dimred:compl}
\end{equation}
\end{shaded}
\noindent where $\Theta$ is a step function and
\begin{equation}
    \varrho^{\mathcal{C}}_\lambda=\frac{\pi^{\frac{1-d}{2}}\Gamma(-i\lambda)}{\Gamma\left(-i\lambda-\frac{d-1}{2}\right)}\,.
\end{equation}
The density $\varrho_{\lambda}^{\mathcal{P}}(\nu)$ is given by  eq.\,(\ref{eq:rhonewpaper}).
We can thus state that a dS$_{d+1}$ free propagator in the principal or complementary series only includes principal series and  at most one UIR in the complementary series, when reduced to dS$_2$. Importantly, the spectral densities are positive in this reduction, so that indeed the property of having a positive series expansion in the $\rho$ variable is inherited by the higher dimensional propagators. This concludes the proof of proposition \ref{proposition:rhoexp}. 

Before moving to the discussion section, let us make some remarks about this decomposition:
\begin{itemize}
    \item In the case of $d=1$, eq.\,\eqref{eq:rhonewpaper} simplifies to
	\begin{equation}
		\begin{split}
			\varrho_\lambda^{\mathcal{P}}(\nu)=\begin{cases}
				\frac{1}{2}\delta(\nu+\lambda)+\frac{1}{2}\delta(\nu-\lambda),&\lambda\in\mathbb{R}, \\
				0,&\lambda\notin\mathbb{R}. \\
			\end{cases}
		\end{split}
	\end{equation}
	Subsequently, eqs.\,\eqref{eq:dimred1} and \eqref{eq:dimred:compl} become the trivial equation $$G_{\lambda}^{(1)}(\sigma)=G_{\lambda}^{(1)}(\sigma).$$
 \item The absence of $SO(2,1)$ discrete series in the dimensional reduction \eqref{eq:dimred:compl} has a purely group theoretical explanation. More precisely, given a scalar principal or complementary series representation $\CF_\Delta$ of $SO(d+1,1)$, it can be shown that the restriction of $\CF_\Delta$ to the $SO(d,1)$ subgroup consists of principal and complementary series of $SO(d,1)$. The detailed proof of this proposition is given in appendix \ref{Groupproof}.
 \item In appendix \ref{subsec:analyticcontinuation}, we derive (\ref{eq:dimred:compl}) in a more rigorous way, being careful about the analytic continuation and detailing every step.
 \item In appendix \ref{subsec:flatspacelimit}, we show that sending the radius of de Sitter to infinity, these decompositions reduce to their correct analogues in flat space. In particular, in the flat space limit, the poles of $\rho^{\mathcal P}_\lambda(\nu)$ condense and form a branch cut.
\end{itemize}

\section{Discussion}\label{discuss}
The main outcome of this work is proposition \ref{proposition:rhoexp}, stating that free propagators in de Sitter have a series expansion in the $\rho$ variable which has only positive coefficients, and corollaries \ref{corollary:nonpertrho} and \ref{corollary:analyticity}, which state non-perturbatively that assuming a K\"allén-Lehmann decomposition of the form (\ref{eq:kallenlehmann}), the Wightman two-point function $G_{\mathcal{O}}(\sigma)$ of any scalar operator $\mathcal{O}$ has a positive series expansion in $\rho$ and is analytic in the ``maximal analyticity" domain, corresponding to $\sigma\in\mathbb{C} \backslash [1,\infty)$. Finally, we elaborated on the fact that Wick rotations between the sphere, de Sitter and EAdS happen through paths that are included in this domain of analyticity. Here we mention some remaining open questions that would be interesting to explore in the future 
\begin{itemize}
    \item What is the physical meaning of the radial variable $\rho$? Usually, when an observable can be expanded as a sum of positive terms, there is a conceptual meaning to the expansion, and a physical principle dictating the positivity. For example, for CFT four-point functions, the coefficients of the $\rho$ expansion have the physical meaning of the inner products of the states in the Hilbert space, so the positivity of the coefficients naturally follows from positivity of the inner product \cite{Hogervorst:2013sma}. 
    
    At this moment, for de Sitter QFT we lack the intuition to explain why, in simple terms and beyond the mathematical proof we exposed in section \ref{Proofsection}, the coefficients in (\ref{eq:rhoexp}) need to be positive.
    \item What is the analytic structure of a two-point function beyond the first sheet, in general? The hypergeometric functions which appear as blocks in the K\"allén-Lehmann decomposition (\ref{eq:kallenlehmann}) have infinite sheets, which are accessed by crossing the cut. It would be interesting to understand whether their analytic structure is inherited by two-point functions non-perturbatively even beyond the first sheet.
    \item The generalization of the results presented in this paper to the case of two-point functions of operators with spin is not completely trivial. In fact, in the index-free formalism of \cite{loparco2023kallenlehmann,schaub2023spinors,Schaub:2023scu}, free propagators of spinning fields are combinations of scalar propagators multiplied by polynomials of $\{\sigma,(Y_1\cdot W_2)(Y_2\cdot W_1),(W_1\cdot W_2)\}$, where $W_1$ and  $W_2$ are auxiliary vectors encoding the spin of fields at $Y_1$ and $Y_2$ respectively\footnote{More precisely, $W_k$ ($k=1,2$) is a tangential and null vector in the embedding space, satisfying $Y_k\cdot W_k = W_k\cdot W_k =0$.}. 
    Since $\sigma=\frac{8\rho}{(1+\rho)^2}-1$, it is not immediate to prove analytically that spinning two-point functions also have a positive series expansion in the radial variable. Nevertheless, let us mention that numerical checks suggest that the coefficients of the tensor structures $\{(Y_1\cdot W_2)(Y_2\cdot W_1),(W_1\cdot W_2)\}$ for free fields of spin $1$ and $2$ do indeed have a positive $\rho$ expansion.
    \item Is it possible to leverage the positivity of the series coefficients proved in corollary \ref{corollary:nonpertrho} in a numerical setup to constrain observables in QFT in dS? In an upcoming work \cite{loparco2024rg} we study RG flows in dS and derive a sum rule which extracts the central charge $c_T^{\text{UV}}$ of the CFT in the UV fixed point of the flow defined by a given QFT. The sum rule specifically relates the central charge to an integral over the bulk two-point function of the trace of the stress tensor $\Theta$. Using corollary \ref{corollary:nonpertrho}, we can thus write the UV central charge as a sum over positive coefficients. Is it possible to find an independent physical constraint on the coefficients of the $\rho$ expansion of $\Theta$ and find a universal minimum to $c_T^{\text{UV}}$?
    \item What is the analytic structure of an $n$-point function? In flat space, time translation symmetry,
    reflection positivity and polynomial boundedness are enough to prove analyticity for higher-point functions. In de Sitter and on the sphere, instead, there is no time translation symmetry. The approach we take in this paper is based on the structure of the K\"allén-Lehmann representation, and thus only apply to two-point functions. Currently we do not know how to prove the analyticity of higher-point functions. 
\end{itemize}

\section*{Acknowledgements}
We are grateful to Tarek Anous, Frederik Denef, Victor Gorbenko, Shota Komatsu, Mehrdad Mirbabayi, Joao Penedones, Guilherme  Pimentel, Fedor Popov, Kamran Salehi Vaziri, Antoine Vuignier and Alexander Zhiboedov for useful discussions. ML and JQ thank CERN for hospitality during the conference ``Cosmology, Quantum Gravity, and Holography: the Interplay of Fundamental Concepts". ML is supported by the Simons Foundation grant 488649 (Simons Collaboration on the Nonperturbative Bootstrap) and the Swiss National Science Foundation through the project
200020\_197160 and through the National Centre of Competence in Research SwissMAP. JQ is supported by Simons Collaboration on Confinement and QCD Strings and by the National Centre of Competence in Research SwissMAP. ZS is supported by the US National Science Foundation under Grant No. PHY2209997 and the Gravity Initiative at Princeton University.

\appendix

\section{No exceptional or discrete series states in scalar two-point functions}
\label{sec:noexcep}
Let us elaborate on the absence of contributions from the exceptional series type I in (\ref{eq:kallenlehmann}) (the arguments would be analogous for the discrete series in dS$_2$. See appendix \ref{subsec:reviewrep} for a quick review of all scalar UIRs). When going through the derivation of the K\"allén-Lehmann decomposition in \cite{loparco2023kallenlehmann}, it was argued that the solutions to the Casimir equation for objects such as
\begin{equation}
    \langle\Omega|\mathcal{O}(Y_1)\mathbb{1}_{\mathcal{V}_{p,0}}\mathcal{O}(Y_2)|\Omega\rangle
    \label{eq:idp}
\end{equation}
are either growing polynomially at infinite separation, or have cuts for space-like two-point configurations (specifically at $\sigma\in(-\infty,-1]$), where $\mathbb{1}_{\mathcal{V}_{p,0}}$ denotes a projector to the UIR $\mathcal{V}_{p,0}$. From the point of view of QFT in de Sitter we are forced to exclude the contributions which grow polynomially, while we cannot completely exclude the possibility that contributions which diverge at $\sigma=-1$ associated to different $p$ conspire to cancel the overall singularity, since the sign of the divergence depends on $p$. In other words, we cannot rigorously exclude the possibility that some very complicated operator $\mathcal{O}$ creates states in the exceptional series that sum up to a physically admissible two-point function. 

Nevertheless, there is no example in the literature of a scalar two-point function which includes discrete or exceptional series states in its K\"allén-Lehmann representation. In \cite{loparco2023kallenlehmann}, scalar two-point functions were studied in CFT, weakly coupled $\phi^4$ theory and composite operators in free theory. All of these examples only include principal and complementary series contributions. The most striking case is probably that of the two-point function of $\phi^2$, where $\phi$ is a free massive scalar. In \cite{Pukan, Repka:1978, penedones2023hilbert}, it was shown that the decomposition of the tensor product of two states in the principal series in dS$_2$ includes states in the discrete series. Nevertheless, the K\"allén-Lehmann decomposition of $\langle\phi^2(Y_1)\phi^2(Y_2)\rangle$ does not show the appearance of any such state. Inspired by the plethora of examples and by the required conspiracy to cancel unphysical singularities in de Sitter, we thus phrase the following conjecture:
\begin{shaded}
\noindent\textbf{Conjecture:} \emph{In a unitary QFT in de Sitter, no scalar local operator $\mathcal{O}(Y)$, acting on the Bunch-Davies vacuum, can create states in the exceptional series $\mathcal{V}_{p,0}$ in dS$_{d+1}$ and in the discrete series $\mathcal{D}^\pm_p$ in dS$_2$.}
\end{shaded}
\noindent For de Sitter QFT, correlation functions in the Bunch-Davies vacuum are by definition analytic continuations of their analogues on the sphere. Thus the above conjecture can be rephrased as follows
\begin{shaded}
\noindent\textbf{Conjecture:} \emph{Consider a scalar two-point function on $S^{d+1}$ that is regular when the two points are not coincident, $SO$($d$+2) invariant and is reflection positive. Then, it has a K\"allén-Lehmann representation of the form (\ref{eq:kallenlehmann}) and contains no representations in the exceptional series $\mathcal{V}_{p,0}$ in $S^{d+1}$ and in the discrete series $\mathcal{D}^\pm_p$ in $S^2$.}
\end{shaded}
\noindent Let us make some remarks about these conjectures
\begin{itemize}
    \item In contrast with scalars, operators with spin \emph{can} create such states. An example is the CFT conserved current that was 
    previously explored in section 5.2.2 in \cite{loparco2023kallenlehmann}, which in dS$_2$ creates states in the discrete series $\mathcal{D}_1^\pm$. The general statement is that a spin $J$ operator can create discrete series states with $p=1,\ldots,J$. The blocks in that case will decay at large distances and be free of branch points at space-like separation.
    \item In \cite{anninos2023discreet}, the authors show that it seems to be possible to construct scalar two-point functions that decompose into states in $\mathcal{D}^\pm_{p}$ in dS$_2$. 
    In their construction, they subtract an SO$(3)$-invariant singular term, which renders the modified two-point function well-behaved at the antipodal singularity ($\sigma=-1$). Nevertheless, it is important to note that the modified two-point functions do not satisfy the condition of our conjecture because they lack reflection positivity on the sphere.
    
    %For now, the only available examples of scalar two-point functions with discrete series contributions involve gauge-dependent operators and are thus not truly physical observables. Nevertheless, the authors claim they can find gauge-invariant two-point functions with this property.
    \item The presence of type II exceptional series  (denoted as $\mathcal{U}_{s,t}$ in \cite{Sun:2021thf}) and higher dimensional discrete series in the K\"allén-Lehmann decomposition of scalar two-point functions in dS$_{d+1}$ is directly forbidden by symmetry (see \cite{loparco2023kallenlehmann} for more discussions on this), and is thus \emph{not} a conjecture.
\end{itemize}
\section{Computing $\sigma$-variable for symmetric two-point configurations}\label{app:computerho}
In this appendix, we would like to compute the $\sigma$-variable for the following symmetric two-point configurations:
\begin{equation}\label{eq:appcond1}
    Y_1=Y_2^*,
\end{equation}
where $Y_1$ and $Y_2$ is in the complex embedding space $\mathbb{C}^{1,d+1}$ and satisfying
\begin{equation}\label{eq:appcond2}
    Y_1^2=1,\quad\mathrm{Im}(Y_1)\in V_+.
\end{equation}
Here we have set the de Sitter radius $R=1$ for convenience.

In this case, we have
\begin{equation}\label{eq:appinner}
    \sigma\equiv Y_1\cdot Y_2=(\mathrm{Re}(Y_1))^2+(\mathrm{Im}(Y_1))^2
\end{equation}
Expanding the first condition of \eqref{eq:appcond2}, we get
\begin{equation}
    (\mathrm{Re}(Y_1))^2-(\mathrm{Im}(Y_1))^2+i(\mathrm{Re}(Y_1)\cdot\mathrm{Im}(Y_1))=1,
\end{equation}
so
\begin{equation}\label{eq:appconstraint}
    (\mathrm{Re}(Y_1))^2-(\mathrm{Im}(Y_1))^2=1,\qquad\mathrm{Re}(Y_1)\cdot\mathrm{Im}(Y_1)=0.
\end{equation}
The second condition of \eqref{eq:appcond2} says that $\mathrm{Im}(Y_1)$ is time-like, so $(\mathrm{Im}(Y_1))^2<0$ and $\mathrm{Re}(Y_1)$ is either space-like or equal to zero, i.e., $\left(\mathrm{Re}(Y_1)\right)^2\geqslant0$ (as a consequence of the second equation of \eqref{eq:appconstraint}). Therefore, by \eqref{eq:appinner} we have
\begin{equation}\label{eq:rhorangeinsym}
\begin{split}
    -1\leqslant \sigma<1.
\end{split}
\end{equation}
The lower bound follows from $\sigma=2(\mathrm{Re}(Y_1))^2-1\geqslant-1$, and is saturated when $\mathrm{Re}(Y_1)=0$. The upper bound follows from $\sigma=1+2(\mathrm{Im}(Y_1))^2<1$, and can be approached by taking the limit $\mathrm{Im}(Y_1)^2\rightarrow0$. This two-sided bound tells us that for configurations satisfying conditions \eqref{eq:appcond1} and \eqref{eq:appcond2}, the range of $\sigma$ is exactly the same as the range of the one on the Euclidean sphere. 

Now let us do the explicit computation of $\sigma$ in two coordinate systems: global coordinates and planar coordinates.

In global coordinates, we have
\begin{equation}\label{Y1}
    Y_1^0=\sinh(t_1+i\theta_1),\quad Y_1^a=\cosh(t_1+i\theta_1)\,\Omega_1^a~. 
\end{equation}
For the purpose of Wick rotation from Euclidean sphere to dS, let us focus on the regime with real $t_1$, $\theta_1$ and $\Omega_1$, with the extra constraint $0<\theta_1\leqslant\frac{\pi}{2}$. Then
\begin{equation}\label{eq:rhoinglobal}
    \sigma\equiv Y_1\cdot Y_2=Y_1\cdot Y_1^*=\cos(2\theta_1).
\end{equation}
We see that $\sigma$ only depends on $\theta_1$ in this case. 

In planar coordinates we have
\begin{equation}
    Y_1^0=\frac{(\eta_1-iz_1)^2-\mathbf{y}^2-1}{2(\eta_1-iz_1)},\quad Y_1^i=-\frac{y^i_1}{\eta_1-iz_1},\quad Y^{d+1}=\frac{(\eta_1-iz_1)^2-\mathbf{y}_1^2+1}{2(\eta_1-iz_1)}.
\end{equation}
For the purpose of Wick rotation from EAdS to dS, let us focus on the regime with real $\eta_1$, $z_1$ and $\mathbf{y}_1$, with the extra constraints $-\eta_1,z_1>0$. Then
\begin{equation}
    \sigma\equiv Y_1\cdot Y_2\equiv Y_1\cdot Y_1^*=\frac{\eta_1^2-z_1^2}{\eta_1^2+z_1^2}.
\end{equation}

\section{Conformal blocks in CFT\texorpdfstring{$_1$}{1}}\label{app:CB}
We consider conformal field theory in the one-dimensional Euclidean space (CFT$_1$). A CFT$_1$ is defined by a collection of correlation functions of so-called \emph{primary operators}
$$\braket{\mathcal{O}_1(\tau_1)\mathcal{O}_2(\tau_2)\ldots\mathcal{O}_n(\tau_n)}.$$
Given any global conformal transformation
\begin{equation}
	\begin{split}
		f(\tau)=\frac{a\tau+b}{c\tau+d}, \quad\left(\begin{matrix}
			a & b\\
			c & d \\
		\end{matrix}\right)\in GL(2;\mathbb{R}),
	\end{split}
\end{equation}
these correlation functions satisfy conformal invariance, meaning that
\begin{equation}
	\begin{split}
		\braket{\mathcal{O}_1(\tau_1)\mathcal{O}_2(\tau_2)\ldots\mathcal{O}_n(\tau_n)}=&\braket{\mathcal{O}_1'(\tau_1)\mathcal{O}_2'(\tau_2)\ldots\mathcal{O}_n'(\tau_n)}, \\
		\mathcal{O}_i'(\tau):=&\left[\partial_\tau f(\tau)\right]^{-h_i}\mathcal{O}_i\left(f^{-1}(\tau)\right).\\
	\end{split}
\end{equation}
A CFT$_1$ is determined by the following data:
\begin{itemize}
	\item (Spectrum) The scaling dimensions $h_i$ of the primary operators. We choose a basis of primary operators $\mathcal{O}_i$ with the following normalization:
	\begin{equation}\label{eq:CFT2pt}
		\begin{split}
			\braket{\mathcal{O}_i(\tau_1)\mathcal{O}_j(\tau_2)}=\frac{\delta_{ij}}{\abs{\tau_{12}}^{2h_i}},
		\end{split}
	\end{equation}
    where $\tau_{ij}\equiv\tau_i-\tau_j$.
    \item (Dynamics) Three-point functions of primary operators:
    \begin{equation}\label{eq:CFT3pt}
    	\begin{split}
    		\braket{\mathcal{O}_i(\tau_1)\mathcal{O}_j(\tau_2)\mathcal{O}_k(\tau_3)}=\frac{C_{ijk}}{\abs{\tau_{12}}^{h_i+h_j-h_k}\abs{\tau_{23}}^{h_j+h_k-h_i}\abs{\tau_{13}}^{h_i+h_k-h_j}}.
    	\end{split}
    \end{equation}
\end{itemize}
In other words, a CFT$_1$ is determined by a collection of quantum numbers $\left\{h_i\right\}$ and ``couplings" $\left\{C_{ijk}\right\}$. In principle, all the higher-point functions can be computed from these data using the operator product expansion (OPE):
\begin{equation}\label{eq:OPE}
	\begin{split}
		\mathcal{O}_i(\tau_1)\mathcal{O}_j(\tau_2)=\sum\limits_{k}A_{ijk}(\tau_1,\tau_2,\tau_0,\partial_0)\mathcal{O}_k(\tau_0),
	\end{split}
\end{equation}
where the sum is over all primary operators, and $A_{ijk}$ is fully determined by $\left\{h_i\right\}$ and $\left\{C_{ijk}\right\}$. For the OPE to be convergent, $\tau_0$ is chosen such that $\abs{\tau_{1}-\tau_0}$ and $\abs{\tau_{2}-\tau_0}$ are smaller than other $\abs{\tau_{i}-\tau_0}$'s in the correlation function. 

For the purpose of this work, let us consider the four-point function of primary operators. By conformal invariance, it has the following form:
\begin{equation}\label{CFT1d:4pt}
	\begin{split}
		\braket{\mathcal{O}_1(\tau_1)\mathcal{O}_2(\tau_2)\mathcal{O}_3(\tau_3)\mathcal{O}_4(\tau_4)}=\frac{1}{\abs{\tau_{12}}^{h_1+h_2}\abs{\tau_{34}}^{h_3+h_4}}\abs{\frac{\tau_{14}}{\tau_{24}}}^{h_{21}}\abs{\frac{\tau_{14}}{\tau_{13}}}^{h_{34}}g_{1234}(z),
	\end{split}
\end{equation}
where $\tau_{ij}\equiv\tau_i-\tau_j$, $h_{ij}\equiv h_i-h_j$, and $z$ represents the cross-ratio defined as
\begin{equation}
	\begin{split}
		z:=\frac{\tau_{12}\tau_{34}}{\tau_{13}\tau_{24}}.
	\end{split}
\end{equation}
By OPE \eqref{eq:OPE}, the four-point function $\langle \mathcal{O}_1(\tau_1)\mathcal{O}_2(\tau_2)\mathcal{O}_3(\tau_3)\mathcal{O}_4(\tau_4)\rangle$ has an expansion in terms of conformal partial waves. Schematically, we have:
\begin{equation}\label{eq:partialwaveexp}
	\begin{split}
		\braket{\mathcal{O}_1(\tau_1)\mathcal{O}_2(\tau_2)\mathcal{O}_3(\tau_3)\mathcal{O}_4(\tau_4)}=\sum\limits_{k}
		\begin{tikzpicture}[baseline={(0,0)},circuit logic US]
			%lines
			\draw[black, thick] (-0.5,1) -- (0,0);
			\draw[black, thick] (-0.5,-1) -- (0,0);
			\draw[black, thick] (0,0) -- (2,0);
			\draw[black, thick] (2,0) -- (2.5,1);
			\draw[black, thick] (2,0) -- (2.5,-1);
			%external operators
			\draw (-0.5,1) node [anchor=east]{$\mathcal{O}_1$};
			\draw (-0.5,-1) node [anchor=east]{$\mathcal{O}_2$};
			\draw (2.5,1) node [anchor=west]{$\mathcal{O}_3$};
			\draw (2.5,-1) node [anchor=west]{$\mathcal{O}_4$};
			%internal operator
			\draw (1,0) node [anchor=south]{$\mathcal{O}_k$};
		\end{tikzpicture},
	\end{split}
\end{equation}
where the sum is over all primary operators. Each term in the sum \eqref{eq:partialwaveexp} is conformally invariant, and takes on a similar expression to \eqref{CFT1d:4pt}:
\begin{equation}\label{eq:partialwave}
	\begin{split}
		\begin{tikzpicture}[baseline={(0,0)},circuit logic US]
			%lines
			\draw[black, thick] (-0.5,1) -- (0,0);
			\draw[black, thick] (-0.5,-1) -- (0,0);
			\draw[black, thick] (0,0) -- (2,0);
			\draw[black, thick] (2,0) -- (2.5,1);
			\draw[black, thick] (2,0) -- (2.5,-1);
			%external operators
			\draw (-0.5,1) node [anchor=east]{$\mathcal{O}_1$};
			\draw (-0.5,-1) node [anchor=east]{$\mathcal{O}_2$};
			\draw (2.5,1) node [anchor=west]{$\mathcal{O}_3$};
			\draw (2.5,-1) node [anchor=west]{$\mathcal{O}_4$};
			%internal operator
			\draw (1,0) node [anchor=south]{$\mathcal{O}_k$};
		\end{tikzpicture}
		=\frac{C_{12k}C_{34k}}{\abs{\tau_{12}}^{h_1+h_2}\abs{\tau_{34}}^{h_3+h_4}}\abs{\frac{\tau_{14}}{\tau_{24}}}^{h_{21}}\abs{\frac{\tau_{14}}{\tau_{13}}}^{h_{34}}g_{1234,h_k}(z).
	\end{split}
\end{equation}
Here, $C_{12k}$ and $C_{34k}$ are the constant factors of three-point functions \eqref{eq:CFT3pt}. By \eqref{CFT1d:4pt}, \eqref{eq:partialwaveexp} and \eqref{eq:partialwave}, we express the conformally invariant part $g_{1234}(z)$ of the four-point function as a sum of conformal blocks:
\begin{equation}\label{CFT:CBexp}
	\begin{split}
		g_{1234}(z)=\sum\limits_{k}C_{12k}C_{34k}\,g_{1234,h_k}(z).
	\end{split}
\end{equation}
Here again, the sum is over primary operators $\mathcal{O}_k$. The conformal block, $g_{1234,h_k}(z)$ , is uniquely determined by five quantum numbers: $h_i$ ($i=1,2,3,4$) and $h_k$. 

Now let us derive the explicit form of $g_{1234,h}(z)$. Let $\mathcal{O}$ be a primary operator. We choose $\tau_0=0$ in \eqref{eq:OPE}, then the $\mathcal{O}$-channel of \eqref{eq:OPE} can be written as
\begin{equation}\label{CFT1d:OPE}
	\begin{split}
		\mathcal{O}_1(\tau_1)\mathcal{O}_2(\tau_2)\stackrel{\mathcal{O}}{=}C_{12\mathcal{O}}\sum\limits_{n=0}^{\infty}B_n(\tau_1,\tau_2)\partial^n\mathcal{O}(0).
	\end{split}
\end{equation}
Here, $\stackrel{\mathcal{O}}{=}$ signifies the contribution from $\mathcal{O}$ and its derivatives. 

To compute the conformal block $g_{1234,h}$ using OPE, we first need to compute the OPE kernel $B_n(\tau_1,\tau_2)$. The computation of $B_n(\tau_1,\tau_2)$ can be done by analyzing the three-point function $\braket{\mathcal{O}_1(\tau_1)\mathcal{O}_2(\tau_2)\mathcal{O}(L)}$ in the regime where 
\begin{equation}
	\begin{split}
		\abs{\tau_1},\abs{\tau_2}<L.
	\end{split}
\end{equation}
Let $h$ denote the scaling dimension of $\mathcal{O}$. By \eqref{eq:CFT3pt}, the three-point function has the following expansion:
\begin{equation}\label{CFT:3pt1}
	\begin{split}
		\braket{\mathcal{O}_1(\tau_1)\mathcal{O}_2(\tau_2)\mathcal{O}(L)}\equiv&\frac{C_{12\mathcal{O}}}{\abs{\tau_{12}}^{h_1+h_2-h}(L-\tau_1)^{h+h_{12}}(L-\tau_2)^{h-h_{12}}} \\
		=&C_{12\mathcal{O}}\sum\limits_{n=0}^{\infty}\frac{L^{-2h-n}}{\abs{\tau_{12}}^{h_1+h_2-h}}\sum\limits_{k+l=n}\frac{(h+h_{12})_k(h-h_{12})_{l}\,\tau_1^k\,\tau_2^l}{k!\,l!}.  \\
	\end{split}
\end{equation}
On the other hand, by \eqref{eq:CFT2pt} and (\ref{CFT1d:OPE}), the three-point function takes on the following form:
\begin{equation}\label{CFT:3pt2}
	\begin{split}
		\braket{\mathcal{O}_1(\tau_1)\mathcal{O}_2(\tau_2)\mathcal{O}(L)}=&C_{12\mathcal{O}}\sum\limits_{n=0}^{\infty}B_n(\tau_1,\tau_2)\braket{\partial^n\mathcal{O}(0)\mathcal{O}(L)} \\
		=&C_{12\mathcal{O}}\sum\limits_{n=0}^{\infty}B_n(\tau_1,\tau_2)\frac{(2h)_n}{L^{2h+n}}. \\
	\end{split}
\end{equation}
Matching the coefficients of (\ref{CFT:3pt1}) and (\ref{CFT:3pt2}) in terms of the expansion in powers of $1/L$, we arrive at the following expression for the OPE kernel:
\begin{equation}\label{OPEkernel}
	\begin{split}
		B_n(\tau_1,\tau_2)=\frac{1}{(2h)_n\abs{\tau_{12}}^{h_1+h_2-h}}\sum\limits_{k+l=n}\frac{(h+h_{12})_k(h-h_{12})_{l}\,\tau_1^k\,\tau_2^l}{k!\,l!}.  \\
	\end{split}
\end{equation}
Inserting the expressions from (\ref{CFT1d:OPE}) and (\ref{OPEkernel}) into the four-point function (\ref{CFT1d:4pt}) and its conformal block expansion (\ref{CFT:CBexp}), we obtain the following series representation for the 1D conformal block:
\begin{equation}\label{CB:exp0}
	\begin{split}
		g_{1234,h}=&\left(\frac{1}{\abs{\tau_{12}}^{h_1+h_2}\abs{\tau_{34}}^{h_3+h_4}}\abs{\frac{\tau_{14}}{\tau_{24}}}^{h_{21}}\abs{\frac{\tau_{14}}{\tau_{13}}}^{h_{34}}\right)^{-1}\sum\limits_{n=0}^{\infty}B_n(\tau_1,\tau_2)\braket{\partial^n\mathcal{O}(0)\mathcal{O}_3(x_3)\mathcal{O}_4(x_4)} \\
		=&\abs{\tau_{12}}^{h}\abs{\tau_{34}}^{h_3+h_4}\abs{\frac{\tau_{14}}{\tau_{24}}}^{h_{12}}\abs{\frac{\tau_{14}}{\tau_{13}}}^{h_{43}}\sum\limits_{n=0}^{\infty}\sum\limits_{k+l=n}\frac{(h+h_{12})_k(h-h_{12})_{l}\,\tau_1^k\,\tau_2^l}{k!\,l!\,(2h)_n} \\
		&\times\sum\limits_{k'+l'=n}\frac{n!}{k'!\,l'!\,\abs{\tau_{34}}^{h_3+h_4-h}}\left(\partial^{k'}_\tau\frac{1}{\abs{\tau-\tau_3}^{h+h_{34}}}\right)\left(\partial^{l'}_\tau\frac{1}{\abs{\tau-\tau_4}^{h-h_{34}}}\right)\Big{|}_{\tau=0} \\
		=&\abs{\tau_{12}}^{h}\abs{\tau_{34}}^{h}\abs{\frac{\tau_{14}}{\tau_{24}}}^{h_{12}}\abs{\frac{\tau_{14}}{\tau_{13}}}^{h_{43}}\sum\limits_{n=0}^{\infty}\sum\limits_{k+l=n}\frac{(h+h_{12})_k(h-h_{12})_{l}\,\tau_1^k\,\tau_2^l}{k!\,l!\,(2h)_n} \\
		&\times\sum\limits_{k'+l'=n}\frac{n!\,\sign(\tau_3)^{k'}\,\sign(\tau_4)^{l'}\,(h+h_{34})_{k'}\,(h-h_{34})_{l'}}{k'!\,l'!\,\abs{\tau_3}^{h+h_{34}+k'}\,\abs{\tau_4}^{h-h_{34}+l'}}.\\
	\end{split}
\end{equation}
The above expansion of $g_{1234,h}$ looks complicated. However, we know that it is conformally invariant. Consequently, selecting different yet conformally equivalent four-point configurations
\begin{equation}
	\begin{split}
		(\tau_1,\tau_2,\tau_3,\tau_4)\ \mathrm{and}\ (\tau_1',\tau_2',\tau_3',\tau_4'),\quad\tau_i'=\frac{a\tau_i+b}{c\tau_i+d},\quad\left(\begin{matrix}
			a & b\\
			c & d \\
		\end{matrix}\right)\in GL(2;\mathbb{R}),
	\end{split}
\end{equation}
leads to the same value of $g_{1234,h}$. By choosing some specific configuration, \eqref{CB:exp0} may reduce to a much simpler form. Below we will introduce two such configurations.

The first configuration is
\begin{equation}
	\begin{split}
		\tau_1=0,\quad\tau_2=z,\quad\tau_3=1,\quad\tau_4=\infty.
	\end{split}
\end{equation}
In this case, the cross-ratio (see \eqref{def:crossratio}) is exactly equal to $z$. By (\ref{CB:exp0}), we have
\begin{equation}\label{CB:x}
	\begin{split}
		g_{1234,h}(z)=|z|^{h}{}_{2}{F}_{1}(h-h_{12},h+h_{34};2h;z).
	\end{split}
\end{equation}
The second configuration is
\begin{equation}
	\begin{split}
		\tau_1=\rho,\quad\tau_2=-\rho,\quad\tau_3=-1,\quad\tau_4=1.
	\end{split}
\end{equation}
In this case, the cross-ratio is given by
\begin{equation}
	\begin{split}
		z=\frac{4\rho}{(1+\rho)^2}.
	\end{split}
\end{equation}
Then, (\ref{CB:exp0}) gives
\begin{equation}
	\begin{split}
		g_{1234,h}\left(\frac{4\rho}{(1+\rho)^2}\right)=&\abs{4\rho}^h\abs{\frac{1-\rho}{1+\rho}}^{h_{12}-h_{34}}\sum\limits_{n=0}^{\infty}\frac{n!}{(2h)_n}a_n(h,h_{12})a_n(h,-h_{34})\rho^n, \\
	\end{split}
\end{equation}
where the factor $a_n$ is defined as
\begin{equation}\label{def:an}
	\begin{split}
		a_n(h,\delta):=&\sum\limits_{k=0}^{n}\frac{(-1)^k\,(h-\delta)_{k}\,(h+\delta)_{n-k}}{k!\,(n-k)!}. \\
	\end{split}
\end{equation}
Through a comparison of (\ref{CB:x}) and (\ref{def:an}), an insightful identity emerges for $\rho\in(-1,1)$:
\begin{equation}\label{identity:2F1}
	\begin{split}
		{}_{2}{F}_{1}\left(h-\delta_1,h-\delta_2;2h;\frac{4\rho}{(1+\rho)^2}\right)=(1+\rho)^{2h}\left(\frac{1-\rho}{1+\rho}\right)^{\delta_1+\delta_2}\sum\limits_{n=0}^{\infty}\frac{n!}{(2h)_n}a_n(h,\delta_{1})a_n(h,\delta_{2})\rho^n. \\
	\end{split}
\end{equation}
Now let us argue that the domain of validity of \eqref{identity:2F1} can be extended to the open unit disc, i.e. $\abs{\rho}<1$. This can be seen by moving the $1\pm\rho$ prefactors to the left:
\begin{equation}\label{identity:2F1rewrite}
	\begin{split}
		(1+\rho)^{-2h}\left(\frac{1+\rho}{1-\rho}\right)^{\delta_1+\delta_2}{}_{2}{F}_{1}\left(h-\delta_1,h-\delta_2;2h;\frac{4\rho}{(1+\rho)^2}\right)=\sum\limits_{n=0}^{\infty}\frac{n!}{(2h)_n}a_n(h,\delta_{1})a_n(h,\delta_{2})\rho^n.
	\end{split}
\end{equation}
It is well-known that the hypergeometric function ${}_{2}{F}_{1}\left(a,b;c;z\right)$ is analytic in the domain $z\in\mathbb{C}\backslash[1,\infty)$. Mapping to $\rho$ coordinate via $z=\frac{4\rho}{(1+\rho)^2}$, this domain corresponds to $\abs{\rho}<1$. Therefore, the l.h.s. of \eqref{identity:2F1rewrite}, as a function of $\rho$, is analytic on the open unit disc. Consequently, it has an absolutely convergent power series expansion in terms of $\rho$, which is exactly the r.h.s. of \eqref{identity:2F1rewrite}. This justifies the validity of \eqref{identity:2F1} in the whole open unit disc $\abs{\rho}<1$.

\section{Dimensional reduction of de Sitter scalar free propagators}
\label{sec:dimreduction}

In this appendix, we show some details of the dimensional reduction of free propagators in dS$_{d+1}$ into free propagators in dS$_2$, which was employed in section \ref{subsection:compld>=2} to prove the positivity of the series expansion in the $\rho$ variable for complementary series propagators in dS$_{d+1}$.
\subsection{Barnes integral for the inversion formula}
\label{eq:barnesinversion}
Let us start from equation (\ref{rFF1}):
\begin{equation}
	\begin{aligned}
		\varrho^{\mathcal{P}}_{\lambda}(\nu)
		&=\frac{\Gamma(\frac{d}{2}\pm i\lambda)}{(4\pi)^{\frac{d-1}{2}}\Gamma(\pm i\nu)}\int_{0}^{\infty}d z\ \mathbf{F}\left(\frac{1}{2}+i\nu,\frac{1}{2}-i\nu;1;-z\right)\mathbf{F}\left(\frac{d}{2}+i\lambda,\frac{d}{2}-i\lambda;\frac{d+1}{2};-z\right)~,
	\end{aligned}
\end{equation}
To solve this integral, we first apply the Barnes integral representation to the first hypergeometric function, namely
\begin{align}\label{intF}
	\mathbf{F}\left(\frac{1}{2}+i\nu,\frac{1}{2}-i\nu;1;-z\right)= \frac{1}{\Gamma(\frac{1}{2}\pm i\nu)}\int_{c+i\mathbb R} \frac{ds}{2\pi i }\frac{\Gamma(\frac{1}{2}+s\pm i\nu)\Gamma(-s)}{\Gamma(1+s)} z^s~.
\end{align}
Here, it's essential to have $-\frac{1}{2}<c<0$ for the validity of the integral. Substituting (\ref{intF}) into (\ref{rFF1}) results in:
\begin{equation}\label{rFF2}
	\begin{split}
		\varrho^{\mathcal{P}}_{\lambda}(\nu)
		=&\frac{\nu\sinh(2\pi\nu)\Gamma(\frac{d}{2}\pm i\lambda)}{2^d\,\pi^{\frac{d+3}{2}}} \\
		&\times\int_{c+i\mathbb R} \frac{ds}{2\pi i }\frac{\Gamma(\frac{1}{2}\!+\!s\pm i\nu)\Gamma(-s)}{\Gamma(1+s)}
		\int_{0}^{\infty}d z\ \mathbf{F}\left(\frac{d}{2}\!+\!i\lambda,\frac{d}{2}\!-\!i\lambda;\frac{d\!+\!1}{2};-z\right)z^{s},
	\end{split}
\end{equation}
where the $z$ integral is the Mellin transformation of the hypergeometric function:
\begin{align}\label{MellinF}
	\int_{0}^{\infty}d z\ \mathbf{F}\left(\frac{d}{2}+i\lambda,\frac{d}{2}-i\lambda;\frac{d+1}{2};-z\right)z^{s} =  \frac{\Gamma(s+1)}{\Gamma(\frac{d}{2}\pm i\lambda)}\frac{\Gamma(\frac{d-2}{2}-s\pm i\lambda)}{\Gamma(\frac{d-1}{2}-s)}~.
\end{align}
This integral is well-defined when $\Re(\frac{d}{2}\pm i\lambda)> c+1$. Let's define $\Delta_\lambda = \frac{d}{2}+i\lambda$, and choose $\Re(\Delta_\lambda)\geqslant \Re(\bar\Delta_\lambda)$, where $\bar\Delta_\lambda=d-\Delta_\lambda$. Then this condition is equivalent to $\Re(\bar\Delta_\lambda)>c+1$. It's automatically satisfied for $\Delta_\lambda$ in the principal series when $d\geqslant 2$.\footnote{When $d=1$, it is clear that the $z$ integral in (\ref{intF}) gives delta function $\delta(\nu\pm \lambda)$, if $\lambda$ is real.} For $\Delta_\lambda$ in the complementary series, this condition cannot be met when $0<\Delta_\lambda< \frac{1}{2}$ or $0<\bar{\Delta}_\lambda< \frac{1}{2}$, which corresponds to the violation of the $L^2$-condition \eqref{eq:L2condition} we discussed earlier. For now, let us focus on the case where $\Re(\Delta_\lambda)>\frac{1}{2}$, and therefore, the contour choice is $-\frac{1}{2}<c<\Re(\bar\Delta_\lambda)-1$. By combining (\ref{rFF1}) and (\ref{MellinF}), we obtain:
\begin{equation}\label{version2}
	\begin{aligned}
		\varrho^{\mathcal{P}}_{\lambda}(\nu)
		=&\frac{\nu\sinh(2\pi\nu)}{2^d\,\pi^{\frac{d+3}{2}}}\int_{c+i\mathbb R} \frac{ds}{2\pi i }\frac{\Gamma(\frac{1}{2}+s\pm i\nu)\Gamma(\frac{d-2}{2}-s\pm i\lambda)\Gamma(-s)}{\Gamma(\frac{d-1}{2}-s)}~.
	\end{aligned}
\end{equation}
We want to go further and find an explicit expression for $\varrho^{\mathcal{P}}_{\lambda}(\nu).$ We will achieve this by closing the contour of integration on the right half of the complex $s$ plane.
First of all, notice that by Stirling's approximation, for large real $s$ we get that the integrand goes like $s^{-\frac{5-d}{2}}$. We thus can drop the arc at infinity when closing the contour to the right only if $d<3$. We will start with that assumption and eventually see that the final answer can be safely analytically continued in $d$. Summing over all residues, we obtain
\begin{equation}
	\begin{split}
		\varrho_\lambda^{\mathcal{P}}(\nu)=&\frac{\nu\Gamma(\frac{1}{2}-i\nu)\Gamma(\frac{d-1}{2}\!-\!i\nu\!\pm\! i\lambda)}{2^di\pi^{\frac{d+1}{2}}} \\
		&\times\left[\ _3F_2\left(\begin{matrix} \frac{1}{2}-i\nu, \bar\Delta_\lambda\!-\!\frac{1}{2}\!-\!i\nu, &\Delta_\lambda\!-\!\frac{1}{2}\!-\!i\nu \\ \frac{d}{2}-i\nu, & 1-2i\nu & \end{matrix};1\right)-(\nu\to-\nu)\right]~.
	\end{split}
\end{equation}
The two $_3F_2$ functions that appear in this difference are each divergent as $d\geq 3$, as predicted by the asymptotic behavior of the Mellin integrand. Fortunately, their difference is not divergent. This can be seen by setting the last argument of the hypergeometric functions to be $1-\epsilon$ and performing a series expansion around $\epsilon=0$. There are no poles in $\epsilon$, and the expression simplifies to:
\begin{shaded}
	\begin{equation}\label{eq:rhonew}
		\varrho_\lambda^{\mathcal{P}}(\nu)=\frac{\nu\sinh\pi\nu}{8\pi^{\frac{d+3}{2}}\Gamma\left(\frac{d-1}{2}\right)}\prod\limits_{\pm,\pm}\Gamma\left(\frac{d-1}{4}\pm i\frac{\nu}{2}\pm i\frac{\lambda}{2}\right)~.
	\end{equation}
\end{shaded}
\begin{remark}
	The expression \eqref{eq:rhonew} is consistent with condition \eqref{eq:L2condition}. When $\Im(\lambda)=\pm\frac{d-1}{2}$, the spectral density $\varrho_\lambda^{\mathcal{P}}(\nu)$ has poles in the real axis of $\nu$. This indicates a violation of the square-integrable condition.
\end{remark}
While the K\"allén-Lehmann decomposition formula \eqref{eq:dimred1} holds under the conditions $\abs{\Im(\lambda)}<\frac{d-1}{2}$ and $\sigma\in(-\infty,-1]$, its convergence has not been justified for the whole range of $\sigma$ that we are interested in: $\sigma\in\mathbb{C}\backslash[1,+\infty)$. In the next step, we will justify it using the explicit form of the spectral density, as given in eq.\,\eqref{eq:rhonew}.

Let us consider the same range of $\lambda$ as given in \eqref{eq:L2condition}, but with $\sigma\in\mathbb{C}\backslash[1,+\infty)$. In this regime of $\sigma$, we can establish the following bound for the integral \eqref{eq:dimred1}:
\begin{equation}\label{eq:integralbound}
	\begin{split}
		\abs{\int_{\mathbb{R}}d\nu\ \varrho_\lambda^{\mathcal{P}}(\nu)G_\nu^{(1)}(\sigma)}\leqslant\int_{\mathbb{R}}d\nu\ \abs{\varrho_\lambda^{\mathcal{P}}(\nu)}G_\nu^{(1)}(\sigma_*),
	\end{split}
\end{equation}
where $\sigma_*\in[-1,1)$ is defined by
\begin{equation}
	\frac{1+\sigma}{2}=\frac{4\rho}{(1+\rho)^2},\quad\frac{1+\sigma_*}{2}=\frac{4\abs{\rho}}{(1+\abs{\rho})^2}.
\end{equation}
This bound follows from the principal-series case in $d=1$ of proposition \ref{proposition:rhoexp}, which we have already proven in section \ref{subsection:princi}. 

Therefore, it suffices to show the convergence of \eqref{eq:dimred1} for $\sigma\in[-1,1)$. In this regime, we use the following upper bound for $\varrho_\lambda^{\mathcal{P}}(\nu)$ and $G_\nu^{(1)}(\sigma)$: 
\begin{equation}\label{eq:rhoGbound}
	\begin{split}
		\abs{\varrho_\lambda^{\mathcal{P}}(\nu)}\leq& A_\lambda(1+\abs{\nu})^{d-2}, \\
		\abs{G_\nu^{(1)}(\sigma)}\leq& B e^{-\left(2-\sqrt{2(1+\sigma)}\right)\nu}+\frac{C}{1-\sigma}\left(\frac{1+\sigma}{2}\right)^\nu, \\
	\end{split}
\end{equation}
where the constants $B$ and $C$ are finite, and $A_\lambda$ is finite for $\lambda$ is in the regime \eqref{eq:L2condition}. By \eqref{eq:rhoGbound}, for fixed $\sigma\in[-1,1)$ and $\lambda$ in the aforementioned regime, the integrand of \eqref{eq:dimred1} decays exponentially fast as $\nu$ goes to $\pm\infty$, ensuring the convergence of the integral. Furthermore, by \eqref{eq:integralbound} and \eqref{eq:rhoGbound}, the convergence of \eqref{eq:dimred1} holds uniformly in a small complex neighborhood of any fixed $\sigma\in\mathbb{C}\backslash[1,+\infty)$, as long as the closure of this small neighborhood does not intersect with the interval $[1,+\infty)$. Consequently, the integral \eqref{eq:dimred1} defines an analytic function of $\sigma$ on the single-cut plane $\mathbb{C}\backslash[1,+\infty)$. This justifies the validity of \eqref{eq:dimred1} in the regime
\begin{equation}\label{eq:dimred1condition}
	\abs{\Im(\lambda)}<\frac{d-1}{2},\quad\sigma\in\mathbb{C}\backslash[1,+\infty).
\end{equation}

\subsection{Analytic continuation in $\lambda$}
\label{subsec:analyticcontinuation}
In this subsection, we elaborate on some details from section \ref{subsubsec:case2} and explain in a more rigorous way to derive equation (\ref{eq:dimred:compl}).

To start, let us set a specific positive value, denoted as $a$, and focus on the analytic continuation of \eqref{eq:dimred1} from the domain
\begin{equation}\label{eq:domain1}
	\begin{split}
		\Re(\lambda)\in(-a,a),\quad \Im(\lambda)\in\left(\frac{d-1}{2}-\varepsilon,\frac{d-1}{2}\right),
	\end{split}
\end{equation}
where $\varepsilon$ is a very small positive number (say $\varepsilon=0.1$). According to the previous subsection, the well-definedness and analyticity of \eqref{eq:dimred1} are already established in this domain. To perform the analytic continuation from domain \eqref{eq:domain1}, we use two crucial observations:  (a) the spectral density $\varrho_\lambda^{\mathcal{P}}(\nu)$, as given in \eqref{eq:rhonew}, is a meromorphic function in both $\nu$ and $\lambda$, and (b) the free propagator $G^{(1)}_\nu(\sigma)$ is a meromorphic function in $\nu$. Based on these observations, we have the freedom to deform the integral contour of \eqref{eq:dimred1} from the real axis to a specific path composed of piecewise straight lines in the complex plane (see figure \ref{fig:contourdeformation} for a graphical representation):
\begin{figure}
     \centering
     \begin{subfigure}[b]{0.48\textwidth}
         \centering
         \includegraphics[width=\textwidth]{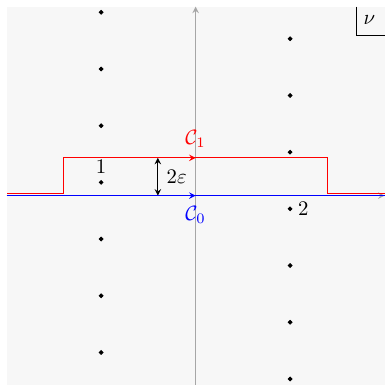}
         \caption{Poles of $\rho^{\mathcal P}_\lambda(\nu)$ when $\text{Im}(\lambda)\in(\frac{d-1}{2}-\varepsilon,\frac{d-1}{2})$}
         \label{fig:cont1}
     \end{subfigure}
     \hfill
     \begin{subfigure}[b]{0.48\textwidth}
         \centering
         \includegraphics[width=\textwidth]{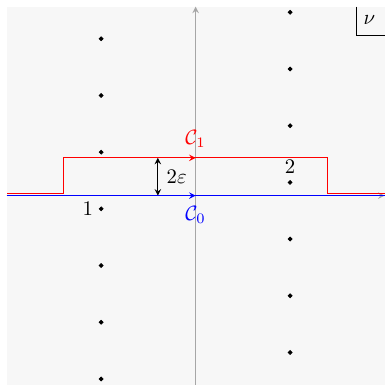}
         \caption{Poles of $\rho^{\mathcal P}_\lambda(\nu)$ when  $\text{Im}(\lambda)\in(\frac{d-1}{2},\frac{d-1}{2}+\varepsilon)$}
         \label{fig:cont2}
     \end{subfigure}
        \caption{The pole structure of the spectral density $\rho^{\mathcal P}_\lambda(\nu)$. Depending on the imaginary part of $\lambda$, the pole in between contours $\mathcal{C}_0$ and $\mathcal{C}_1$ is pole 1 or pole 2. The contour $\mathcal{C}_0$ runs over the real axis, and corresponds to the integral over the principal series in the K\"allén-Lehmann representation.}
        \label{fig:contourdeformation}
\end{figure}
\begin{equation}
	\begin{split}
		\mathcal{C}_0:&\quad-\infty\ \rightarrow\ +\infty,\quad (\mathrm{before}) \\
		\mathcal{C}_1:&\quad-\infty\ \rightarrow\ -a-1\ \rightarrow\ -a-1+i\frac{d-1}{2}+2i\varepsilon \\
		&\quad\rightarrow\ a+1+i\frac{d-1}{2}+2i\varepsilon\ \rightarrow\ a+1\ \rightarrow\ +\infty,\quad(\mathrm{after}). \\
	\end{split}
\end{equation}
By (\ref{eq:rhonew}), the spectral density $\varrho_\lambda^{\mathcal{P}}(\nu)$ has four sets of poles located at
\begin{equation}\label{eq:rhopoles}
	\nu=\pm \lambda\pm i\left(\frac{d-1}{2}+2n\right),\quad(n=0,1,2,\ldots),
\end{equation}
corresponding to the four Gamma functions in \eqref{eq:rhonew}. The differences between the contour integrals along $\mathcal{C}_0$ and $\mathcal{C}_1$ is determined by the residue at the pole $\nu=-\lambda+i\frac{d-1}{2}$:\footnote{In this case, the poles from $G_\nu^{(1)}$ do not contribute.}
\begin{equation}\label{eq:deform1}
	\begin{split}
		\int_{\mathcal{C}_0=\mathbb{R}}d\nu\ \varrho_\lambda^{\mathcal{P}}(\nu)G_\nu^{(1)}(\sigma)=&\int_{\mathcal{C}_1}d\nu\ \varrho_\lambda^{\mathcal{P}}(\nu)G_\nu^{(1)}(\sigma)+2\pi i\underset{\nu=-\lambda+i\frac{d-1}{2}}{\text{Res}}\left[\varrho^{\mathcal{P}}_\lambda(\nu)G_{\nu}^{(1)}(\sigma)\right] \\
		=&\int_{\mathcal{C}_1}d\nu\ \varrho_\lambda^{\mathcal{P}}(\nu)G_\nu^{(1)}(\sigma)+\frac{\Gamma\left(-i\lambda\right)}{2\pi^{\frac{d-1}{2}}\Gamma\left(-\frac{d-1}{2}-i\lambda\right)}G_{-\lambda+i\frac{d-1}{2}}^{(1)}(\sigma). \\
	\end{split}
\end{equation}
Both terms on the r.h.s. of \eqref{eq:deform1} are analytical functions of $\lambda$ within the domain:
\begin{equation}\label{eq:domain2}
	\begin{split}
				\Re(\lambda)\in(-a,a),\quad \Im(\lambda)\in\left(\frac{d-1}{2}-\varepsilon,\frac{d-1}{2}+\varepsilon\right).
	\end{split}
\end{equation}
Therefore, by the uniqueness of analytic continuation, we get:
\begin{equation}\label{eq:Gext1}
	\begin{split}
		G_\lambda^{(d)}(\sigma)=&\int_{\mathcal{C}_1}d\nu\ \varrho_\lambda^{\mathcal{P}}(\nu)G_\nu^{(1)}(\sigma)+\frac{\Gamma\left(-i\lambda\right)}{2\pi^{\frac{d-1}{2}}\Gamma\left(-\frac{d-1}{2}-i\lambda\right)}G_{-\lambda+i\frac{d-1}{2}}^{(1)}(\sigma),  \\
		\Re(\lambda)\in&(-a,a),\quad \Im(\lambda)\in\left(-\frac{d-1}{2}-\varepsilon,\frac{d-1}{2}+\varepsilon\right),\quad\sigma\in\mathbb{C}\backslash[1,+\infty). \\
	\end{split}
\end{equation}
Now, let us narrow our focus to the domain:
\begin{equation}\label{eq:domain3}
	\begin{split}
		\Re(\lambda)\in(-a,a),\quad \Im(\lambda)\in\left(-\frac{d-1}{2},\frac{d-1}{2}+\varepsilon\right).
	\end{split}
\end{equation}
In domain \eqref{eq:domain3}, we deform the integral contour from $\mathcal{C}_1$ back to $\mathcal{C}_0$. The difference between the integrals along these two contours is determined by another pole at $\nu=\lambda-i\frac{d-1}{2}$:
\begin{equation}\label{eq:Gext2}
	\begin{split}
    \int_{\mathcal{C}_1=\mathbb{R}}d\nu\ \varrho_\lambda^{\mathcal{P}}(\nu)G_\nu^{(1)}(\sigma)=&\int_{\mathcal{C}_0}d\nu\ \varrho_\lambda^{\mathcal{P}}(\nu)G_\nu^{(1)}(\sigma)-2\pi i\underset{\nu=\lambda-i\frac{d-1}{2}}{\text{Res}}\left[\varrho^{\mathcal{P}}_\lambda(\nu)G_{\nu}^{(1)}(\sigma)\right] \\
    =&\int_{\mathcal{C}_0}d\nu\ \varrho_\lambda^{\mathcal{P}}(\nu)G_\nu^{(1)}(\sigma)+\frac{\Gamma\left(-i\lambda\right)}{2\pi^{\frac{d-1}{2}}\Gamma\left(-\frac{d-1}{2}-i\lambda\right)}G_{\lambda-i\frac{d-1}{2}}^{(1)}(\sigma). \\
	\end{split}
\end{equation}
By \eqref{eq:Gext1} and \eqref{eq:Gext2}, we can express $G_\lambda^{(d)}(\sigma)$ as follows:
\begin{equation}\label{eq:Gext3}
	\begin{split}
		G_\lambda^{(d)}(\sigma)=&\int_{\mathbb{R}}d\nu\ \varrho_\lambda^{\mathcal{P}}(\nu)G_\nu^{(1)}(\sigma)+\frac{\Gamma\left(-i\lambda\right)}{\pi^{\frac{d-1}{2}}\Gamma\left(-\frac{d-1}{2}-i\lambda\right)}G_{-\lambda+i\frac{d-1}{2}}^{(1)}(\sigma),  \\
		\Re(\lambda)\in&(-a,a),\quad \Im(\lambda)\in\left(\frac{d-1}{2},\frac{d-1}{2}+\varepsilon\right),\quad\sigma\in\mathbb{C}\backslash[1,+\infty). \\
	\end{split}
\end{equation}
Here we have used the fact that $G^{(1)}_\lambda=G^{(1)}_{-\lambda}$.

The r.h.s. of \eqref{eq:Gext3} is precisely the form we desire. To extend the domain of $\lambda$ for which \eqref{eq:Gext3} is valid, we exploit the fact that when $\nu$ is real, the spectral density $\varrho_\lambda^{\mathcal{P}}(\nu)$ remains analytic in $\lambda$ as long as $\frac{d-1}{2}<\Im\left(\lambda\right)<\frac{d}{2}$. Additionally, the second term on the r.h.s. of \eqref{eq:Gext3} is also analytic in $\lambda$ within the same range. Thus, we conclude that
\begin{shaded}
	\begin{equation}\label{eq:Gextfinal1}
		\begin{split}
			G_\lambda^{(d)}(\sigma)=&\int_{\mathbb{R}}d\nu\ \varrho_\lambda^{\mathcal{P}}(\nu)G_\nu^{(1)}(\sigma)+\frac{\Gamma\left(-i\lambda\right)}{\pi^{\frac{d-1}{2}}\Gamma\left(-\frac{d-1}{2}-i\lambda\right)}G_{-\lambda+i\frac{d-1}{2}}^{(1)}(\sigma),  \\
			\Im(\lambda)\in&\left(\frac{d-1}{2},\frac{d}{2}\right),\quad\sigma\in\mathbb{C}\backslash[1,+\infty). \\
		\end{split}
	\end{equation}
\end{shaded}
A similar result can be obtained for the other domain of $\lambda$, using either a similar argument or the symmetry $G^{(d)}_\lambda=G^{(d)}_{-\lambda}$. The expression is as follows:
\begin{shaded}
	\begin{equation}\label{eq:Gextfinal2}
		\begin{split}
			G_\lambda^{(d)}(\sigma)=&\int_{\mathbb{R}}d\nu\ \varrho_\lambda^{\mathcal{P}}(\nu)G_\nu^{(1)}(\sigma)+\frac{\Gamma\left(i\lambda\right)}{\pi^{\frac{d-1}{2}}\Gamma\left(-\frac{d-1}{2}+i\lambda\right)}G_{\lambda+i\frac{d-1}{2}}^{(1)}(\sigma),  \\
			\Im(\lambda)\in&\left(-\frac{d}{2},-\frac{d-1}{2}\right),\quad\sigma\in\mathbb{C}\backslash[1,+\infty). \\
		\end{split}
	\end{equation}
\end{shaded}

\subsection{Flat space limit}
\label{subsec:flatspacelimit}
In flat space, the spectral density corresponding to the dimensional reduction from $\mathbb{R}^{d,1}$ to $\mathbb{R}^{1,1}$ follows trivially from the momentum space representation. Denote the Green function of a free scalar with mass $M$ in $\mathbb{R}^{d,1}$ by $G_M^{(d)}(x)$. It can be expressed as the following Fourier transformation
\begin{align}
	G_M^{(d)}(|x|)=\int\frac{d^{d+1} p}{(2\pi)^{d+1}}\frac{1}{p^2+M^2}e^{ip\cdot x}~.
\end{align}
Consider the special case with $x^\mu=(\hat x,0,\cdots,0)$, where $\hat x$ is a vector in $\mathbb{R}^{1,1}$. Then it is natural to take $p^\mu=(\hat p, \vec k)$ with $\vec k\in \mathbb R^{d-1}$, and hence $ G_M^{(d)}(x)$ can be rewritten as 
\begin{align}
	G_M^{(d)}(|x|)=\int\frac{d^{d-1} \vec k}{(2\pi)^{d-1}}\int\frac{d^{2} \hat p}{(2\pi)^{2}}\frac{1}{\hat p^2+M^2+\vec k^2}e^{i\hat p\cdot \hat x}~.
\end{align}
Defining $m = \sqrt{M^2+\vec k^2}$ and treating it as a mass, we can identify the $\hat p$ integral as the Green function $G^{(1)}_m$ in $\mathbb R^{1,1}$. The remaining integral over $\vec k$ then becomes an integral over $m$ multiplied by volume of $S^{d-2}$.
Altogether, we have 
\begin{equation}
	G^{(d)}_{M}(|x|)= \int_{0}^\infty dm^2 \varrho^{\mathbb{M}}_M(m^2) G^{(1)}_m(|x|), \quad \varrho^{\mathbb{M}}_M(m^2) = \Theta(m^2-M^2)\frac{(m^2-M^2)^{\frac{d-3}{2}}}{(4\pi)^{\frac{d-1}{2}}\Gamma(\frac{d-1}{2})}~,
	\label{eq:flatrho}
\end{equation}
where $\Theta$ denotes the step function.

Next, we are going to show that eq.\,(\ref{eq:flatrho}) can be recovered by taking the flat space limit of  (\ref{eq:rhonew}). We follow the procedure described in \cite{loparco2023kallenlehmann,Meineri:2023mps} by restoring the factors of the de Sitter radius $R$ and taking $\lambda\sim RM$ and $\nu\sim Rm$
\begin{equation}
	\varrho^{\mathbb{M}}_{M}(m^2)=\lim_{R\to\infty}\frac{R}{m}\varrho^{\mathcal{P}}_{RM}(Rm)
\end{equation}
where to restore the correct factors of the radius we need to take
\begin{equation}
	\varrho^{\mathcal{P}}_\lambda(\nu)\to R^{-(d-1)}\varrho^{\mathcal{P}}_{\lambda}(\nu)
\end{equation}
We thus have
\begin{equation}
	\varrho^{\mathbb{M}}_{M}(m^2)=\lim_{R\to\infty}\frac{R^{3-d}\sinh(\pi mR)}{8\pi^{\frac{d+3}{2}}\Gamma(\frac{d-1}{2})}\prod_{\pm,\pm}\Gamma\left(\frac{d-1}{4}\pm i\frac{m}{2}R\pm i\frac{M}{2} R\right)\,.
\end{equation}
To evaluate this limit, we use
\begin{equation}
	\Gamma(a+iR)\Gamma(a-iR)\underset{R\to\infty}{\sim}2\pi e^{-\pi R}R^{2a-1}\,, 
\end{equation}
and we obtain
\begin{equation}
\begin{aligned}
	\varrho^{\mathbb{M}}_{M}(m^2)&=\frac{(m^2-M^2)^{\frac{d-3}{2}}}{(4\pi)^{\frac{d-1}{2}}\Gamma(\frac{d-1}{2})}\lim_{R\to\infty}e^{\pi m R} e^{-\frac{\pi}{2}[(m+M)+|m-M|]R}\\
    &=\frac{(m^2-M^2)^{\frac{d-3}{2}}}{(4\pi)^{\frac{d-1}{2}}\Gamma(\frac{d-1}{2})}\Theta(m^2-M^2)\,,
\end{aligned}
\end{equation}
reproducing (\ref{eq:flatrho}).

\section{A group theoretical analysis of the dimensional reduction from $SO(d+1,1)$ to $SO(d,1)$}\label{Groupproof}

In section \ref{Proofsection}, by directly expanding a $(d+1)$ dimensional Green function into 2D Green functions, we find that discrete series does not contribute to this dimensional reduction, which is consistent with the conjecture made in appendix \ref{sec:noexcep}. In this appendix, we provide a purely group theoretical explanation of this fact. More precisely, we will prove the following proposition:
\begin{shaded}
\begin{proposition}
    Given a scalar principal or complementary series  representation $R$ of $SO(d+1,1)$, the only allowed UIRs of   $SO(d,1)$ in the restricted representation $\left.R\right|_{SO(d,1)}$ are scalar principal and complementary series.
\end{proposition}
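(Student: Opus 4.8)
The plan is to work in the compact realization of the scalar series $\CF_\Delta$ of $G:=\SO(d+1,1)$---that is, on a completion of the space of functions on the conformal sphere $S^d=G/P$ ($P$ a minimal parabolic)---and to reduce the branching problem to harmonic analysis on a rank-one symmetric space. Realizing $\SO(d,1)=:G'$ as the stabilizer in $G$ of a fixed spacelike unit vector $e_{d+1}\in\mathbb{R}^{d+1,1}$, I would first determine the $G'$-orbits on $S^d$. Decomposing a future null ray $[P]$ as $P=\vec{P}+P^{d+1}e_{d+1}$ with $\vec{P}\in e_{d+1}^{\perp}\cong\mathbb{R}^{d,1}$, the sign of $P^{d+1}$ and the vanishing of $P^{d+1}$ are the only $G'$-invariants, giving exactly three orbits: two open ones $O_{\pm}$ (distinguished by $\mathrm{sign}(P^{d+1})$), each isomorphic as a $G'$-space to $H^{d}=G'/K'$ with $K'=\SO(d)$, and one closed orbit $\cong S^{d-1}=G'/P'$ with $P'$ a minimal parabolic of $G'$. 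Correspondingly, the representation space of $\CF_\Delta$ splits, up to the measure-zero closed orbit, as $L^2(O_+)\oplus L^2(O_-)$.

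For $R=\CP_\Delta$ in the principal series ($\Delta=\tfrac d2+i\lambda$, $\lambda\in\mathbb{R}$), the next step is to pass from the gauge $P^0=1$ used to coordinatize $S^d$ to the gauge $P^{d+1}=\pm 1$ adapted to $O_{\pm}$. Tracking the resulting conformal factor $|P^{d+1}|^{\mp\Delta}$ against the principal-series pairing on $S^d$, one checks that it is precisely absorbed, so that $\CP_\Delta|_{G'}\cong L^2(G'/K')\oplus L^2(G'/K')$ with $G'$ acting by the \emph{untwisted} quasi-regular representation on each summand. I would then invoke the spherical Plancherel theorem for the rank-one Riemannian symmetric space $H^d$: $L^2(G'/K')$ is a direct integral, against $|c(\nu)|^{-2}\,d\nu$, of the unitary \emph{spherical} principal series of $G'$ only, with no discrete part. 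Since the spherical principal series of $\SO(d,1)$ are exactly its scalar principal series, this establishes the proposition for principal-series $R$ (and in fact shows that no complementary series of $G'$ appears there, in agreement with eq.\,\eqref{eq:dimred:compl}).

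For $R=\CC_\Delta$ in the complementary series ($\Delta\in(0,d)$) I would obtain the branching by analytically continuing the above decomposition in $\Delta$ off the line $\tfrac d2+i\mathbb{R}$. The continuous part stays a direct integral of scalar principal series of $G'$; as $\Delta$ moves into the complementary range, finitely many poles of the meromorphically continued Plancherel density (and of the coefficient multiplying the closed-orbit term, which now ceases to be negligible because the $\CC_\Delta$ pairing is a nonlocal intertwiner kernel on $S^d\times S^d$) cross the $\nu$-contour; each produces one discrete summand, to be identified as a scalar \emph{complementary} series of $G'$. This is the one-step analogue of the mechanism generating the $\Theta$-term in eq.\,\eqref{eq:dimred:compl}. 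That nothing else can occur is structural: every constituent that appears is \emph{spherical}, i.e.\ has a nonzero $K'=\SO(d)$-fixed vector---true of $L^2(G'/K')$ and of any scalar induced representation---whereas the discrete series of $\SO(d,1)$ (present when $d$ is even), the genuinely spinning principal and complementary series, and the exceptional series $\CV_{p,0}$ are all non-spherical. Hence only scalar principal and complementary series of $\SO(d,1)$ survive in $\CF_\Delta|_{\SO(d,1)}$.

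The hard part will be the complementary-series step: making the analytic continuation rigorous requires controlling precisely which poles cross the contour as $\Delta$ sweeps $(0,d)$ (and that their number stays finite), verifying that the associated residue subspaces are genuine unitary subrepresentations---i.e.\ that the inherited norm is positive---and checking that the crossed parameters land strictly inside the scalar complementary range rather than at an exceptional-series point. A secondary technical point is the careful treatment of the measure-zero closed orbit together with the nonlocal kernel defining the $\CC_\Delta$ inner product, so that the closed-orbit contribution can be isolated cleanly as a single scalar series of $G'$.
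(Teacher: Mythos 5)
Your principal-series step is essentially correct and takes a genuinely different route from the paper. The orbit decomposition of $S^d$ under $\SO(d,1)$ into two open orbits $\cong H^d$ plus a measure-zero $S^{d-1}$, the unitary absorption of the factor $|P^{d+1}|^{\mp\Delta}$ when $\Delta=\frac d2+i\lambda$, and the spherical Plancherel theorem for $H^d$ do give $\CP_\Delta|_{\SO(d,1)}\cong L^2(H^d)^{\oplus 2}$, a direct integral of scalar principal series only; this buys you the full branching law, which is more than the proposition asks. The paper never computes the branching: it first uses the $\SO(d+1)\downarrow \SO(d)$ branching rule to show only single-row $\SO(d)$-types occur, so only scalar UIRs of $\SO(d,1)$ can appear, and then excludes the exceptional series $\CV_{p,0}$ by an explicit computation: any spin-$p$ $\SO(d)$-covariant wavefunction in $\CF_\Delta$ annihilated by $L_{0i}\mathcal D_i$ is forced to vanish by a two-term recurrence on its $\SO(d+1)$-type coefficients. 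That argument is uniform in $\Delta$, so principal and complementary series are handled in one stroke.

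The genuine gap is your complementary-series step, which is half of the statement. For $\CC_\Delta$ the invariant inner product is the nonlocal kernel $|x_1-x_2|^{-2\bar\Delta}$, so the splitting $L^2(O_+)\oplus L^2(O_-)$ is not an orthogonal decomposition of the Hilbert space and the $L^2$-orbit framework breaks down; your replacement is an analytic continuation in $\Delta$ whose essential ingredients (what family of spectral measures or intertwiners is being continued, that only finitely many poles cross the contour, that the residue subspaces carry a positive inherited norm and are genuine subrepresentations, and that the crossed parameters land strictly inside the scalar complementary range rather than at an exceptional point) are exactly the items you defer — so the proposition is not yet proved for complementary $R$. Moreover, the structural shortcut you invoke, that \emph{every} constituent of a scalar induced representation is spherical, is false as stated: at special parameters scalar modules contain non-spherical invariant subspaces, and indeed $\CV_{p,0}$ of $\SO(d,1)$ is itself realized inside the scalar module $\CC_{d+p-1}$ of $\SO(d,1)$, as reviewed in the paper. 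Sphericity of all constituents of $\CC_\Delta|_{\SO(d,1)}$ is therefore precisely what must be proved (e.g., by showing the $\SO(d)$-fixed vectors are cyclic under $\SO(d,1)$, or by completing the residue analysis), not an input; the paper's recurrence argument sidesteps this entirely by directly excluding $\CV_{p,0}$ (and, for the relevant low-dimensional case, the discrete series) for all $\Delta$ at once.
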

\end{shaded}

\subsection{A quick review of $SO(d+1,1)$ (scalar) UIRs}
\label{subsec:reviewrep}
We choose $SO(d+1,1)$ generators to be  $L_{AB}=-L_{BA}, 0\leqslant A,B\leqslant d+1$ satisfying commutation relations 
\begin{align}\label{defiso}
[L_{AB}, L_{CD}]=\eta_{BC} L_{AD}-\eta_{AC} L_{BD}+\eta_{AD} L_{BC}-\eta_{BD} L_{AC},
\end{align} 
where $\eta_{AB}$ is given by eq.\,(\ref{eq:defsigma}). In a unitary representation, $L_{AB}$ are realized as anti-hermitian operators on some Hilbert space.
The isomorphism between $\so(d+1,1)$ and the $d$-dimensional Euclidean conformal algebra is realized as
\begin{align}\label{defconf}
L_{ij}=M_{ij}~, \,\,\,\,\, L_{0, d+1}=D~, \,\,\,\,\, L_{d+1, i}=\frac{1}{2}(P_i+K_i)~, \,\,\,\,\, L_{0, i}=\frac{1}{2}(P_i-K_i)~.
\end{align}
The commutation relations of the conformal algebra following from (\ref{defiso}) and (\ref{defconf}) are
\begin{align}\label{confalg}
&[D, P_i]=P_i~, \,\,\,\,\, [D, K_i]=-K_i~, \,\,\,\,\, [K_i, P_j]=2\delta_{ij}D-2M_{ij}~,\nonumber\\
&[M_{ij}, P_k]=\delta_{jk} P_i-\delta_{ik}P_j~, \,\,\,\,\,[M_{ij}, K_k]=\delta_{jk} K_i-\delta_{ik}K_j~,\nonumber\\
&[M_{ij}, M_{k\ell}]=\delta_{jk} M_{i\ell}-\delta_{ik} M_{j\ell}+\delta_{i\ell} M_{jk}-\delta_{j\ell} M_{ik}~.
\end{align}
The quadratic Casimir of $SO(d+1,1)$, which commutes with all $L_{AB}$, is chosen to be 
\begin{align}\label{generalcas}
C^{SO( d+1,1)}&=\frac{1}{2}L_{AB}L^{AB}=D(d-D)+P_i K_i+C^{SO(d)}~.
\end{align}
Here $C^{SO(d)}\equiv \frac{1}{2} M_{ij}M^{ij}$ is the quadratic Casimir of $SO(d)$ and it is negative-definite for a unitary representation since $M_{ij}$ are anti-hermitian. For example, for a spin-$s$ representation of $SO(d)$, it takes the value of $-s(s+d-2)$.

Next we describe the representation $\CF_\Delta$ in detail, which amounts to specifying the representation space, the action, and the inner product. First, as a vector space, $\CF_\Delta$ consists of smooth wavefunctions $\psi(x)$ on $\mathbb R^d$, that decay as $\mathcal O\left(|x|^{-2\Delta}\right)$ at $\infty$. Second, the action of $SO(d+1,1)$ generators on $\psi(x)$ is the same as in a conformal field theory
\begin{align}\label{soaction}
    &P_i\psi(x)=-\partial_i\psi(x), \quad K_i \psi(x)=\left(x^2\partial_i- x_i(x\cdot\partial_x+\Delta)\right)\psi(x),\nonumber\\
    & D\psi(x)= - (x\cdot\partial_x+\Delta)\psi(x), \quad M_{ij}\psi(x)=\left( x_i\partial_j-x_j \partial_i\right)\psi(x)~.
\end{align}
From eq.\,(\ref{soaction}), we can find that $C^{SO(d+1,1)}$ takes the value $\Delta(d-\Delta)$ in $\CF_\Delta$. 
The $SO(d+1,1)$ invariant  inner product on $\CF_\Delta$ is uniquely fixed (up to an overall normalization) by requiring this action to be anti-hermitian. In particular, when $\Delta\in \frac{d}{2}+i\mathbb R$, the inner product is nothing but the standard $L_2$ inner product on $\mathbb R^d$, and when $\Delta\in(0,d)$, the inner product becomes 
\begin{align}\label{Cinner}
    (\psi_1,\psi_2)_{\CC_\Delta}=\int\,d^dx_1\,d^dx_2\, \frac{\psi_1^*(x_1)\psi_2(x_2)}{|x_1-x_2|^{2\bar\Delta}},\quad \bar\Delta=d-\Delta~.
\end{align}

The restricted representation of $\CF_\Delta$ to the maximal compact subgroup $SO(d+1)$ of $SO(d+1,1)$ is given by $\bigoplus_{n\in\mathbb N}\mY_n$, where $\mY_n$ is the spin $n$ representation of $SO(d+1)$. \footnote{When $d=1$, $\mY_n$ should be understood as the direct sum of the spin $\pm n$ representations of $SO(2)$ for any $n\geqslant 1$.} The scalar UIRs of $\SO(d+1,1)$ are characterized by the property that their $SO(d+1)$ content only contains single-row Young tableau. Apart from principal and complementary series, there is another class of scalar UIRs, which is called type $\CV$ exceptional series in \cite{Sun:2021thf} and is denoted by $\CV_{p,0}$ with $p$ being a positive integer. Roughly speaking, $\CV_{p,0}$ can be realized as an $\infty$ dimensional invariant subspace of $\CC_{d+p-1}$. The inner product is (\ref{Cinner}) is positive definite when restricted to $\CV_{p,0}$ but not in the larger space $\CC_{d+p-1}$. The $SO(d+1)$ content of $\CV_{p,0}$ consists of $\mY_p, \mY_{p+1}, \mY_{p+2}, \cdots $. A more detailed and precise construction of the type $\CV$ exceptional series can be found in \cite{Dobrev:1977qv} (where it is denoted by $F_{0\nu}$ with $\nu$ being the same as $p$ here) and \cite{Sun:2021thf}. When $d=1$, $\CV_{p,0}$ becomes reducible, i.e. $\CV_{p,0}=\mathcal D^+_p\oplus \mathcal D^+_p$, where $\mathcal D^\pm_p$ are the highest/lowest-weight discrete series representations  of $SO(2,1)$.

\subsection{Details of the proof}

Consider a (scalar) principal or complementary series representation $\CF_\Delta$.
First, we know that the $SO(d+1)$ components of $\CF_\Delta$ are all $\mY_{n}$.
 Because of  the branching rule from $SO(d+1)$ to $SO(d)$, it is clear that the $SO(d)$ components of $\CF_\Delta$ are also the single-row Young tableaux. Therefore the restriction $\left.\CF_\Delta\right|_{SO(d,1)}$ cannot contain anything beyond the scalar UIRs of $SO(d,1)$. Our next step is to prove the absence of  the type $\CV$ exceptional series of $SO(d,1)$ in this restriction \footnote{When $d=1$, the argument below can be used to exclude discrete series.}.

To sketch the main idea of the proof, it would be convenient to switch to the ket notation. For a $\CV_{\ell,0}$ of $SO(d,1)$, there exists some nonvanishing  state    $|\psi\rangle_{i_1\cdots i_\ell}\in\CV_{\ell,0}$ that carries the spin $\ell$ representation of $SO(d)$. In other words, the indices $(i_1\cdots i_\ell)$ are symmetric and traceless. Acting $L_{0,i_1}$ on this state and summing over $i_1$ from $1$ to $d$, we obtain a state that has the spin $(\ell-1)$ symmetry. On the other hand, as reviewed above, $\CV_{\ell,0}$ of $SO(d,1)$ does not contain the spin $(\ell-1)$ representation of $SO(d)$. So this  state must vanish, i.e. $L_{0,i_1} |\psi\rangle_{i_1\cdots i_\ell}=0$. In the remaining part of this section, we will show that given any $|\psi\rangle_{i_1\cdots i_\ell}$ in $\CF_\Delta$ that carries the spin $\ell$ representation of $SO(d)$, imposing $L_{0,i_1} |\psi\rangle_{i_1\cdots i_\ell}=0$ leads to the vanishing of $|\psi\rangle_{i_1\cdots i_\ell}$ itself. This property contradicts the existence of any type $\CV$ exceptional series in $\CF_\Delta$.

Now let's switch back to the wavefunction picture. Then the analogue of $|\psi\rangle_{i_1\cdots i_\ell}$ should be a wavefunction $\psi(x)$ that transforms as a spin $\ell$ tensor under $SO(d)$. The spin $\ell$ condition can be easily imposed by introducing a null vector $z^i\in\mathbb C^d$:
\begin{align}
   \psi(x)= g(r) (x\cdot z)^\ell, \quad r=\sqrt{x^2}.
\end{align}
There is a basis of such wavefunctions, labelled by an integer $n\geqslant \ell$. The reason is that every $\mY_n$  of $\CF_\Delta$ contains exactly one copy of the spin $\ell$ representation of $SO(d)$ when $n\geqslant \ell$. Denote the basis by $ \psi_{n\ell}(x)= g_{n\ell}(r) (x\cdot z)^\ell$, and each $\psi_{n\ell}(x)$ should satisfy the Casimir equation
\begin{align}\label{Caseq}
  C^{SO(d+1)}  \psi_{n\ell}(x)=-n(n+d-1)\psi_{n\ell}(x)~.
\end{align}
By construction, the $SO(d+1)$ Casimir is 
\begin{align}
C^{SO(d+1)}= C^{SO(d)}+L_{i,d+1}^2=C^{SO(d)}+\frac{1}{4}\left(P_i+K_i\right)^2
\end{align}
where $C^{SO(d)}=-\ell (\ell+d-2)$ when acting on $\psi_{n,\ell}$,  and the explicit form of $P_i+K_i$ follows from eq.\,(\ref{soaction})
\begin{align}
P_i+K_i = (r^2-1)\partial_i-2 x_i (x\cdot\partial_x+\Delta)~.
\end{align}
By solving eq.\,(\ref{Caseq}) we obtain 
\begin{align}
g_{n,\ell}(r) = \frac{1}{(1+r^2)^{\Delta+n}} \,_2F_1\left(\ell-n, 1-n-\frac{d}{2}; \frac{d}{2}+\ell; -r^2 \right)~,
\end{align}
where the hypergeometric function is a monic polynomial of $-r^2$ of degree $n-\ell$.  In particular, $g_{\ell,\ell}(r) = \frac{1}{(1+r^2)^{\Delta+\ell}}$. Altogether, the most general $\psi(x)\in\CF_\Delta$ that furnishes the $\ell$ representation of $SO(d)$ should take the form
\begin{align}
\psi(x) = \sum_{n\geqslant \ell} c_n \, \psi_{n,\ell}(x)~.
\end{align}
In the wavefunction picture, the condition $L_{0i_1}|\psi\rangle_{i_1\cdots i_\ell}=0$ becomes 
\begin{align}
L_{0i} \mathcal D_i \,\psi(x)=0, \quad \, \mathcal D_i = \partial_{z^i}-\frac{1}{d+2(z\cdot \partial_z-1)}z_i\,\partial_z^2
\end{align}
where $\mathcal D_i$ is the interior derivative, used to strip off $z^i$ while respecting its nullness \cite{Dobrev:1977qv}, and the differential operator realization of  $L_{0i}$ can be derived from eq.\,(\ref{soaction})
\begin{align}\label{0i}
L_{0i} = \frac{1}{2}(P_i-K_i)= -\frac{1+r^2}{2}\partial_i+ x_i \left(x\cdot\partial_x+\Delta\right)~.
\end{align}
The most important step in our proof is computing $L_{0i}\mathcal D_i \psi_{n,\ell}(x)$. Before starting doing any real calculation, we recall  that acting with any $L_{0a}$ on a state in $\mY_n$ yields another state belonging to $\mY_{n-1}\oplus \mY_{n+1}$, which was shown in \cite{Sun:2021thf, penedones2023hilbert}. Using this fact, we can easily conclude that $L_{0i}\mathcal D_i \psi_{n,\ell}(x)$ is a linear combination of $\psi_{n-1,\ell-1}(x)$ and $\psi_{n+1,\ell-1}(x)$, i.e. 
\begin{align}\label{npm1}
L_{0i}\mathcal D_i \psi_{n,\ell}(x) = \alpha_{n,\ell}\psi_{n+1,\ell-1}(x)+ \beta_{n,\ell}\psi_{n-1,\ell-1}(x)
\end{align}
where $\alpha_{n,\ell}$ and $\beta_{n,\ell}$ are constants to be determined. For the l.h.s, we first compute the action of $\mathcal D_i$
\begin{align}
\mathcal D_i \psi_{n,\ell}(x) = \ell \, g_{n,\ell}(r)\left[x_i(x\cdot z)^{\ell-1}-\frac{\ell-1}{d+2(\ell-2)} z_i x^2 (x\cdot z)^{\ell-2}\right]~,
\end{align}
and then plugging in (\ref{0i}) yields 
\begin{align}\label{Dintro}
L_{0i}\mathcal D_i \psi_{n,\ell}(x)
&=\frac{\ell(d+\ell-3)}{d+2(\ell-2)} \mathfrak{D}_{\ell}\,  g_{n,\ell}(y)\, (x\cdot z)^{\ell-1}
\end{align}
where we have made the substitution $y= - r^2$, and defined a first-order differential operator $ \mathfrak{D}_\ell$ in terms of $y$
\begin{align}
 \mathfrak{D}_\ell =-(1+y) y \partial_y-\left[(\Delta+\ell)y+\left(\frac{d}{2}+\ell-1\right)(1-y)\right]
\end{align}
Eq.\,(\ref{npm1}) implies that $\mathfrak{D}_{\ell}\,  g_{n,\ell}(y)$ is a linear combination of $g_{n\pm 1,\ell-1}$. We can easily fix the combination coefficients simply by studying the behavior of  $\mathfrak{D}_\ell $ near $y=0$ and $y=1$. The result is
\begin{align}\label{Dg}
 \mathfrak{D}_\ell  g_{n,\ell}= -\frac{\frac{d}{2}+\ell-1}{d+2n-1}\left((\Delta+n)g_{n+1,\ell-1}+(\bar\Delta+n-1)g_{n-1,\ell-1}\right)~,
\end{align}
where $\bar\Delta= d-\Delta$.
This identity can also be checked by using contiguous relations of the hypergeometric function. Altogether, by combining (\ref{Dintro}) and (\ref{Dg}), we get 
\begin{align}
&\alpha_{n,\ell} =- \frac{\ell(d+\ell-3)}{d+2(\ell-2)} \frac{\frac{d}{2}+\ell-1}{d+2n-1}(\Delta+n)~,\nonumber\\
&\beta_{n,\ell} =- \frac{\ell(d+\ell-3)}{d+2(\ell-2)} \frac{\frac{d}{2}+\ell-1}{d+2n-1}(\bar\Delta+n-1)~.
\end{align}
Therefore, the condition $L_{0i} \mathcal D_i \,\psi(x)=0$ yields a recurrence relation 
\begin{align}
c_n\alpha_{n,\ell}+c_{n+1}\beta_{n+1,\ell}=0, \quad n\ge\ell
\end{align}
together with the initial condition $c_\ell \beta_{\ell,\ell}=0$. Since the $\alpha$'s and $\beta$'s are nonvanishing, all  $c_{n}$ have to vanish identically, and hence $\psi(x)=0$.

\bibliography{dS2pt}
\bibliographystyle{utphys}
	
\end{document}